\documentclass[12pt]{article}
\usepackage{amsmath}
\usepackage{graphicx}
\usepackage{natbib}
\usepackage{url} 

\usepackage{xcolor}      
\usepackage{todonotes}   

\usepackage{comment}
\usepackage[utf8]{inputenc}
\usepackage{diagbox} 
\usepackage{float}   

\usepackage{makecell} 
\usepackage{booktabs}
\usepackage{algorithm}
\usepackage{algpseudocode}
\usepackage{amsmath} 
\usepackage{amsthm}   
\usepackage{mathrsfs}

\usepackage{amssymb} 
\usepackage{bm}      

\usepackage{xr}

\DeclareMathOperator*{\argmin}{arg\,min}
\newtheorem{theorem}{Theorem}[section]

\newcommand{\blind}{0}

\begin{document}

\def\spacingset#1{\renewcommand{\baselinestretch}%
{#1}\small\normalsize} \spacingset{1}


\if0\blind
{
  \title{\bf Regularized estimation for highly multivariate spatial Gaussian random fields}
  \author{Francisco Cuevas-Pacheco,
    francisco.cuevas@usm.cl\hspace{.2cm}\\
    Departamento de Matemática, Universidad Técnica Federico Santa María, \\ Valparaíso, Chile.\\
    \\
    Gabriel Riffo \\
    Instituts für Mathematik, Technische Universität Berlin, \\ Berlin, Germany \\
    and \\
    Xavier Emery \\
    Department of Mining Engineering, Universidad de Chile,\\
    Santiago, Chile, \\
    Advanced Mining Technology Center, Universidad de Chile, \\ 
    Santiago, Chile.}
    
  \maketitle
} \fi

\if \blind
{
  \bigskip
  \bigskip
  \bigskip
  \begin{center}
    {\LARGE\bf Title}
\end{center}
  \medskip
} \fi

\bigskip
\begin{abstract}
Estimating covariance parameters for multivariate spatial Gaussian random fields is 
computationally challenging, as the number of parameters grows rapidly with the number 
of variables, and likelihood evaluation requires operations of order $\mathcal{O}((np)^3)$. 
In many applications, however, not all cross-dependencies between variables are relevant, suggesting that sparse covariance structures may be both statistically advantageous and practically necessary. We propose a LASSO-penalized estimation framework that induces sparsity in the Cholesky factor of the multivariate Mat\'{e}rn correlation matrix, enabling automatic identification of uncorrelated variable pairs while preserving positive semidefiniteness. Estimation is carried out via a projected block coordinate descent algorithm that decomposes the optimization into tractable subproblems, with constraints enforced at each iteration through appropriate projections. Regularization parameter selection is discussed for both the likelihood and composite likelihood approaches. We conduct a simulation study demonstrating the ability of the method to recover sparse correlation structures and reduce estimation error relative to unpenalized approaches. We illustrate our procedure through an application to a geochemical dataset with $p = 36$ variables and $n = 3998$ spatial locations, showing the practical impact of the method and making spatial prediction feasible in a setting where standard approaches fail entirely.
\end{abstract}

\noindent%
{\it Keywords:}  Block coordinate descent, LASSO, Composite Likelihood, Spatial statistics, Multivariate Matern model. 

\spacingset{1.75} 

\section{Introduction}

The analysis of multivariate spatial data is increasingly common in the environmental sciences and natural resources engineering, among other fields. 
Modeling the spatial dependencies among 
coregionalized variables is essential for accurate spatial prediction through cokriging, but faces significant computational and statistical challenges as the number of variables grows.

The modeling of the spatial correlation structure generally considers a 
covariance function $C(\cdot)$ 
belonging to a parametric family of matrix-valued covariances $\{C(\cdot;\theta), \theta \in \Theta \}$, where $\Theta$ is the parameter space, and the function $C(\cdot;\theta)$ is positive semidefinite. In the literature, several parametric models for the covariance function have been proposed, such as the separable model, the linear model of coregionalization \citep{wackernagel2003multivariate}, asymmetric models \citep{li2011approach}, and the multivariate Matérn model \citep{gneiting2010matern, apanasovich2012valid}. For a thorough review, the reader is referred to \citet{genton2015cross}, with their exhaustive list of references.

Despite these advances, a fundamental problem persists: for a multivariate random field, 
with $p$ observed characteristics on $n$ spatial locations, the number of covariance parameters grows 
very fast. Therefore, maximum likelihood estimation requires computing the determinant and 
inverse of $np \times np$ covariance matrices with computational complexity 
$\mathcal{O}((np)^3)$, making estimation infeasible for large $p$ or $n$. Composite 
likelihood approaches \citep{bevilacqua2012estimating, bevilacqua2016composite} reduce 
computational costs by using lower-dimensional likelihoods based on pairs of observations. 
Similarly, Vecchia approximations \citep{vecchia1988estimation, katzfuss2021general} achieve 
scalability by conditioning each observation on a small set of neighbors, but neither approach addresses the fundamental issue of parameter dimensionality.

In many applications, not all $\mathcal{O}(p^2)$ cross-dependencies are necessary. Certain variable pairs may exhibit negligible spatial cross-correlation, suggesting sparse correlation structures where some cross-covariances are exactly zero. In mining exploration, for example, concentrations of certain elements may be uncorrelated despite being measured at the same locations. Identifying such sparse structures offers statistical advantages as fewer parameters, reduced overfitting, and improved interpretability, together with computational benefits as lower memory requirements, and/or faster predictions.

Regularization methods such as the LASSO \citep{tibshirani1996regression} have proven effective for inducing sparsity in high-dimensional statistical models by adding an $L_1$ penalty to the objective function. However, applying LASSO to multivariate spatial covariance estimation presents unique challenges: covariance parameters must satisfy complex positive semidefiniteness constraints, and the optimization problem is highly nonlinear. 

In this work, we propose a LASSO-penalized estimation framework that induces sparsity in a block of the parameter vector. Estimation is performed through a projected block coordinate descent algorithm that exploits the natural grouping of parameters, decomposing the optimization into smaller tractable subproblems, each corresponding to a subset of parameters, while ensuring convergence under mild regularity conditions and all constraints are satisfied at each iteration. To select the regularization parameter, we employ AIC for likelihood-based estimation and CLIC for composite likelihood approaches \citep{varin2005note}. We apply this method to the Cholesky factor of the multivariate Matérn correlation matrix to automatically identify uncorrelated variable pairs while maintaining positive semidefiniteness. Although developed for the multivariate Matérn model, the framework can be readily adapted to other covariance families. Through extensive simulation studies and a real application to geochemical data with $p=36$ variables and $n=3998$ locations, we demonstrate that the method successfully identifies sparse correlation structures, reduces memory requirements from over $130$ GB to $1.31$ GB, and makes spatial prediction computationally feasible in settings where standard methods fail.

The remainder of this paper is organized as follows. Section 2 provides background on multivariate spatial random fields and the multivariate Matérn model. Section 3 presents the LASSO-penalized estimation framework and the block coordinate descent algorithm. Section 4 reports simulation results. Section 5 presents the geochemical application. Section 6 concludes with discussion and future directions.

\section{Background}

\subsection{Stationary and isotropic multivariate random fields}

Let $\bold{Z}=\{(Z_1(s),\dots,Z_p(s))^{\top}:s\in\mathbb{R}^{d}\}$ be a $p$-variate 
random field where $s$ denotes the spatial location, and $Z_i(s)$ represents 
the $i$-th component at location $s$. Furthermore, assume that $\bold{Z}$ is a Gaussian 
random field, that is, for any positive integer $n$ and any finite set of spatial locations 
$\{s_1,\dots,s_n\}\subset \mathbb{R}^d$, the vector $(\bold{Z}(s_1),\dots,\bold{Z}(s_n))$ 
follows a multivariate normal distribution.

The Gaussian assumption implies that the first- and second-order moments completely 
characterize the finite-dimensional distributions of the random field, i.e., the mean vector 
function $\mu(s) = [\mu_{1}(s), \ldots, \mu_{p}(s) ]$, with $\mu_{i}(s) = 
\mathbb{E}[Z_{i}(s)]$, and the covariance function ${\rm{Cov}}[Z_{i}(s), Z_{j}(s^{\prime})] 
= C_{ij}(s,s^{\prime})$, for all $s, s^{\prime} \in \mathbb{R}^{d}$ and $i, j \in \{1 , 
\ldots, p \}$. In order to propose parametric models for both the mean and covariance 
functions, we recall that the former is an otherwise unconstrained measurable function, while 
the latter must satisfy the positive semidefiniteness condition, that is, for any finite 
collection of locations $s_1, \ldots, s_n \in \mathbb{R}^d$ and any vector $a \in 
\mathbb{R}^{np}$, it must hold that
\begin{equation}
a^\top \Sigma a \geq 0, \label{eq:positivedefinite}
\end{equation}
where $\Sigma$ is the block covariance matrix with $(i,j)$-th block given by $C_{ij}(s_k, 
s_l)$ for $k,l = 1, \ldots, n$. The matrix-valued function $C=[C_{ij}]_{i,j=1}^p$ is said 
to be positive definite if inequality \eqref{eq:positivedefinite} is strict unless $a$ is a 
zero vector. Both conditions on the mean and covariance functions are necessary and sufficient 
for the multivariate normal distribution to be well-defined.

To reduce model complexity and facilitate estimation, we assume that the random field is 
stationary in a multivariate sense. Formally, a multivariate random field $\{\bold{Z}(s): 
s \in \mathbb{R}^{d}\}$ is said to be stationary if its mean function is constant, i.e., 
$\mu_i(s) = \mu_i$ for all $s \in \mathbb{R}^d$ and $i = 1, \ldots, p$, and its covariance 
function depends only on the spatial lag $s - s'$, i.e., $C_{ij}(s, s') = C_{ij}(s - s')$ 
for all $s, s' \in \mathbb{R}^d$, and $i, j = 1, \ldots , p$. The covariance is furthermore 
isotropic if it only depends on the norm of the spatial lag $h := \| s - s^{\prime} \|$.

\subsection{The multivariate Matérn covariance model}

A very well-known isotropic model is the multivariate Matérn covariance function 
\citep{gneiting2010matern}, which is given by:
\begin{equation}
\begin{split}
    \mathcal{C}_{ij}(h; \sigma_{ij}, \alpha_{ij}, \nu_{ij}) &= \sigma_{ij} 
    \mathcal{M}(h; \alpha_{ij}, \nu_{ij}) \\&:= \sigma_{ij}^2 
    \frac{2^{1-\nu_{ij}}}{\Gamma(\nu_{ij})}(\sqrt{2\nu_{ij}} \alpha_{ij} h)^{\nu_{ij}} 
    K_{\nu_{ij}}(\sqrt{2\nu_{ij}} \alpha_{ij} h), \label{eq:matern_model}
\end{split}
\end{equation}

\noindent where $K_{\nu}(\cdot)$ is the modified Bessel function of the second kind 
\citep{abramowitz1965handbook}, $\sigma_{ij} \in \mathbb{R}$ is a parameter that controls 
the covariance between the $i$-th and $j$-th random field components (with $\sigma_{ii} > 0$ 
for the marginal variances), $\alpha_{ij} > 0$ is a parameter that controls the practical 
range between the $i$-th and $j$-th random field components, and $\nu_{ij} > 0$ is the 
smoothness parameter, with $\nu_{ii}$ controlling the smoothness of the $i$-th random field 
component \citep{stein1999interpolation}.

To ensure that the multivariate model \eqref{eq:matern_model} is a valid covariance function, 
some conditions on the parametric space must be imposed. The following result from 
\citet{apanasovich2012valid} provides sufficient conditions to ensure the positive 
semidefiniteness of \eqref{eq:matern_model}.

\begin{theorem}
\label{Apanaso} The multivariate Matérn model \eqref{eq:matern_model} is a valid covariance 
function on $\mathbb{R}^{d} \times \mathbb{R}^{p}$ if there exist matrices $R_A$, $R_B$, 
and $\rho$ such that:

\begin{itemize} 
    \item $\nu_{ij} = \dfrac{\nu_{ii} + \nu_{jj}}{2} + \Delta_A(1 - R_{A,ij})$, where 
    $0 \leq R_{A,ij} \leq 1$, $R_A$ is a correlation matrix, and $\Delta_A \geq 0$. 
    \item $\alpha_{ij}^2 = \dfrac{\alpha_{ii}^2 + \alpha_{jj}^2}{2} + \Delta_B(1 - 
    R_{B,ij})$, where $0 \leq R_{B,ij} \leq 1$, $R_B$ is a correlation matrix, and 
    $\Delta_B \geq 0$. 
    \item $\sigma_{ij} = \sqrt{\sigma_{ii}^2 \sigma_{jj}^2} \rho_{ij} 
    \dfrac{\alpha_{ii}^{\nu_{ii}} \alpha_{jj}^{\nu_{jj}}}{\alpha_{ij}^{2\nu_{ij}}} 
    \dfrac{\Gamma(\nu_{ij})}{\sqrt{\Gamma(\nu_{ii}) \Gamma(\nu_{jj})}}$, where $\rho$ is 
    a correlation matrix. 
\end{itemize}

\end{theorem}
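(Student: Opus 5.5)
The plan is to prove validity through the spectral characterization of matrix-valued stationary covariances, that is, Cramér's theorem. By this theorem, $C$ is positive semidefinite on $\mathbb{R}^d$ if and only if its matrix of spectral densities $f(\omega)=[f_{ij}(\omega)]$ is Hermitian positive semidefinite for almost every frequency $\omega$.

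The Fourier transform of the Matérn kernel is explicit. Writing $a_{ij}^2 = 2\nu_{ij}\alpha_{ij}^2$, the spectral density of $\sigma_{ij}\mathcal{M}(h;\alpha_{ij},\nu_{ij})$ is proportional to
\begin{equation*}
\sigma_{ij}\,\frac{\Gamma(\nu_{ij}+d/2)}{\Gamma(\nu_{ij})}\,\frac{a_{ij}^{2\nu_{ij}}}{(a_{ij}^2+\|\omega\|^2)^{\nu_{ij}+d/2}},
\end{equation*}
with a constant that does not depend on $(i,j)$. I will take the reparametrization in the statement in the scale-absorbed form, with the factor $\alpha_{ij}$ placed on the $\sqrt{2\nu}$-scaled range, so that the $\alpha$-powers match. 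Substituting the stated form of $\sigma_{ij}$, the factors $a_{ij}^{2\nu_{ij}}$ and $\Gamma(\nu_{ij})$ cancel. Hence $f(\omega)=D(\omega)\,G(\omega)\,D(\omega)$, where $D$ is diagonal with entries depending only on $(\sigma_{ii},\alpha_{ii},\nu_{ii})$, and
\begin{equation*}
G_{ij}(\omega)=\rho_{ij}\,\Gamma(\nu_{ij}+d/2)\,(\alpha_{ij}^2+\|\omega\|^2)^{-\nu_{ij}-d/2}.
\end{equation*}
It therefore suffices to show that $G(\omega)$ is positive semidefinite for each fixed $\omega$.

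The key step is Schur's product theorem: a Hadamard product of positive semidefinite matrices is positive semidefinite. The matrix $\rho$ is a correlation matrix, so $G$ is positive semidefinite as soon as
\begin{equation*}
\left[\Gamma(\nu_{ij}+d/2)(\alpha_{ij}^2+t)^{-\nu_{ij}-d/2}\right]_{ij}, \qquad t=\|\omega\|^2,
\end{equation*}
is positive semidefinite. To get this, I will use the integral representation
\begin{equation*}
\Gamma(\nu+d/2)\,x^{-\nu-d/2}=\int_0^\infty u^{\nu+d/2-1}e^{-xu}\,du, \qquad x=\alpha_{ij}^2+t.
\end{equation*}
The integrand for entry $(i,j)$ then factors as
\begin{equation*}
u^{d/2-1}\;\underbrace{u^{(\nu_{ii}+\nu_{jj})/2}}_{\text{rank one}}\;u^{\Delta_A(1-R_{A,ij})}\;\underbrace{e^{-u(\alpha_{ii}^2+\alpha_{jj}^2)/2}}_{\text{rank one}}\;e^{-u\Delta_B(1-R_{B,ij})}\;e^{-ut}.
\end{equation*}
The two marked factors are rank-one positive semidefinite matrices of the form $v_iv_j$.

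The main obstacle is the two remaining factors. Consider $e^{-u\Delta_B(1-R_{B,ij})}=e^{-u\Delta_B}e^{u\Delta_B R_{B,ij}}$. The matrix $[e^{cR_{B,ij}}]$ with $c\ge0$ is positive semidefinite, because $\sum_k c^kR_B^{\circ k}/k!$ is a series of Schur powers of a positive semidefinite matrix with nonnegative coefficients. The same argument applies to $u^{\Delta_A(1-R_{A,ij})}=e^{\Delta_A(1-R_{A,ij})\log u}$ when $u\ge1$. For $u<1$, however, $\log u<0$ and the exponent has the wrong sign. I would handle this first by splitting the integral at $u=1$, but the $u<1$ piece is the genuinely delicate part. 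A cleaner approach uses the Gamma-mixture form of the $\nu$-dependence: write $(\alpha_{ij}^2+t)^{-\Delta_A(1-R_{A,ij})}$ as a Laplace-type transform in the exponent, and use that $0\le R_{A,ij}\le1$ keeps the exponents nonnegative. The entrywise map is then a conditionally negative definite kernel composed with a completely monotone function, which yields positive semidefiniteness by Schoenberg's theorem. This last combination is where I expect the real work, and it is where the hypothesis $R_{A,ij}\ge0$ enters.

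Once these pieces are in place, integrating positive semidefinite matrices against $du$ preserves positive semidefiniteness. So $G(\omega)\succeq0$, hence $f(\omega)\succeq0$, and Cramér's theorem concludes the proof.
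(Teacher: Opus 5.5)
\textbf{There is a genuine gap, and it cannot be closed as stated.} The paper gives no proof of this theorem. It quotes it from Apanasovich et al.\ (2012), so your argument has to stand on its own.

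\textbf{The sign is reversed at the step you single out.} Write $u^{\Delta_A(1-R_{A,ij})}=u^{\Delta_A}\exp\{(-\Delta_A\log u)\,R_{A,ij}\}$. The coefficient of $R_{A,ij}$ is nonnegative exactly when $u\le 1$. So the Schur-exponential argument covers $u\le 1$, and it is $u>1$ that causes trouble.
\begin{itemize}
\item For $u>1$ the factor matrix has unit diagonal and off-diagonal entries $u^{\Delta_A(1-R_{A,ij})}>1$ whenever $R_{A,ij}<1$, so it is not positive semidefinite.
\item The full integrand cannot be rescued pointwise either. For $p=2$ and $\rho_{12}=1$, its determinant is nonnegative iff $\Delta_B(1-R_{B,12})\,u\ge \Delta_A(1-R_{A,12})\log u$. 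This fails for every $u>1$ when $\Delta_B=0<\Delta_A$ and $R_{A,12}<1$.
\item Your Laplace/Schoenberg alternative breaks for the same reason. The matrix $x^{-\Delta_A(1-R_{A,ij})}=x^{-\Delta_A}\exp\{(\Delta_A\log x)R_{A,ij}\}$ is positive semidefinite only when $x\ge 1$. Once $\alpha_{ij}^2+\|\omega\|^2<1$, the exponent is not a nonnegative conditionally negative definite kernel, and the hypothesis $R_{A,ij}\ge 0$ does nothing to fix this.
\end{itemize}

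\textbf{The statement as transcribed is false, so no repair of this step is possible.} Take $p=2$, $\nu_{11}=\nu_{22}=1$, $\Delta_A=1$, $R_A=I_2$ (so $\nu_{12}=2$), $\Delta_B=0$, $\alpha_{11}=\alpha_{22}=\alpha$ (so $\alpha_{12}=\alpha$), and $\rho_{12}=1$.
\begin{itemize}
\item The third condition gives $\sigma_{12}=\sigma_{11}\sigma_{22}\,\alpha^{2}\alpha^{-4}\,\Gamma(2)/\Gamma(1)=\sigma_{11}\sigma_{22}\,\alpha^{-2}$.
\item Since $\mathcal{M}(0;\cdot,\cdot)=1$ in either range convention, your rescaling of the range is irrelevant here.
\item The covariance matrix of $\mathbf{Z}(s)$ at a single site then has determinant $\sigma_{11}^2\sigma_{22}^2(1-\alpha^{-4})<0$ for $\alpha<1$.
\item Equivalently, your $G(0)$ has a negative $2\times 2$ minor whenever $\alpha^2<1+d/2$.
\end{itemize}

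\textbf{What your argument does prove.} Once the sign is corrected, it handles the case $\Delta_A=0$, i.e.\ $\nu_{ij}=(\nu_{ii}+\nu_{jj})/2$.
\begin{itemize}
\item Then $u^{\nu_{ij}}$ is rank one.
\item The $R_B$ factor is handled by your Schur-exponential step, which uses only $R_B\succeq 0$.
\item Schur's theorem with $\rho$, integration in $u$, and Cram\'er's theorem then finish the proof.
\end{itemize}
This covers the common-$\nu$ model the paper actually fits. For $\Delta_A>0$ you need either an extra restriction coupling $\Delta_A$ to the range parameters, or a different normalization of $\sigma_{ij}$. One such restriction is $R_A=R_B$ together with $e\,\Delta_B\ge\Delta_A$. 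Then $\exp\{-(1-R_{ij})(\Delta_B u-\Delta_A\log u)\}$ is a valid Schur exponential for all $u>0$.
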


\noindent In particular, the matrices $R_A$ and $R_B$ can be interpreted as a set of 
parameters adding a deviation of $\nu_{ij}$ and $\alpha_{ij}^2$ with respect to 
$(\nu_{ii} + \nu_{jj})/2$ and $(\alpha_{ii}^2 + \alpha_{jj}^2)/2$, respectively. The 
parameters $\Delta_A$ and $\Delta_B$ represent the maximum deviation of $\nu_{ij}$ from 
$(\nu_{ii} + \nu_{jj})/2$ and of $\alpha_{ij}^2$ from $(\alpha_{ii}^2 + \alpha_{jj}^2)/2$.

Note that when $\Delta_B = 0$, the matrix $R_B$ becomes unidentifiable. Moreover, if $R_B$ 
is an equicorrelation matrix, the model also becomes unidentifiable. For a small number $p$ 
of random field components, \citet{apanasovich2012valid} recommend modeling $R_B$ as an 
equicorrelation matrix, defining $\Delta_B' := \Delta_B(1 - R_{B,ij})$, and then setting:
\begin{equation*}
\alpha_{ij}^2 = \dfrac{\alpha_{ii}^2 + \alpha_{jj}^2}{2} + \Delta_B'.    
\end{equation*}

\noindent Based on Theorem 1, \citet{apanasovich2012valid} propose to plug-in the 
restrictions into Equation \eqref{eq:matern_model}, obtaining the following 
parameterization:
\begin{equation*}
C_{ij}(h; \sigma, \rho_{ij}, \alpha_{ij}, \nu_{ij}) := \sqrt{\sigma_i^2 \sigma_j^2} 
\rho_{ij} \dfrac{\alpha_{ii}^{\nu_{ii}} \alpha_{jj}^{\nu_{jj}}}{\alpha_{ij}^{2\nu_{ij}}} 
\dfrac{\Gamma(\nu_{ij})}{\sqrt{\Gamma(\nu_{ii}) \Gamma(\nu_{jj})}} 
\mathcal{M}(h; \alpha_{ij}, \nu_{ij}). \label{ApanaMatern}
\end{equation*}

Notice that the number of parameters is $3(p^2-p)/2$. This makes estimation procedures 
such as maximum likelihood unfeasible when $p$ is large. In particular, the works of 
\citet{gneiting2010matern}, \citet{apanasovich2012valid} and \citet{emery2022new} use up 
to $p = 5$. In what follows, we tackle this problem by exploring penalized estimation.

\subsection{Estimation procedures}

Let $z_{i}(s_k)$ be an observation of the $i$-th variable of a multivariate Gaussian 
random field at location $s_{k} \in \mathbb{R}^d$, and let $Z_n = (z_1(s_1), \dots, 
z_p(s_1), \dots, z_1(s_n), \dots, z_p(s_n))$ be the stack of the observations of each 
one of the random field components. Assume that the covariance function of the random field, 
$C(\cdot; \theta)$, belongs to a parametric family of covariance functions, where $\theta$ 
denotes the parameter vector which belongs to the parameter space $\boldsymbol{\Theta}$. 
To estimate the parameter vector $\theta$, we can proceed by solving the following problem:

\begin{equation}
    \min_{\theta \in \boldsymbol{\Theta}} f(\theta; Z_n), \label{estimation}
\end{equation}
where $f$ is an objective function. Under this framework, we obtain the maximum likelihood 
estimator by choosing $f$ as the negative log-Gaussian multivariate density function, i.e.:
\begin{equation*}
    \ell(\theta; Z_n) = -\frac{np}{2}\log(2\pi) - \frac{1}{2} \log(|\Sigma_n (\theta)|) 
    - \frac{1}{2} (Z_n - \mu)^{\top} \Sigma_n(\theta)^{-1} (Z_n - \mu),
\end{equation*}
where $\mu = (\mu_1, \dots, \mu_p, \dots, \mu_1, \dots, \mu_p)^\top$ is the mean vector 
and $\Sigma_n(\theta)$ is the covariance matrix defined in \eqref{eq:positivedefinite} 
with entries given by the multivariate Matérn covariance function with parameter vector 
$\theta$. The estimator obtained by maximizing this function (or minimizing $-2\ell(\theta)$) 
is called the maximum likelihood estimator. From the statistical point of view, the maximum 
likelihood estimator is consistent and asymptotically normal under certain conditions 
specified in \citet{gaetan2010spatial}. However, for large datasets the computation of 
$\Sigma_n(\theta)^{-1}$ requires $\mathcal{O}((np)^3)$ operations and $\mathcal{O}((np)^2)$ 
memory, making maximum likelihood estimation computationally expensive or infeasible. 
Therefore, alternatives that reduce computational cost while preserving statistical 
properties are appealing.

Among the different alternatives to reduce computational complexity in terms of both runtime 
and memory is the composite likelihood approach \citep{bevilacqua2016composite}. In this 
regard, we consider $Z_{n} := [Z_{n}^{1}, \ldots, Z_{n}^{K}]$ a partition of the random 
vector $Z_{n}$, and we compute the composite likelihood function given by

\begin{equation*}
    CL(\theta) = \sum_{k=1}^{K-1} \sum_{l=k+1}^{K} {cl}_{kl}(\theta) \, 
    w_{kl},\label{plp likelihood}
\end{equation*}

\noindent where ${cl}_{kl}(\theta)$ denotes the joint density function of the random vector 
$(Z_{n}^{k}, Z_{n}^{l})$. In this work, we propose an adapted version of the composite 
likelihood for multivariate Gaussian random fields. Let $k,l\in \{1,\dots,n\}$, and define 
the matrix $\Sigma_{kl}(\theta) = \text{Cov}(\mathbf{Z}(s_k), \mathbf{Z}(s_l))$, and the 
function $cl_{kl}(\cdot)$ such that:

\begin{equation*}
    {cl}_{kl}(\theta) = -\frac{1}{2} \left(\log |\Sigma_{kl}(\theta)| + 
    (\mathbf{Z}(s_k), \mathbf{Z}(s_l))^{\top} \Sigma_{kl}(\theta)^{-1} 
    (\mathbf{Z}(s_k), \mathbf{Z}(s_l)) \right).
\end{equation*}

This approach operates on $2p \times 2p$ matrices, whose determinant and inverse 
computations each have a cost of $\mathcal{O}((2p)^3)$. With the nearest-neighbor weights 
defined below, summing over all contributing pairs reduces the total computational complexity 
to $\mathcal{O}((2p)^3 nv)$ operations and $\mathcal{O}((2p)^2)$ memory, where $v$ is the 
number of nearest neighbors. This yields substantial computational savings when $n$ is large 
relative to $p$.

The weights $w_{kl}$ represent a balance between computation time and estimation efficiency. 
Here we use the proposal from \citet{caamano2024nearest}:

\begin{equation*}
    w_{kl} := w(s_{k},s_{l}) =  
    \begin{cases}
        1, & s_{l} \in N_{k}(v); \\
        0, & s_{l} \notin N_{k}(v),
    \end{cases}
\end{equation*}
where $N_{k}(v)$ denotes the set of the $v$ nearest neighbors to $s_k$ in the sample. For 
other choices of weights $w_{kl}$, see \citet{joe2009weighting, bevilacqua2012estimating, 
davis2011comments}. The study of optimal weight selection is beyond the scope of this paper.

Composite likelihood approaches are well suited to large datasets, but as with full 
likelihood, most covariance families for large $p$ require a large number of parameters 
subject to complex constraints. This can lead to estimates that violate the parameter space, 
breaking positive semi-definiteness of the estimated covariance function and producing invalid 
models. To address both the parameter dimensionality and the constraint enforcement problems 
simultaneously, we consider a penalized version of the objective function 
\eqref{estimation}. Specifically, letting $\theta_L \subseteq \theta$ denote the subset 
of parameters for which sparsity is desired, we propose:
\begin{equation}
    \min_{\theta \in \boldsymbol{\Theta}} f(\theta; Z_n) + \lambda \| \theta_L \|_{1}, 
    \label{estimation_penalty}
\end{equation}

\noindent where $\lambda \geq 0$ is the regularization parameter controlling the degree of 
sparsity. In the next sections, we describe the choice of $\theta_L$ and the algorithm 
used to solve \eqref{estimation_penalty}.

\section{Main results}

\subsection{Projected block coordinate descent}\label{propuesta}

In what follows, we solve Equation \eqref{estimation} by considering a block coordinate 
gradient descent algorithm with proximal and projected steps 
\citep{tseng2001convergence, wright2015coordinate}. This algorithm takes advantage of 
the natural grouping of parameters into subsets that can be updated efficiently and, in 
some cases, independently. Specifically, we partition the parameter vector into $\kappa$ 
blocks, i.e., $\theta = (\theta^{(1)}, \ldots, \theta^{(\kappa)})$, where each block 
$\theta^{(b)}$ is an element of the projected parameter space $\Theta^{(b)}$, i.e. 
$\Theta = \Theta^{(1)} \times \cdots \times \Theta^{(\kappa)}$. This block structure 
allows us to exploit separability on the objective function and impose constraints on 
the parameter vector, which can improve convergence speed and scalability to 
high-dimensional problems \citep{wright2015coordinate, tseng2001convergence}. Now, let 
$\theta^{-(b)}$ be the vector $\theta$ without the $b$-th block. Then, the block 
gradient update is computed as:

\begin{align*}
    \theta^{(1)}_{k} &= \mathcal{P}^{1}\left( \theta^{(1)}_{k-1} - \tau_{k} 
    \nabla f\left(\theta^{(1)}_{k-1} \mid \theta^{-(1)}_{k}\right) \right) \\
    \vdots &= \vdots \\ 
    \theta^{(\kappa)}_{k} &= \mathcal{P}^{\kappa}\left( \theta^{(\kappa)}_{k-1} - 
    \tau_{k} \nabla f\left(\theta^{(\kappa)}_{k-1} \mid \theta^{-(\kappa)}_{k}\right) 
    \right), 
\end{align*}
\noindent where $f(\theta^{(b)} \mid \theta^{-(b)})$ is the objective function evaluated 
on the block $\theta^{(b)}$ while fixing the remaining blocks $\theta^{-(b)}$, $\tau_k > 0$ 
is the step size at iteration $k$, and $\mathcal{P}^{b}(x)$ is the orthogonal projection 
of the vector $x$ onto the parameter space $\Theta^{(b)}$, i.e.,
$$\mathcal{P}^{b}(x) = \argmin_{y \in \Theta^{(b)}} \| x - y \|^{2},$$

\noindent provided that $\Theta^{(b)}$ is a convex set. The specific form of the gradient 
and the projection operators for the multivariate Matérn model are detailed in Appendix A.

\subsection{Multivariate Matérn case}\label{matern_case}

Here we consider a parsimonious version of the multivariate Matérn covariance function 
by fixing the smoothness parameter $\nu_{ij} = \nu$ for all $i,j = 1, \ldots, p$. This 
choice reduces the complexity of the model and avoids the joint estimation of $\nu$, 
which is challenging and unstable \citep{zhang2004inconsistent}. Fixing $\nu$ also allows 
the inference to focus on the scale and range parameters, mitigating identifiability issues 
between $\nu$ and the range parameter documented in the literature 
\citep[see for example][]{stein1999interpolation, kaufman2013role}.

First, we parameterize $\sigma_{i} := \exp(s_{i})$ and $\alpha_{i} := \exp(a_{i})$. Then, 
we split the parameter vector as $\theta = (\boldsymbol{\sigma}^{2}, \boldsymbol{\alpha}, 
\Delta_{B}, \mathbf{L}, \mathbf{R}_{B})$, where the vectors $\boldsymbol{\sigma}^{2} = (\sigma^{2}_{11}, 
\ldots, \sigma^{2}_{pp})$ and $\boldsymbol{\alpha} = (\alpha_{11}, \ldots, \alpha_{pp})$ 
are the parameters of the marginal random fields. This selection of blocks makes the 
parametric space of the Matérn covariance function a product of convex spaces. 
Specifically,


\begin{equation*}
\Theta = \mathbb{R}^{p} \times \mathbb{R}^{p} \times \mathbb{R}^{+}_{0} \times \text{Tri}(p) 
\times \text{CDN}(p),
\end{equation*} where \begin{equation*}
\text{Tri}(p) = \{ \mathbf{L} \in \mathbb{R}^{p \times p} : \mathbf{L} \text{ is lower 
triangular with } L_{ii} > 0 \},
\end{equation*}
\begin{equation*}
\text{CDN}(p) = \{ \mathbf{C} \in \mathbb{R}^{p \times p} : \mathbf{C} \text{ is conditionally 
negative semi-definite and } \mathrm{diag}(\mathbf{C}) = \mathbf{1} \}.
\end{equation*}

\noindent The projections needed for the block coordinate descent method exist and are detailed in Appendix A. In particular, the vectors $\boldsymbol{\sigma}^{2}$ and $\boldsymbol{\alpha}$ are estimated under independence in Step 1 of Algorithm \ref{alg:gen}, decoupling their estimation from the cross-covariance parameters and reducing the problem to a set of $p$ independent univariate optimizations.

We adapt the block coordinate descent method to solve the penalized problem 
\eqref{estimation_penalty}. This modification, known as the penalized block coordinate 
descent, updates each block by solving the corresponding penalized subproblem, i.e., by 
minimizing the objective function restricted to the block including the penalty term. We 
focus the penalization on the Cholesky matrix $\mathbf{L}$, as its off-diagonal entries encode the 
correlation structure between random fields; setting $\mathbf{L}_{ij} = 0$ for $i \neq j$ 
corresponds to zero cross-correlation between the $i$-th and $j$-th fields.

Before presenting the algorithm, we recall that the soft-thresholding operator 
$S_{\lambda}: \mathbb{R} \to \mathbb{R}$ is defined componentwise as:
\begin{equation*}
S_{\lambda}(x) = \text{sign}(x) \max(|x| - \lambda, 0),
\end{equation*}
which sets components of $x$ with absolute value smaller than $\lambda$ to zero, inducing 
sparsity in the updated block. Specifically, we propose Algorithm \ref{alg:gen} to solve this problem. Therefore, for a given $\lambda$ we obtain an estimate via
\begin{equation*}
    \hat{\theta}_{\lambda} = \argmin_{\theta \in \Theta} f_{\lambda}(\theta).
\end{equation*}

The selection of the hyperparameter $\lambda$ affects the performance of the estimation 
and so, some proposal must be given.

\begin{algorithm}
    \caption{Projected proximal gradient descent algorithm for the multivariate Matérn 
    covariance function}\label{alg:gen}
    \begin{algorithmic}[1]
        \State Estimate the marginal parameters $\theta_i = (\sigma_{ii}^2, \alpha_{ii})$ 
        under independence. That is, let $(Z_n)_{i} = (Z_i(s_1), \dots, Z_i(s_n))^{\top}$ 
        be the vector of observations for the $i$-th random field. Then define:
        \begin{equation*}
            \hat{\theta}_i = \argmin_{\theta} f((Z_n)_{i}; \theta_i).
        \end{equation*}
        \State Let $\theta = (\mathbf{L}, \Delta_B, \mathbf{R}_B)$ and define the initial parameters 
        $\theta^{(0)}$.
        \For{$m$ in $1:M$}
        \State Update $\mathbf{L}^{(m)}$ such that
        \begin{equation*}
            \mathbf{L}^{(m)} = S_{\lambda}\left(\mathbf{L}^{(m-1)} - \tau_m \dfrac{\partial f}{\partial 
            \mathbf{L}}\bigg|_{\theta^{(m-1)}} \right)
        \end{equation*}
        subject to $(\mathbf{LL}^{\top})_{ii} = \sigma_{ii}^2$. See Appendix A for the enforcement 
        of this constraint.
        \State Update $\Delta_B^{(m)}$ such that
        \begin{equation*}
            \Delta_B^{(m)} = \mathcal{P}^{+}\left(\Delta_B^{(m-1)} - \tau_m 
            \dfrac{\partial f}{\partial \Delta_B}\bigg|_{(\mathbf{L}^{(m)}, \Delta_B^{(m-1)}, 
            \mathbf{R_B}^{(m-1)})}\right)
        \end{equation*}
        subject to $\Delta_B \geq 0$, where $\mathcal{P}^{+}$ denotes projection onto 
        $\mathbb{R}^{+}_{0}$.
        \State Update $\mathbf{R}_B^{(m)}$ such that
        \begin{equation*}
            \mathbf{R}_B^{(m)} = \mathcal{P}^{\text{CDN}}\left(\mathbf{R}_B^{(m-1)} - \tau_m \dfrac{\partial f}
            {\partial \mathbf{R}_B}\bigg|_{(\mathbf{L}^{(m)}, \Delta_B^{(m)}, \mathbf{R}_B^{(m-1)})}\right)
        \end{equation*}
        subject to $0 \leq \mathbf{R}_B \leq 1$ and $\mathbf{R}_B \in \text{CDN}(p)$, where $\mathcal{P}^{\text{CDN}}$ 
        denotes projection onto $\text{CDN}(p)$.
        \EndFor
    \end{algorithmic}
\end{algorithm}

\subsection{Hyperparameter selection via information criteria}\label{lam_optimo}

Different options to choose the optimal hyperparameter can be found in the literature 
(see for example \citet{hastie2009elements, efron2004least, meinshausen2010stability}). 
Since we are dealing with a large number of spatial locations and covariates, we look for 
a strategy that can reuse the obtained computations. Therefore, we follow 
\citet{tibshirani1996regression} and propose a grid search starting from the most sparse 
case, computing the solution for different values of $\lambda$ using warm starts 
\citep{friedman2010regularization}. In this regard, we notice that the most sparse 
solution is obtained when $\lambda = \lambda_{\max}$, where

\begin{equation*}
    \lambda_{\max} = \max \left\{ \left| \frac{\partial f}{\partial l_{i,j}}
    (\theta^{(0)}) \right| : i, j = 1, \dots, p \; \text{and} \; i \neq j \right\}.
\end{equation*}

\noindent For both the composite and the maximum likelihood, $\lambda_{\max}$ is computed directly from the gradient of the objective function at the initial parameter estimate $\theta^{(0)}$, as given by the expression above. We also follow \citet{friedman2010regularization} and set $\lambda_{\min} = 10^{-8}\lambda_{\max}$.

We then generate an equally spaced sequence of $\lambda$ values on the log scale between 
$\lambda_{\min}$ and $\lambda_{\max}$. Based on empirical experience, this sequence should 
have at least $(p^2 - p)$ elements, i.e., at least as many grid points as off-diagonal 
elements in the matrix $\mathbf{L}$ to be penalized, though this is a
 rule of thumb rather than a theoretical requirement.

To select an optimal value of $\lambda$ we consider tailored information criteria 
depending on whether we use the likelihood function or the composite likelihood. In the 
case of the likelihood function, we use the Akaike Information Criterion, given by:

\begin{equation*}
    \text{AIC}(\theta) = -2\ell(\theta) + 4 \left|\{(i,j): i,j=1,\dots,p \;\; 
    \text{and} \;\; (\Psi)_{ij} \neq 0\}\right|,
\end{equation*}
where the factor of $4$ rather than the standard $2$ accounts for the symmetry 
of $\Psi$: since $\Psi_{ij} = \Psi_{ji}$, each free off-diagonal parameter 
$L_{ij}$ ($i > j$) contributes two nonzero entries to the set, so the cardinality 
above is twice the number of free parameters being penalized. For the composite likelihood function $CL(\theta)$, we implement the composite likelihood 
information criterion \citep[CLIC,][]{varin2005note, bevilacqua2016composite}, given by:
\begin{equation}
    \text{CLIC}(\theta) = -pl_{p}(\theta) - tr\left(J_{p}(\theta) 
    H_{p}^{-1}(\theta)\right) \label{clic},
\end{equation}
where $J_p(\theta)$ and $H_p(\theta)$ are matrices defined in Appendix 
\ref{Eq:H and J definition}.

The validity of \eqref{clic} requires that the composite likelihood estimator be 
consistent and asymptotically normal, as established in \citet{varin2005note}. In the 
appendix, we verify that these conditions hold for the estimator proposed in Section 
\ref{propuesta}. Calculating the CLIC requires consistent estimation of the matrices 
$H_p(\theta)$ and $J_p(\theta)$. This can be achieved using plug-in estimates 
$H_p(\hat{\theta}_{\lambda})$ and $J_p(\hat{\theta}_{\lambda})$. However, computing 
this last expression has a computational complexity of $\mathcal{O}((np)^4)$, making it 
infeasible for large datasets. To estimate $J_p(\hat{\theta}_{\lambda})$, we extend the 
subsampling method described in \citet{heagerty2000window} to the multivariate case. 
Under the assumption that $W^{-1} J_p(\hat{\theta}_{\lambda})$ converges to a matrix 
$J^*_p$ as $n \to \infty$, where $W$ is the sum of the weights involved in the 
estimation, we use the subsampling method to obtain an estimate $\hat{J}^*_{p}$ of 
$J^*_p$ and then estimate $J_p(\hat{\theta}_{\lambda})$ via $W\hat{J}^*_{p}$. Given 
subsets $S_1, \ldots, S_M$ of $\{1, 2, \ldots, n\}$, associated with spatial locations 
$\{s_1, s_2, \ldots, s_n\}$, the estimator is defined as:
\begin{equation*}
    \hat{J}^*_{p}(\theta) = \frac{1}{M} \sum_{m=1}^M \frac{1}{W^{(m)}_p} 
    \sum_{\substack{(k,l) \in S_m \\ (k',l') \in S_m}} [\nabla cl^p_{kl}(\theta)] 
    [\nabla cl^{p}_{k'l'}(\theta)]^\top w_{kl} w_{k'l'},
\end{equation*}
where $W^{(m)}_p = \sum_{(k,l) \in S_m} w_{kl}$, and pairs $(k,l)$ and $(k',l')$ both 
belong to $S_m$.

On the other hand, calculating the matrix $H_p(\theta)$ has a computational complexity 
of $\mathcal{O}(p^4 n^2)$, which also incurs high computational costs growing rapidly 
as solutions become denser. In cases with a large number of point pairs, we propose 
adjusting the weights to reduce the number of point pairs and/or taking a bootstrap 
subsample to compute $H_p(\theta)$.

\section{Simulation study}

We conduct three simulation studies to analyze the behavior of the penalized estimation method presented in Section \ref{propuesta}. We first illustrate its capability to identify zero correlations in the matrix $\Sigma$. Next, we will conduct a simulation study to compare the performance of the estimate obtained using likelihood and composite likelihood functions. Finally, we conduct a simulation study using the composite likelihood function and assess the behavior of the hyperparameter selection via CLIC. All computations presented in this section were performed on a 2x 8-Core CPU (Intel Xeon E5-2689 @ 1200-3600 MHz), with 128 GB RAM, 931 GB storage, and a 5.4.0-182-generic x86 64 kernel.

\subsection{Illustrative Simulation}\label{Simul_ilust}

This section illustrates how to implement the proposed estimation method. We simulate a random field of dimension $p = 5$ across $n = 500$ spatial locations. For the latter, we consider 500 samples from a uniform distribution over the square $ [0,1] $. Regarding the parameters of the random field, we set:
\begin{equation*}
	\sigma^2 = (0.5, 1.0, 1.5, 2.0, 2.5) \quad \text{and} \quad \alpha = (10, 6.67, 5, 4, 3.33).
\end{equation*}
Additionally, we define the matrix $ \mathbf{R} $ as:
\begin{equation*}
	\mathbf{R}_{ij} = \begin{cases}
		1, & \text{if } i = j, \\
		0.5, & \text{if } |i - j| = 1, \\
		0, & \text{otherwise}.
	\end{cases}
\end{equation*} 
We then set the parameters $ \rho = \mathbf{R}$, $ \mathbf{R}_B = \mathbf{R}$, and $ \Delta_B = 60 $.

\textbf{Remark:}	Although in this example we set $ \rho = \mathbf{R}_B = \mathbf{R} $, in general $ \rho \neq \mathbf{R}_B $, and therefore these parameters are typically estimated separately.

With the Matérn model parameters set, we simulate a random field using the Cholesky decomposition method. This involves simulating $ X_{n} \sim N(0, I_{np})$ and then defining our sample as $ Z_n = \text{Chol}(\Sigma(\theta)) X_{n}$. This procedure generates a sample from a multivariate random field with a 
Matérn covariance model, as shown in Figure \ref{Campos_Simul}. Figure \ref{Variograma_simul} displays the cross-variograms between the different random field components. Note that some cross-variograms exhibit an irregular behavior; for instance, the cross-variograms of variables $ (Z_n)_{1} $ and $ (Z_n)_{5}$, $ (Z_n)_{2}$ and $ (Z_n)_5 $, or $ (Z_n)_{2}$ and $ (Z_n)_4 $. This irregularity occurs because these components are uncorrelated, so their theoretical variogram should be constant and identically equal to zero.

\begin{figure}[H]
	\centering
	\begin{minipage}{.5\textwidth}
		\centering
		\includegraphics[scale=0.25]{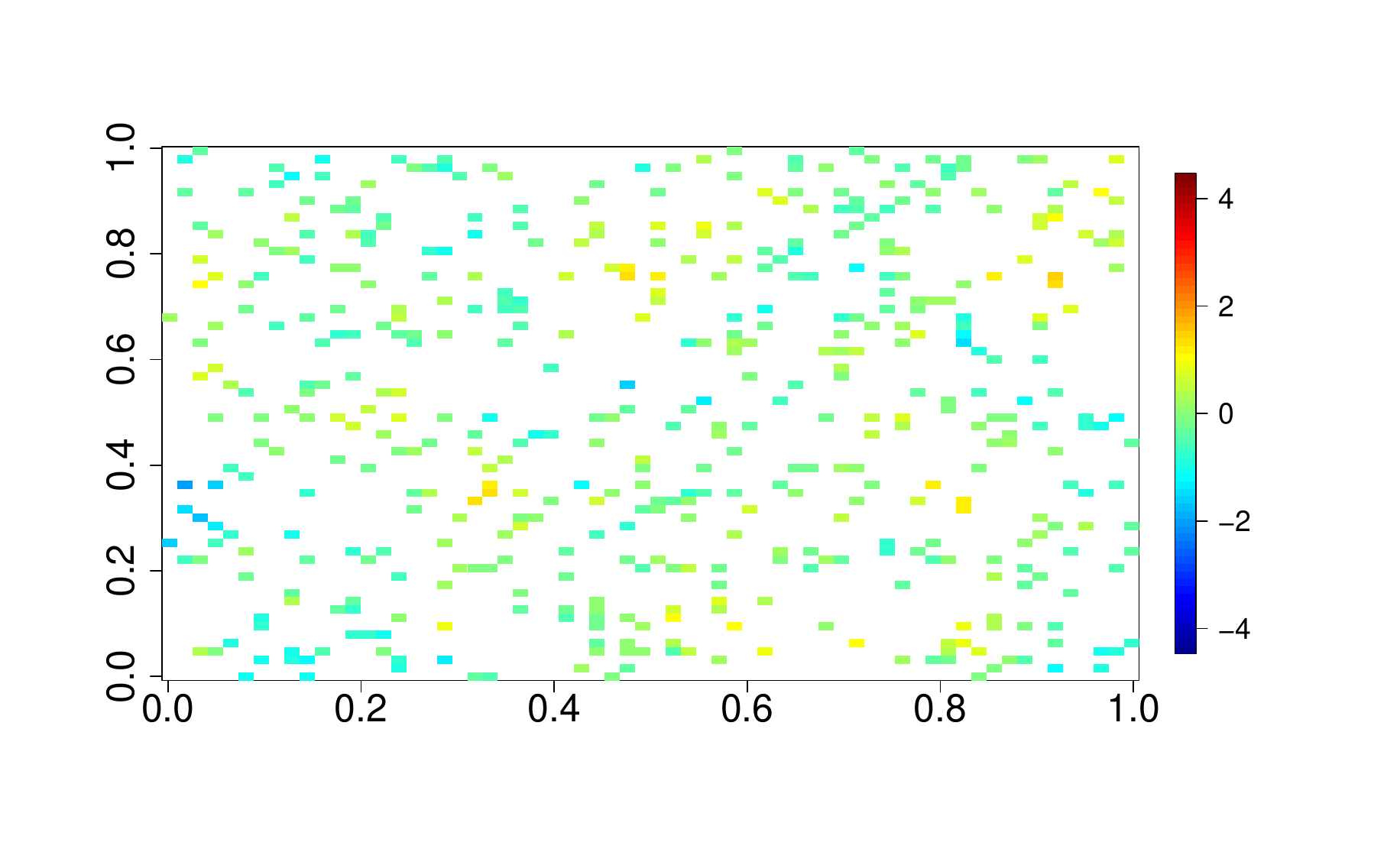}
	\end{minipage}%
	\begin{minipage}{.5\textwidth}
		\centering
		\includegraphics[scale=0.25]{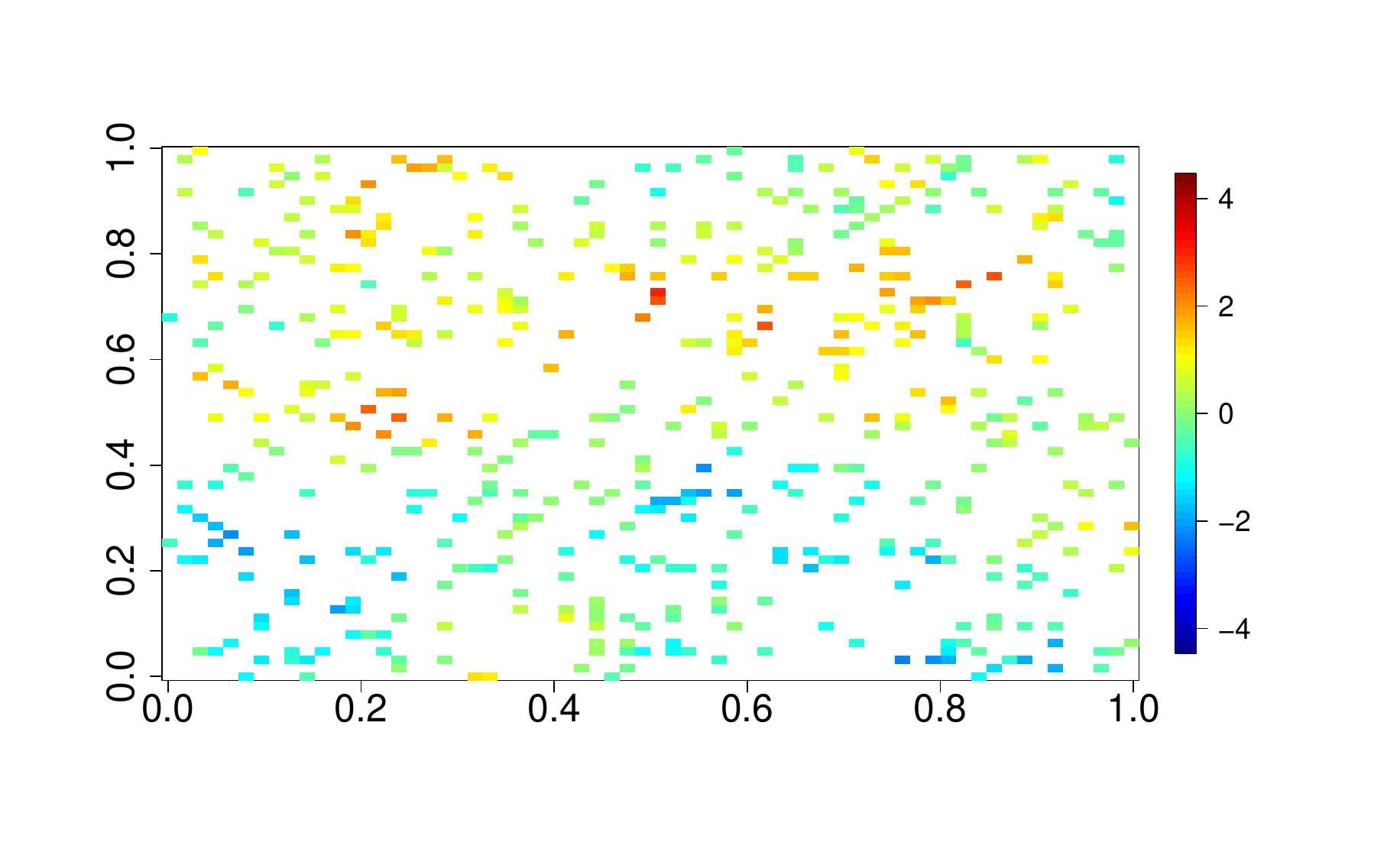}

	\end{minipage}
	
	\begin{minipage}{.5\textwidth}
		\centering
		\includegraphics[scale=0.25]{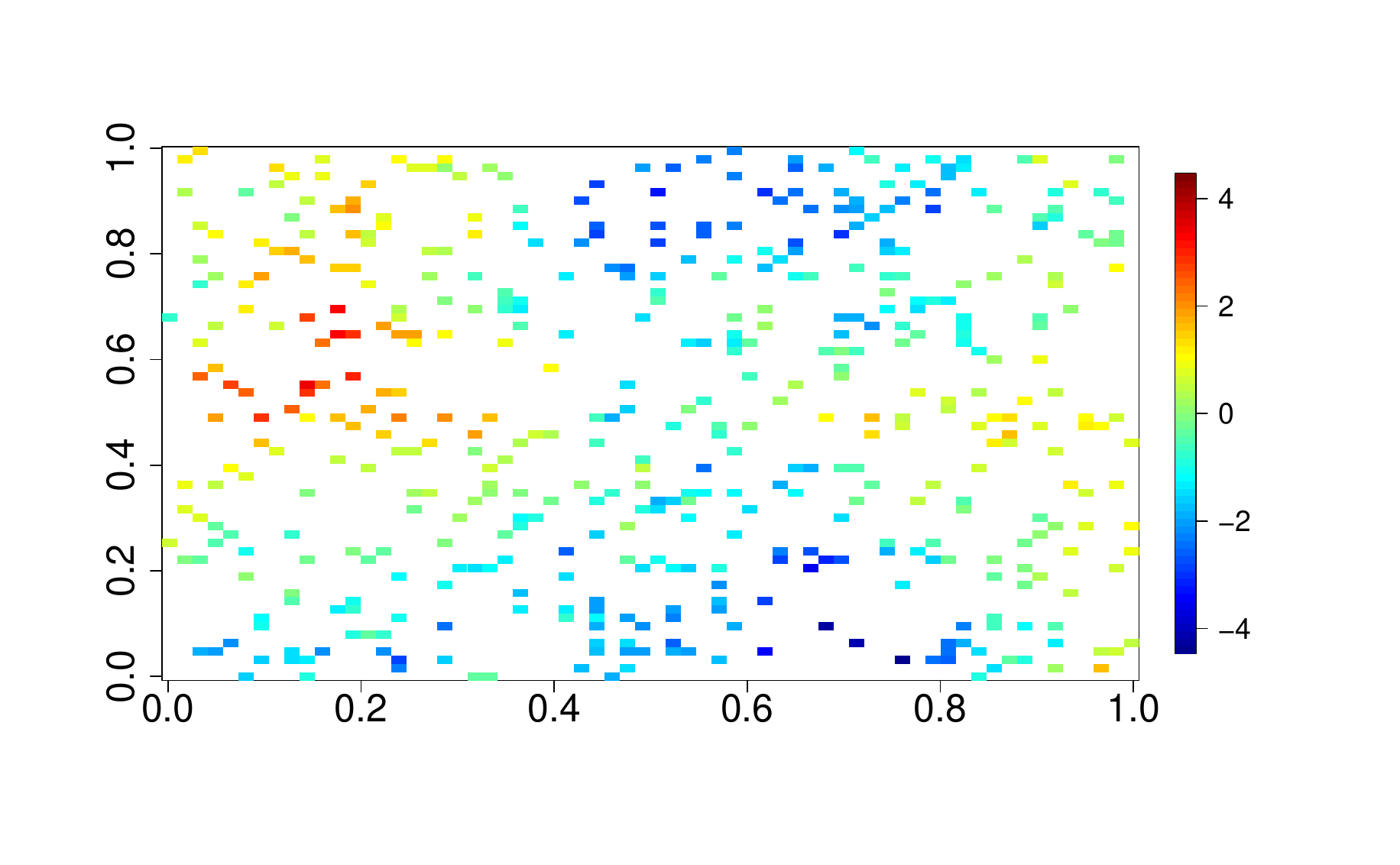}
	\end{minipage}%
	\begin{minipage}{.5\textwidth}
		\centering
		\includegraphics[scale=0.25]{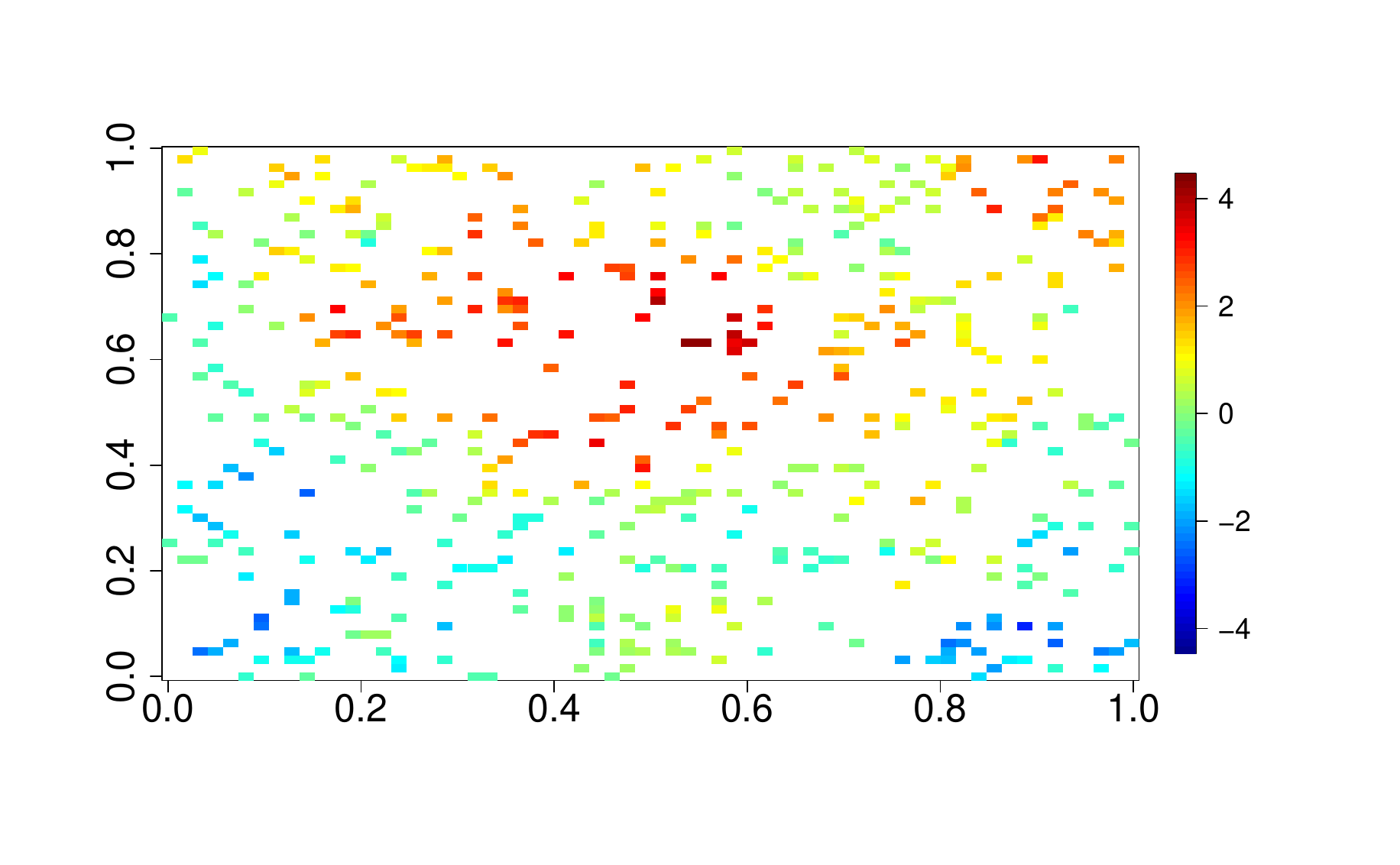}
	\end{minipage}
	
	\begin{minipage}{.5\textwidth}
		\centering
		\includegraphics[scale=0.25]{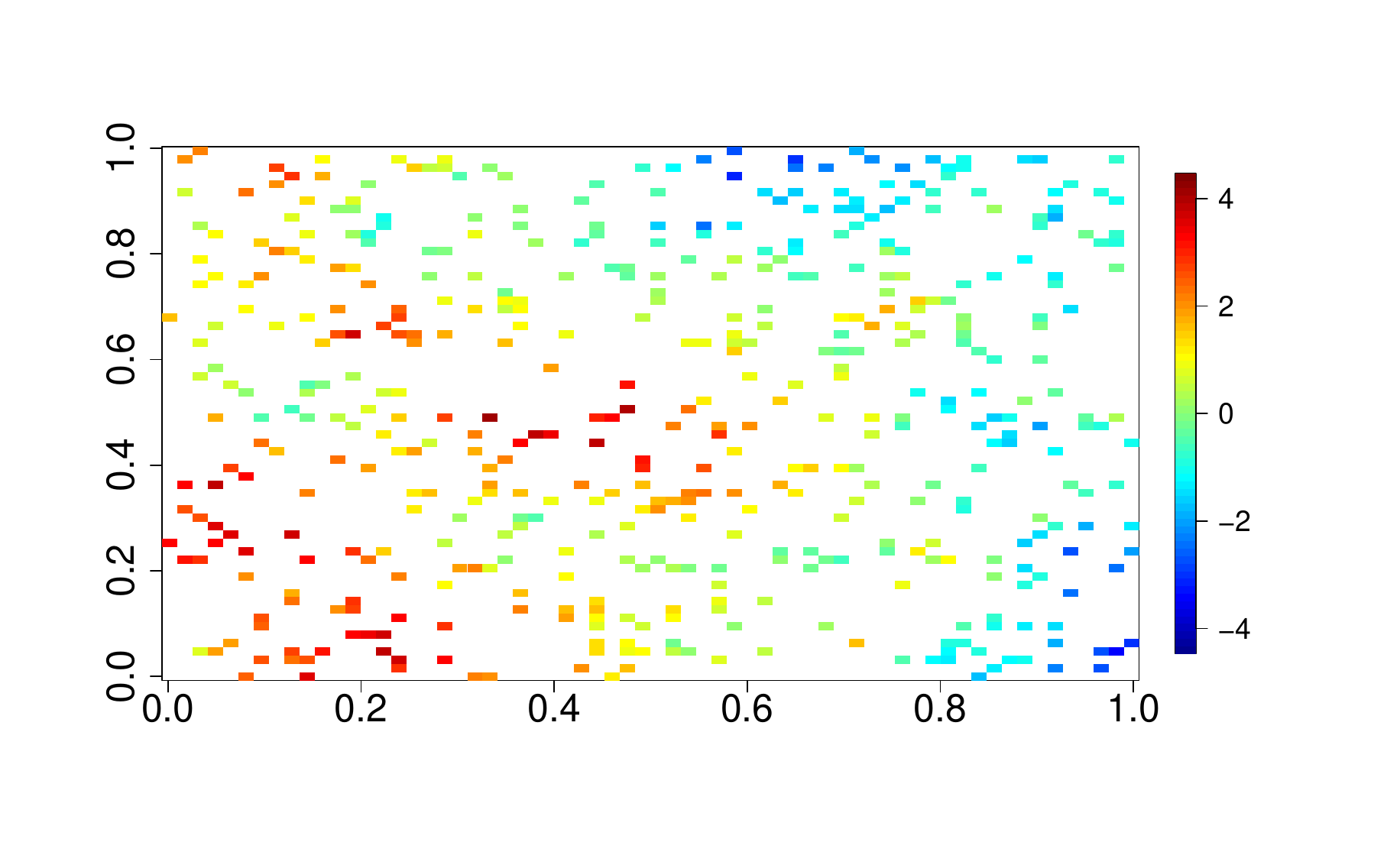}
	\end{minipage}
	
	\caption{Visualization of the simulated random field components.}\label{Campos_Simul}
\end{figure}

\begin{figure}[H]
	\centering 
	\includegraphics[scale=0.5]{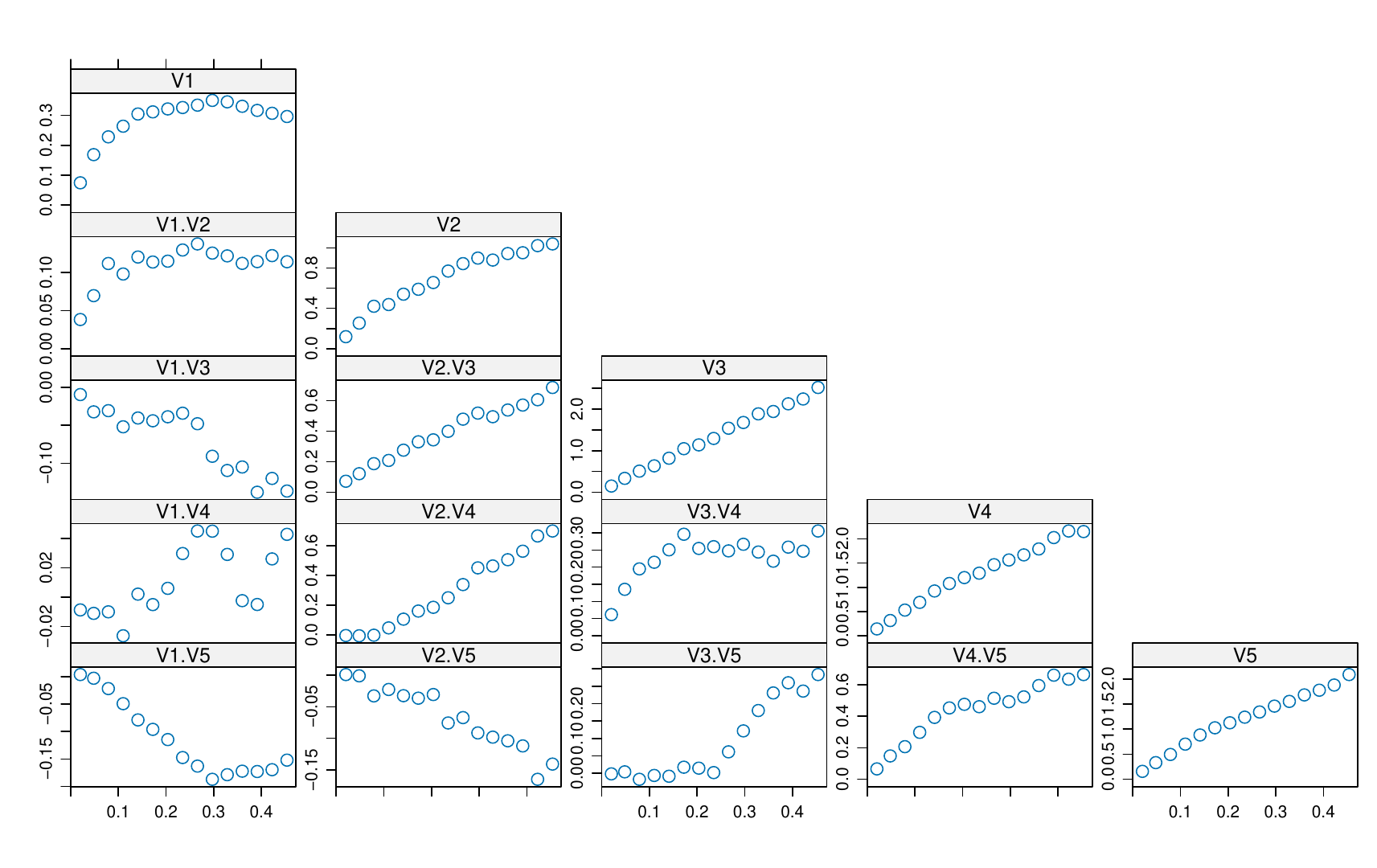}
	\caption{Cross-variograms of the simulated random field components.}
	\label{Variograma_simul}
\end{figure}

We then estimate the parameters of the multivariate Matérn model using maximum likelihood. To generate the sequence of $ \lambda$ values, we suggest using a number of $ \lambda$ values that is twice the number of correlations to be identified. For example, for a 5-dimensional random field, there are 10 correlations to identify, so we recommend a sequence of $ \lambda $ values with at least 20 elements. For this example, we use a sequence of $ \lambda$ with 100 elements to obtain a smoother graph of the AIC criterion with respect to $ \lambda $.

In Figure \ref{Sparse_simul}, we plot the percentage of zeros in the matrices $ L$ and $ \Psi $ with respect to $ \lambda$. Note that a greater number of zeros in the matrix $ L$ also implies a greater number of zeros in the matrix $ \Psi $, which is the objective of the penalization. Additionally, in Figure \ref{lineas_sparse}, we plot the solutions for the different values of $ L_{ij}$ and $ \Psi_{ij}$ with $ i \neq j$. We observe that the solutions for the parameters that are truly equal to zero decay very quickly to zero as $ \lambda $ increases, in contrast to the other parameters, which require much higher values of $ \lambda $ to become zero. These two groups of parameters are easily identifiable from the graph and correspond exactly to the two groups of parameters we sought to identify.

As mentioned previously, to determine the optimal penalty, we compute the CLIC criterion with respect to $\lambda$, as shown in Figure \ref{AIC_simul}. The value of $ \lambda $ with the lowest AIC, marked in red, is chosen as the optimal $ \lambda $.

After selecting the optimal $ \lambda $, we use the parameters $ \hat{\theta}_{\lambda}$ corresponding to the chosen $ \lambda $ for spatial prediction. For this, we implement ordinary cokriging on a $ 100 \times 100 $ grid. The results 
are displayed in Figure \ref{simul_pred_var}. Additionally, to illustrate how the errorvariances behave relative to the observations, we plot the cokriging variances of the third random field component, along with the observation locations, as shown in Figure \ref{variances_Kriging}.

In this example, we used the likelihood function, but the procedure is analogous if the composite likelihood function is used for estimation. The decision on which function to use depends on the estimation time required by each function. Generally, the likelihood function has a higher computational complexity, thus requiring more time for estimation. Specifically, Table \ref{tab:comp_time} presents the estimation times required using the likelihood and composite likelihood functions, respectively.

\begin{figure}[H]
	\begin{minipage}{.5\textwidth}
		\centering
		\includegraphics[scale=0.3]{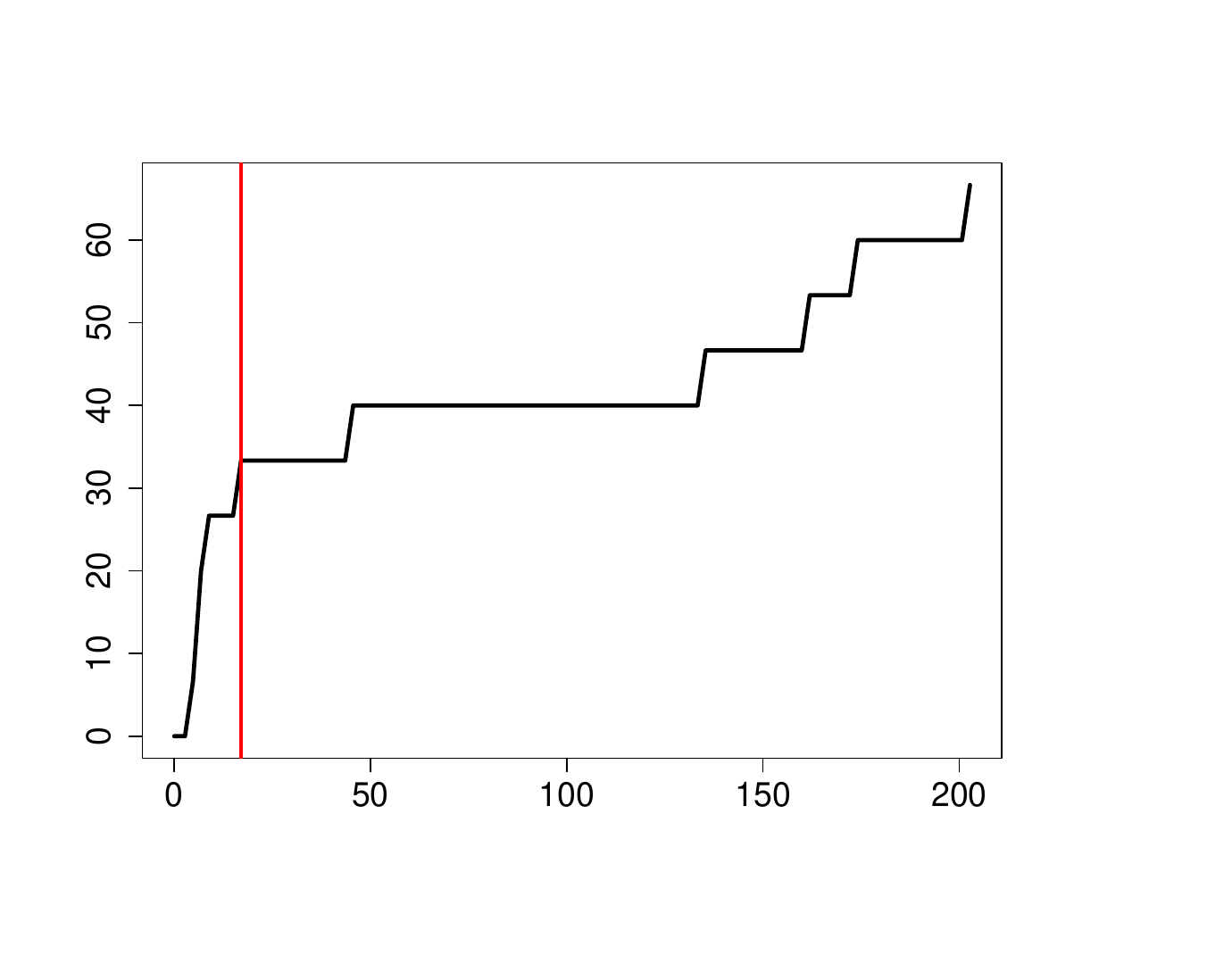}
	\end{minipage}%
	\hspace{0.5cm}
	\begin{minipage}{.5\textwidth}
		\centering
		\includegraphics[scale=0.3]{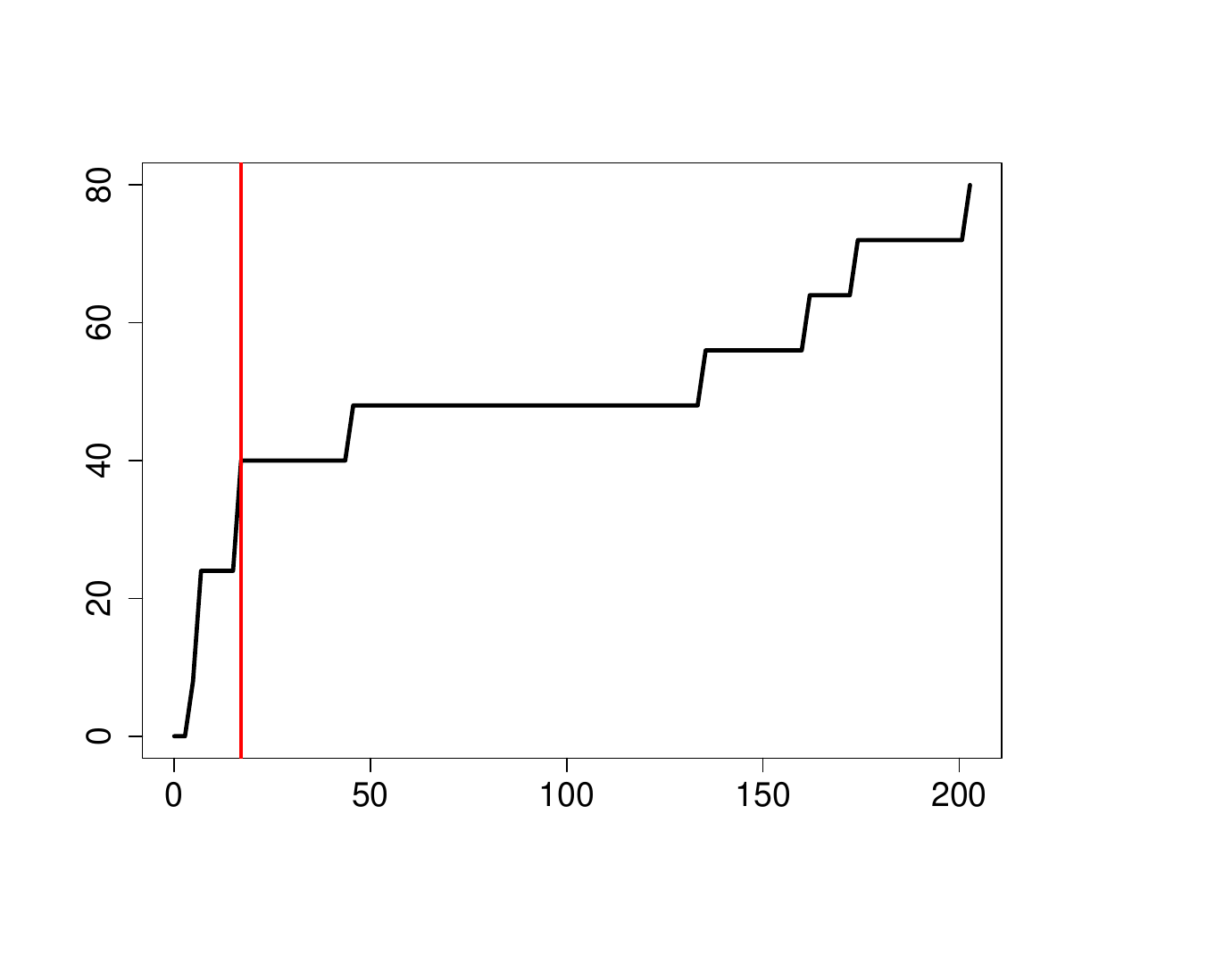}
	\end{minipage}
	\caption{Percentage of zeros with respect to $\lambda$. Left: Matrix $ L $. Right: Matrix $ \Psi $.}\label{Sparse_simul}
\end{figure}

\begin{figure}[H]
	\begin{minipage}{.5\textwidth}
		\centering
		\includegraphics[scale=0.24]{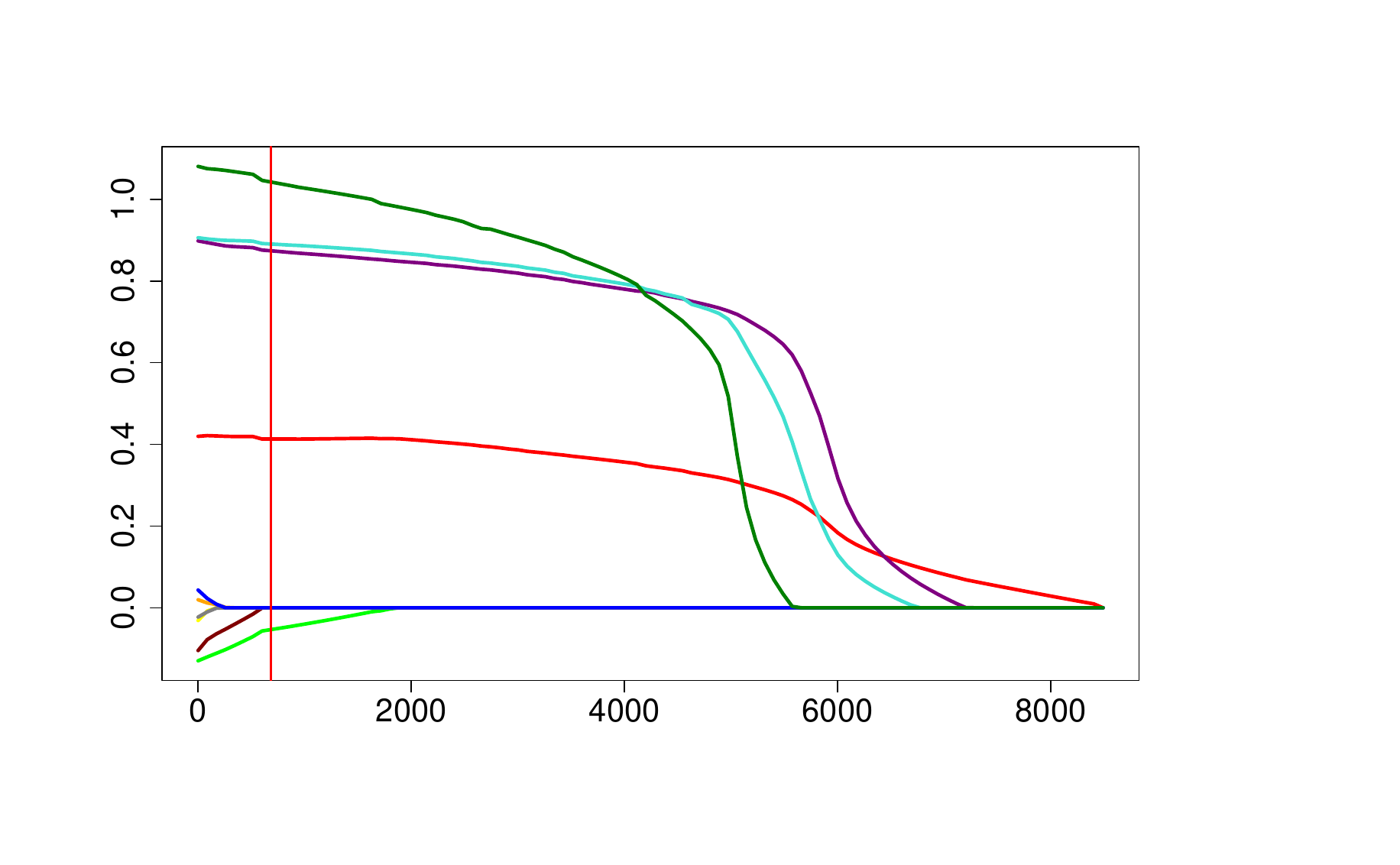}
	\end{minipage}%
	\hspace{0.5cm}
	\begin{minipage}{.5\textwidth}
		\centering
		\includegraphics[scale=0.24]{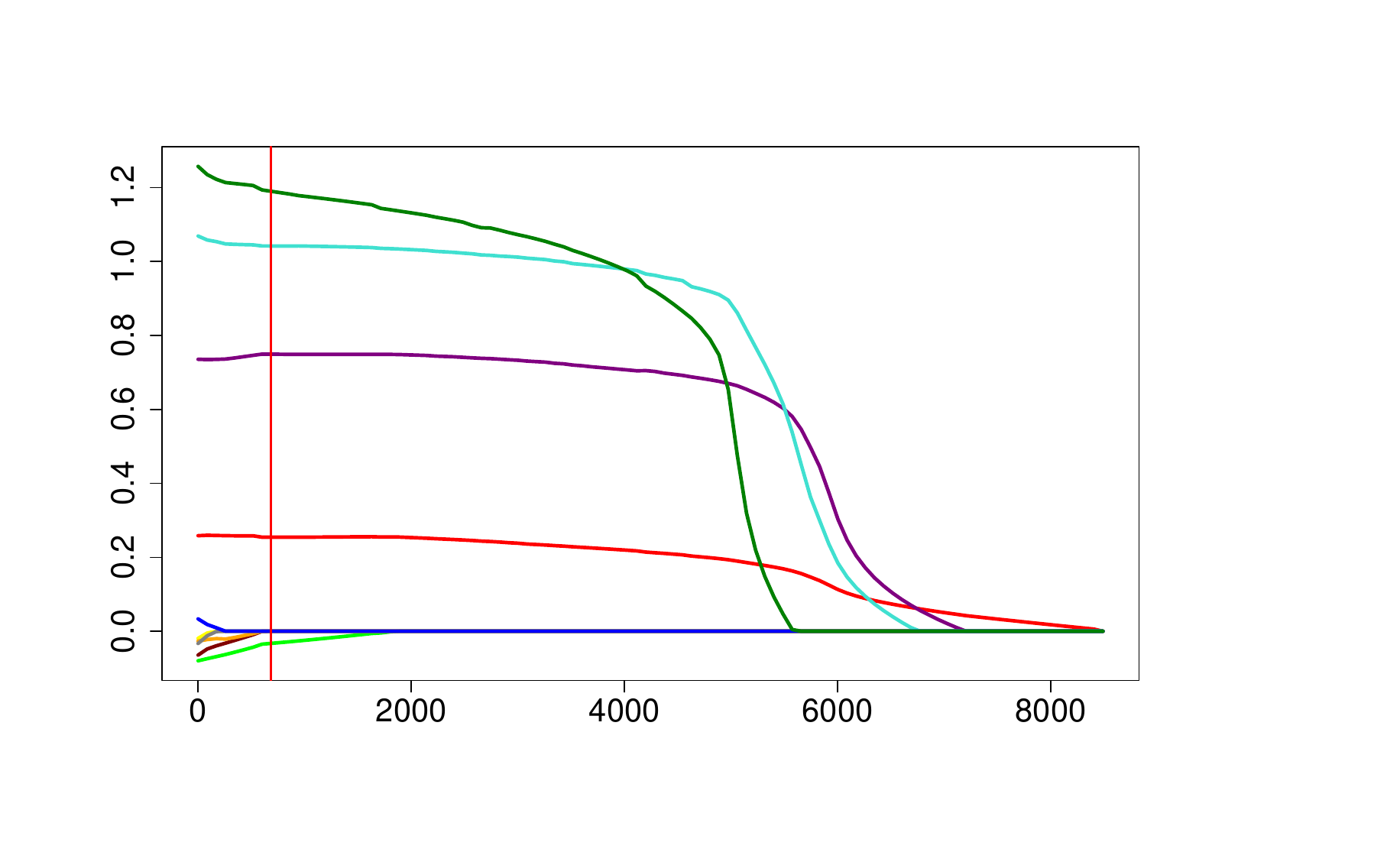}
	\end{minipage}
	\caption{Solution paths with respect to $ \lambda$. Each colored line represents an entry of the matrix. Left: Matrix $ L $. Right: Matrix $ \Psi$.  }\label{lineas_sparse}
\end{figure}

\begin{figure}[H]
	\centering
	\includegraphics[scale=0.35]{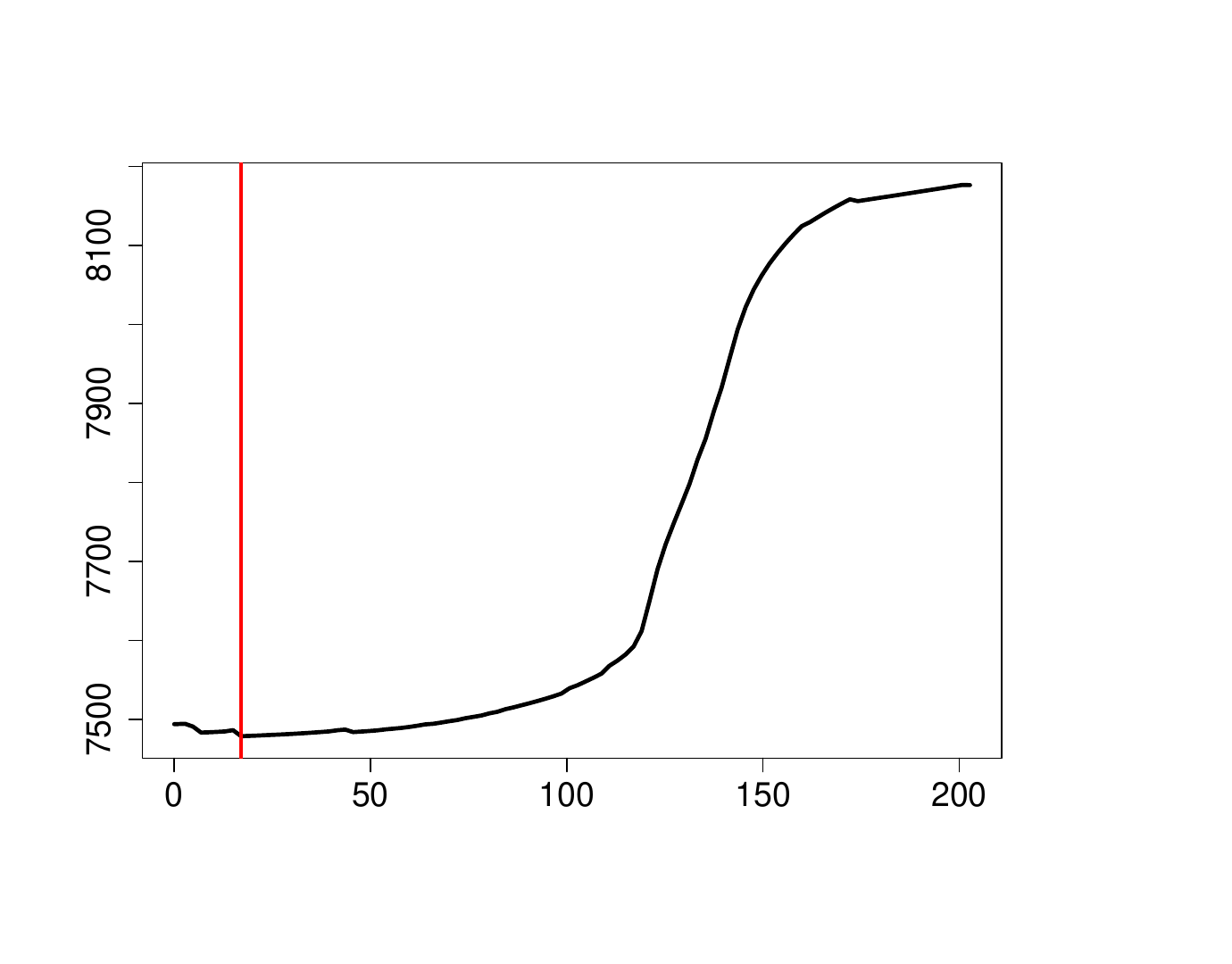}
	\caption{AIC with respect to $\lambda$. In red, the optimal $ \lambda$ value.}\label{AIC_simul}
\end{figure}

\begin{figure}[H]
	\centering
	\begin{minipage}{.5\textwidth}
		\centering
		\includegraphics[scale=0.25]{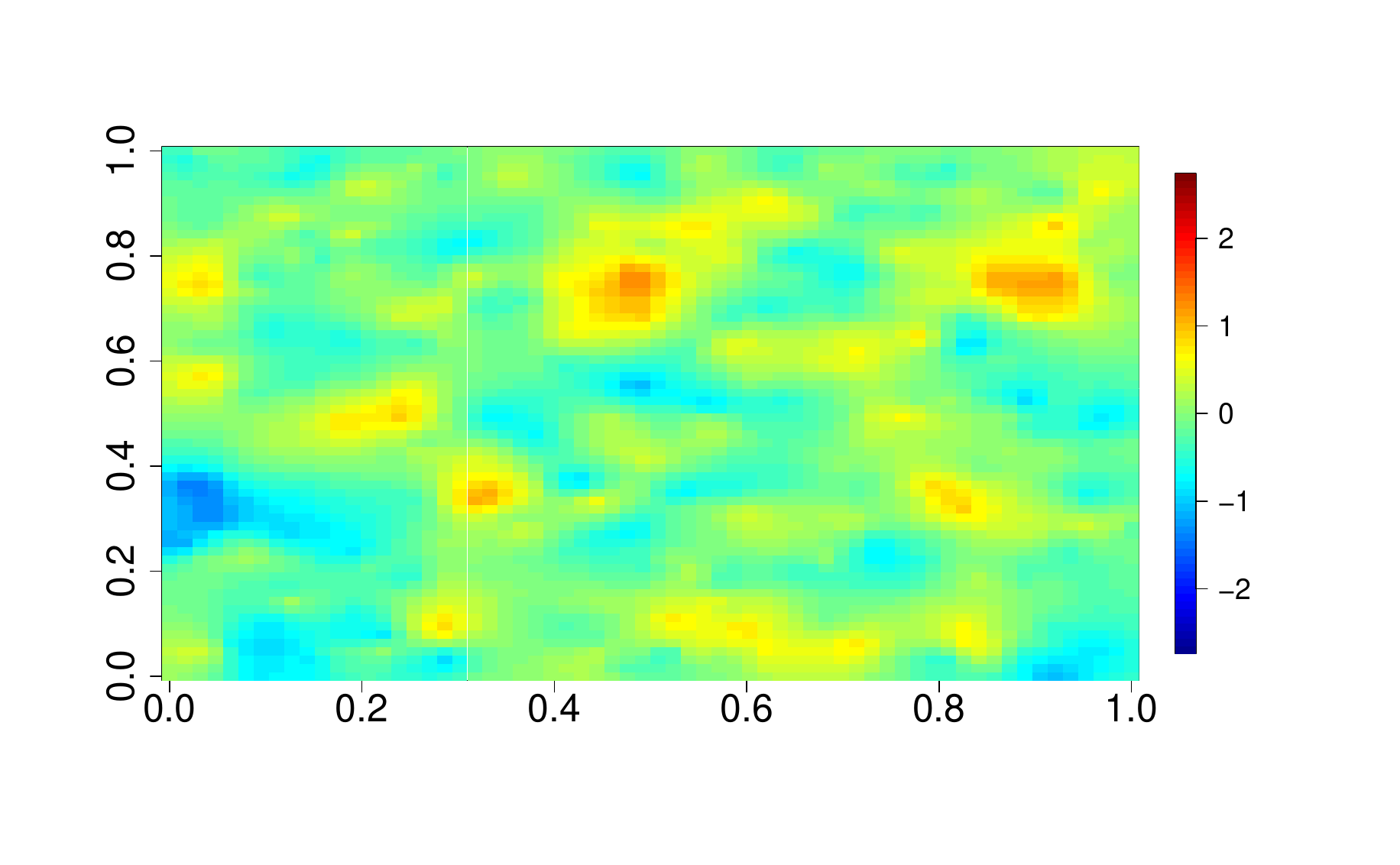}
	\end{minipage}%
	\begin{minipage}{.5\textwidth}
		\centering
		\includegraphics[scale=0.25]{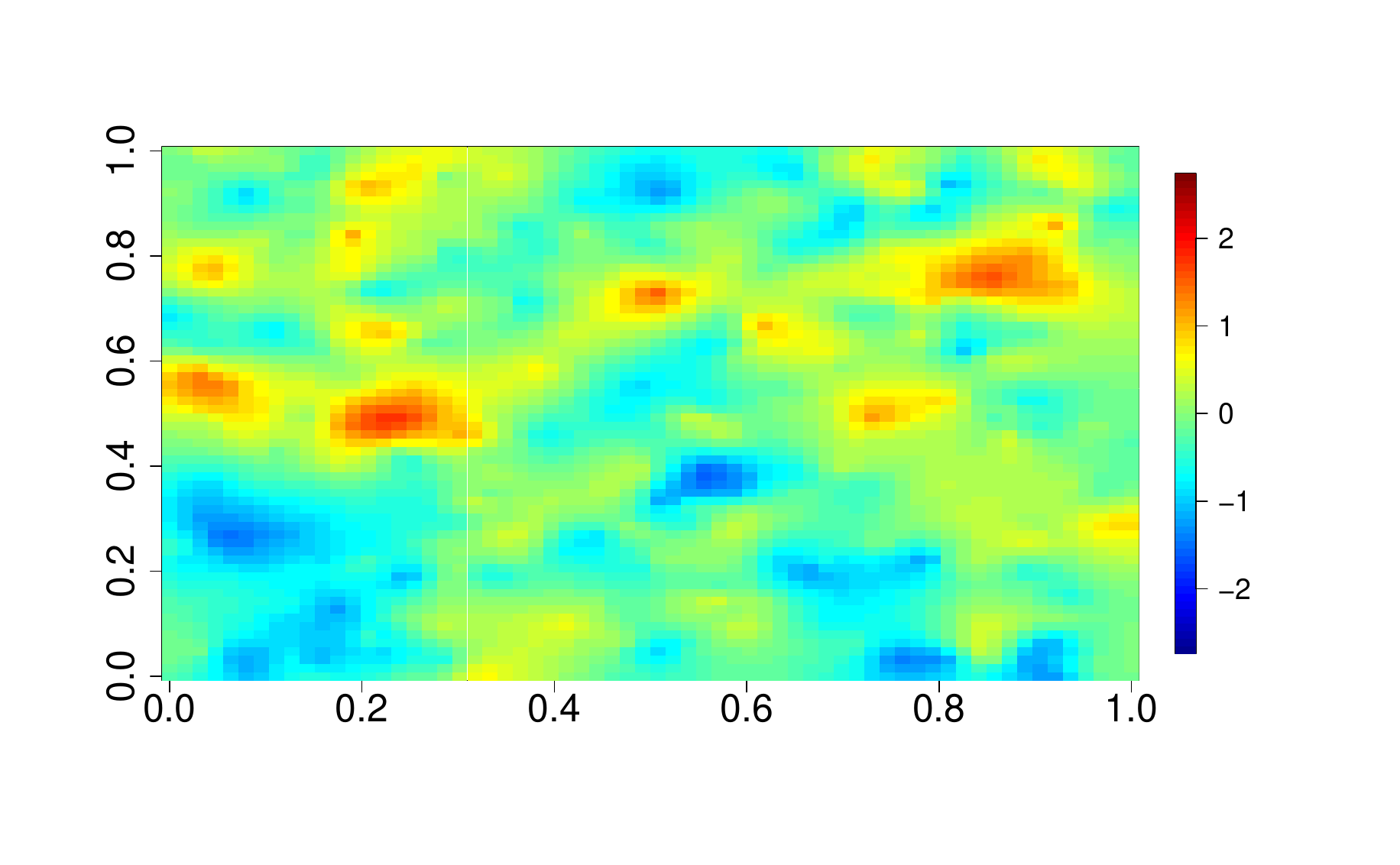}

	\end{minipage}
	
	\begin{minipage}{.5\textwidth}
		\centering
		\includegraphics[scale=0.25]{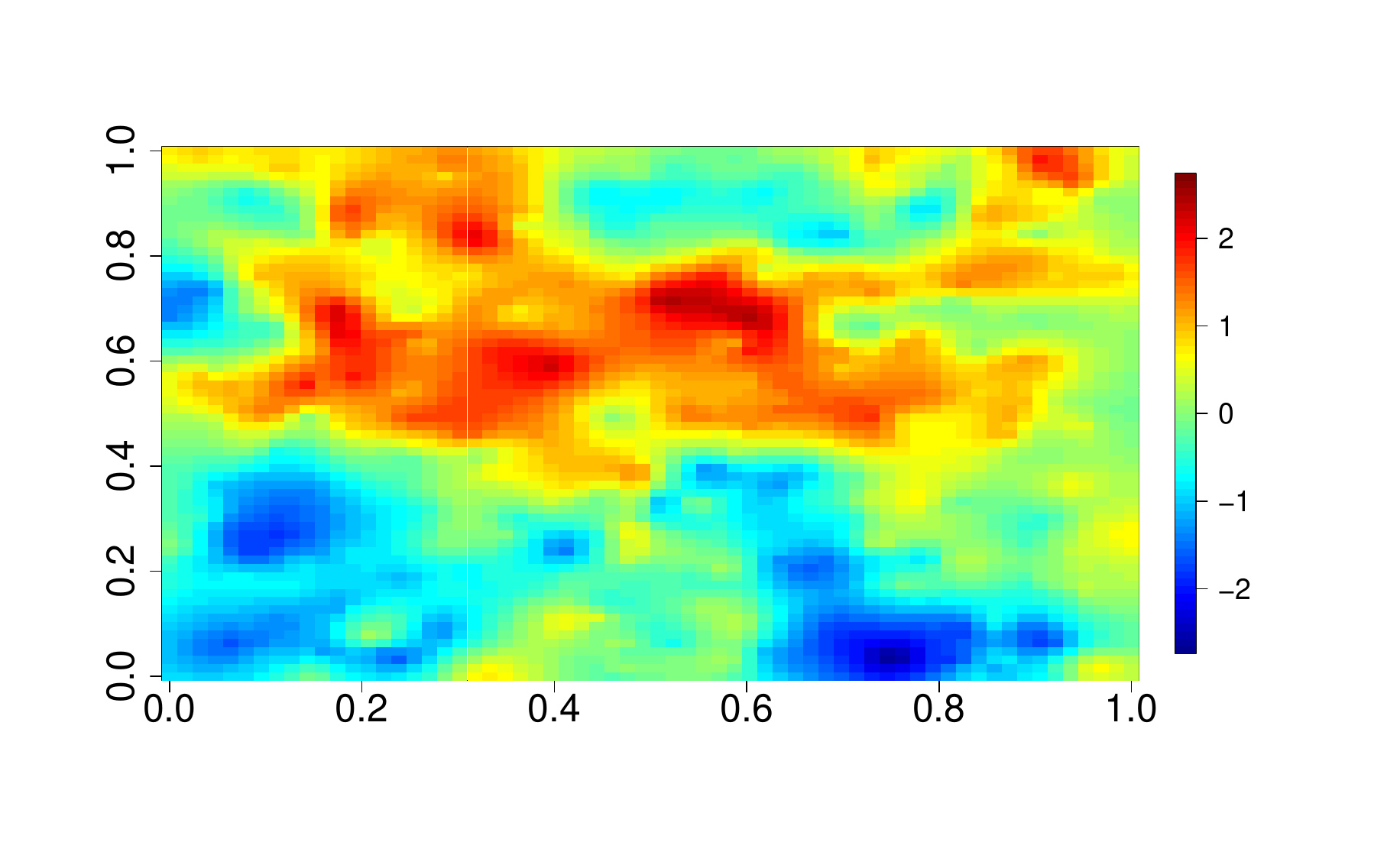}
	\end{minipage}%
	\begin{minipage}{.5\textwidth}
		\centering
		\includegraphics[scale=0.25]{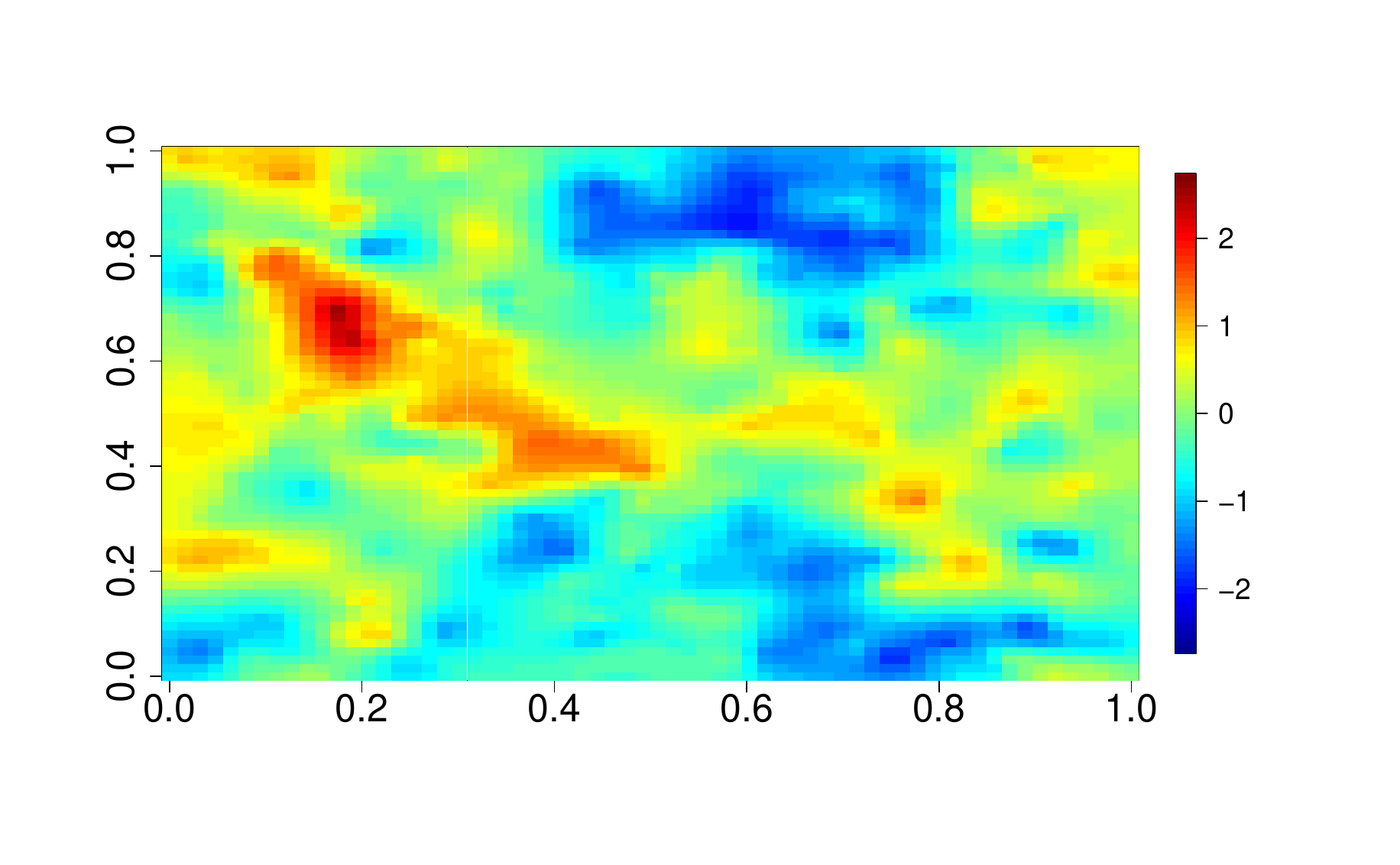}
	\end{minipage}
	
	\begin{minipage}{.5\textwidth}
		\centering
		\includegraphics[scale=0.25]{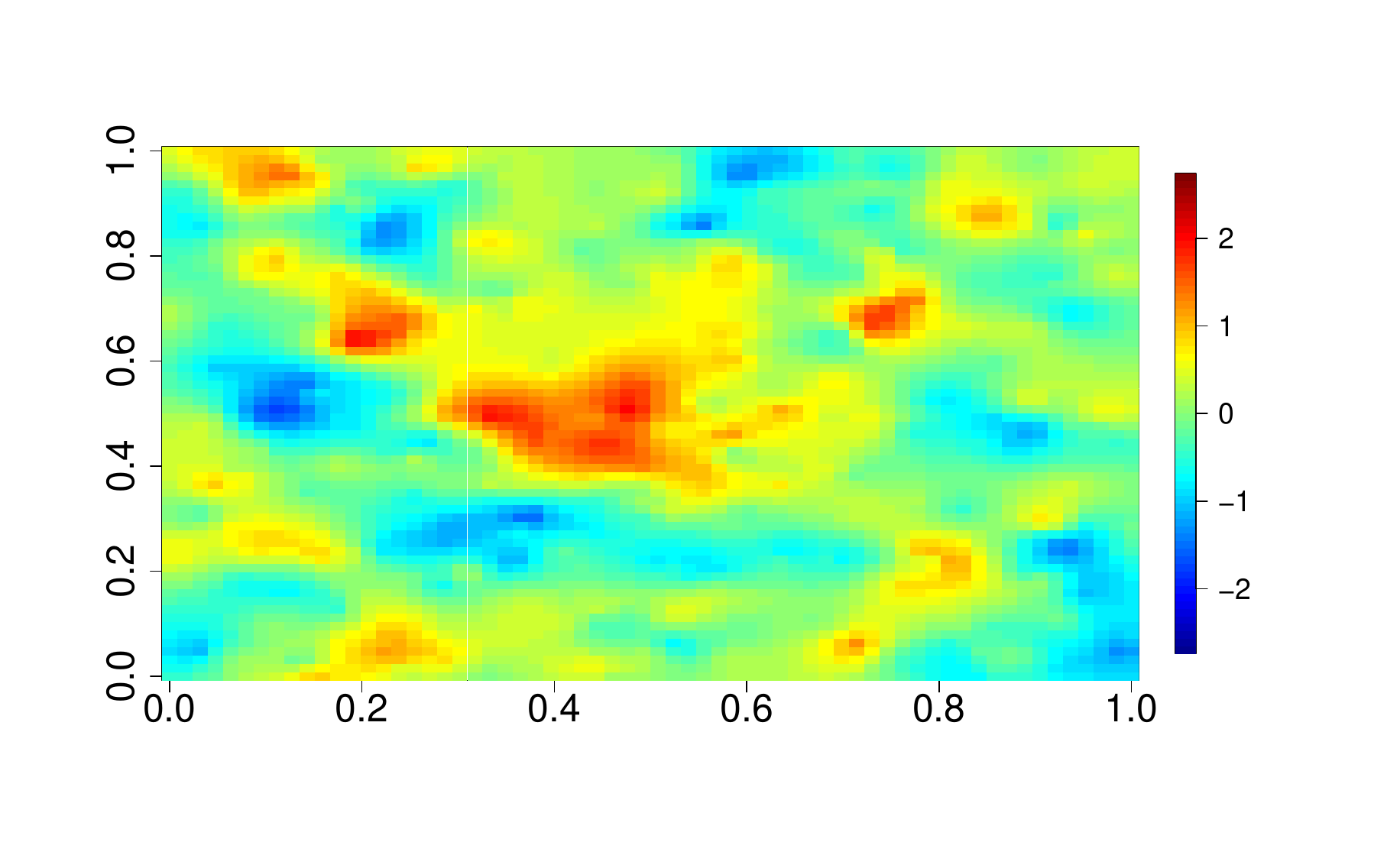}	\end{minipage}
	
	\caption{Cokriging predictions for the 5 random field components.}
	\label{simul_pred_var}
\end{figure}

\begin{figure}[H]
	\centering
	\includegraphics[scale=0.4]{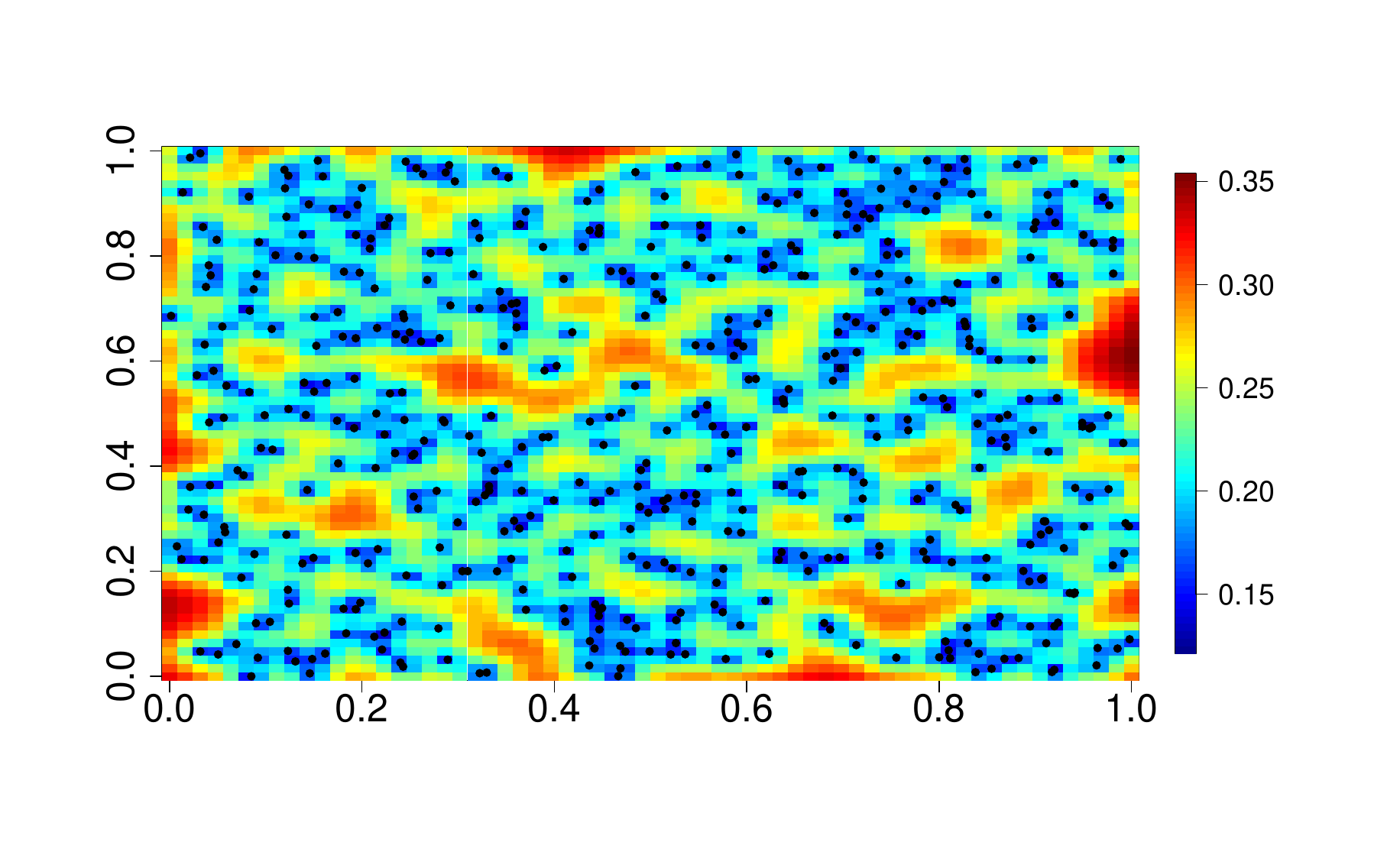}
	\caption{Cokriging variance for the third random field component.}\label{variances_Kriging}
\end{figure}

\subsection{Simulation Experiment}

The goal of this experiment is to computationally determine whether the LASSO penalty can detect zero coefficients in the correlation matrix, as well as to compare estimation results obtained using the likelihood and composite likelihood functions.

To achieve this, we chose the same parameterization as in Section \ref{Simul_ilust} to simulate 500 multivariate random fields. For each simulated random field, the covariance parameters were estimated using the likelihood and composite likelihood methods, both with and without penalization. Specifically, for the LASSO estimation, a sequence of 20 values for $\lambda$ was generated, and the optimal $\lambda $  was selected using the methods described in Section \ref{lam_optimo}.

\begin{figure}[H]
	\centering
	\begin{minipage}{.5\textwidth}
		\centering
		\includegraphics[scale=0.2]{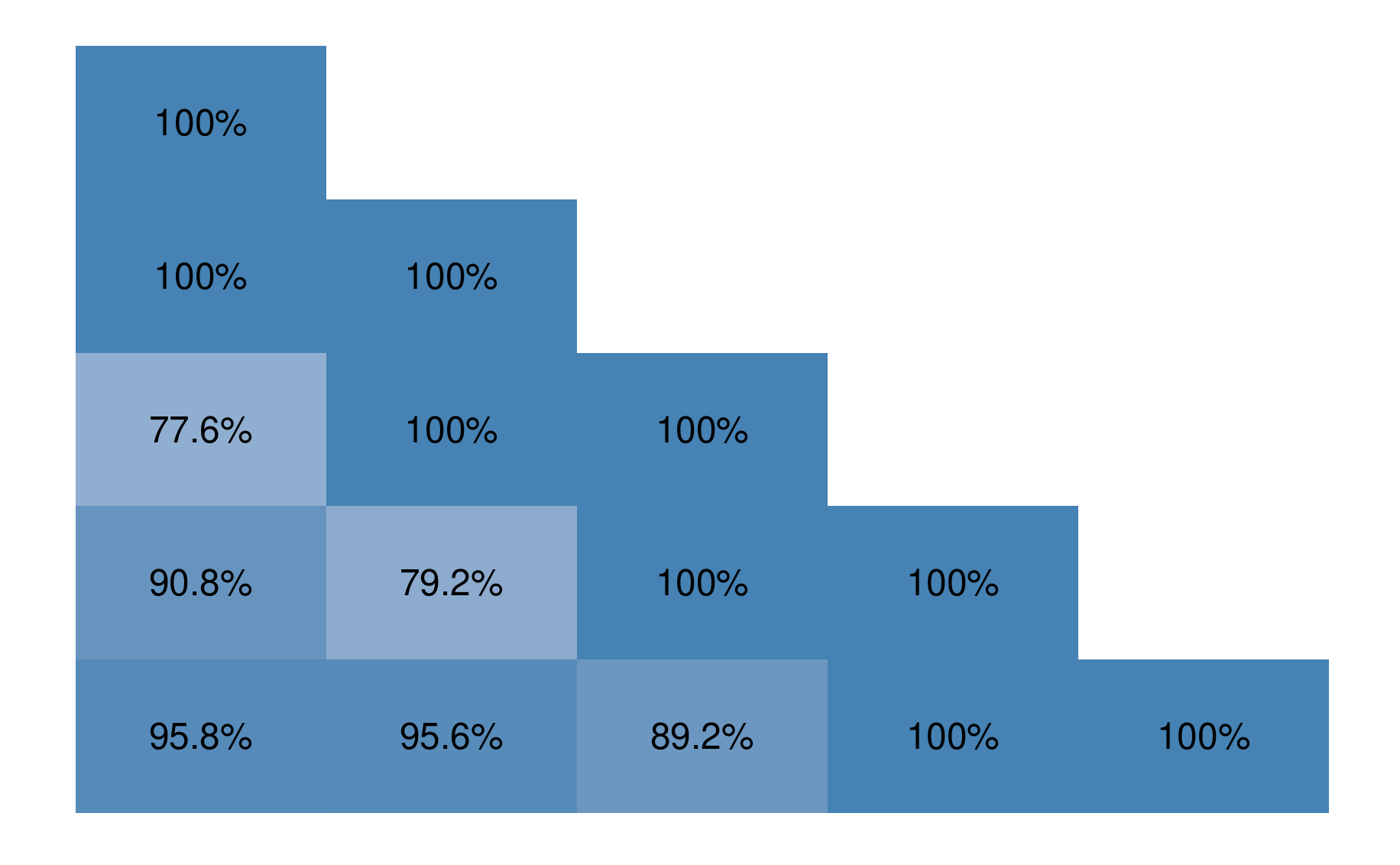}
        (a)
	\end{minipage}%
	\begin{minipage}{.5\textwidth}
		\centering
		\includegraphics[scale=0.2]{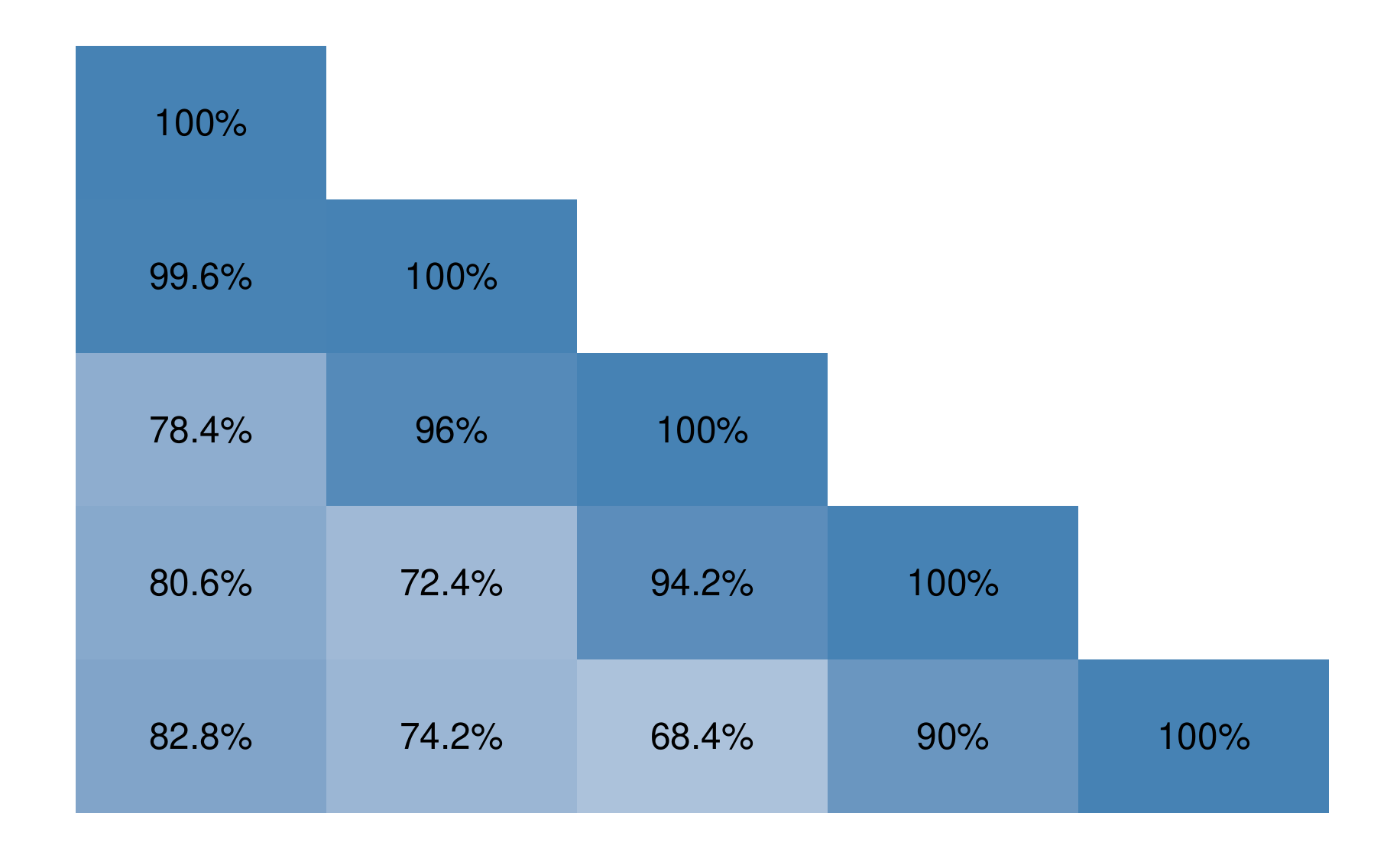}
		(b)
	\end{minipage}
	\begin{minipage}{.5\textwidth}
		\centering
		\includegraphics[scale=0.2]{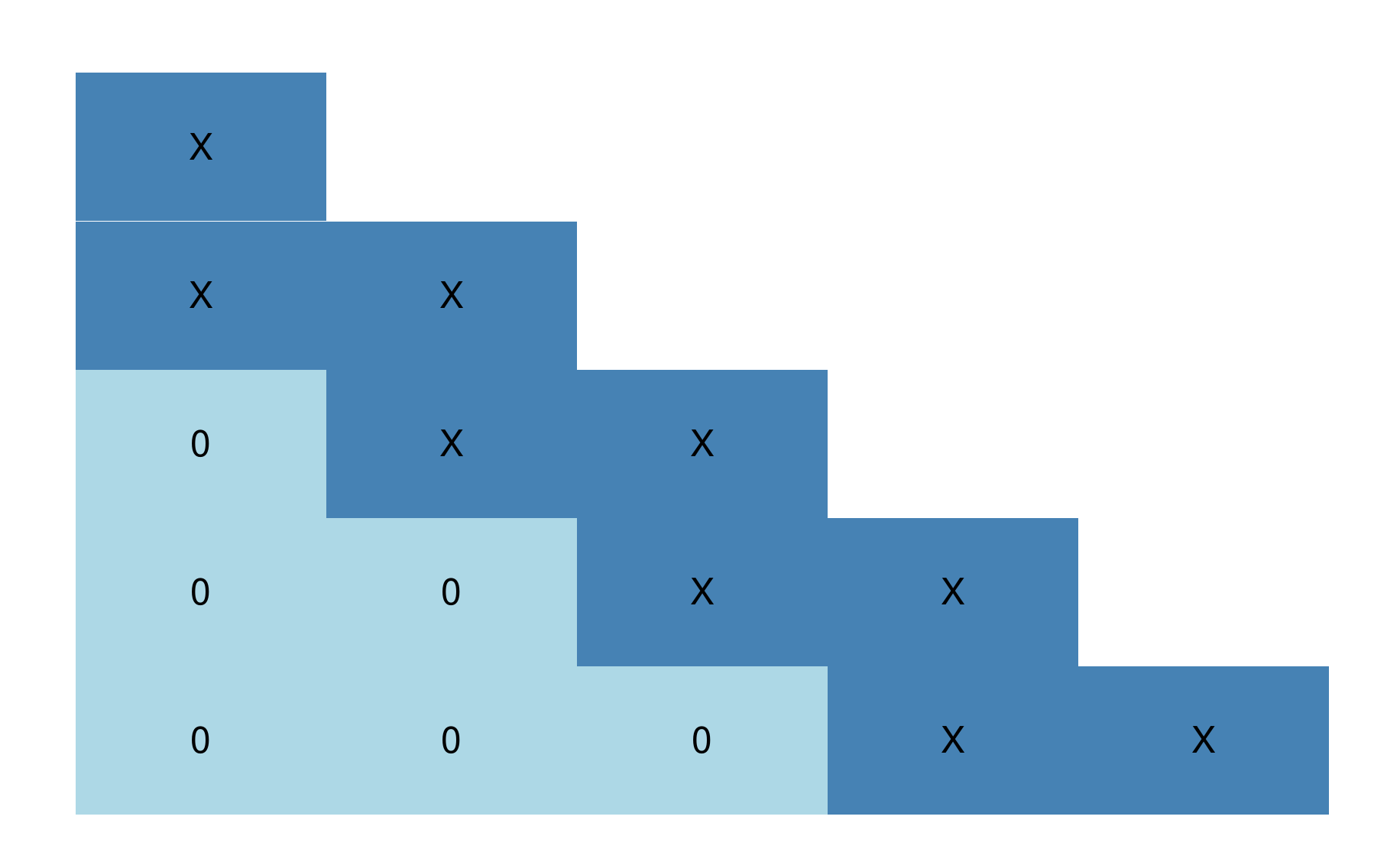}
		(c)
	\end{minipage}
	\caption{Percentage of correct zero identifications in matrix $ L $. (a) Likelihood. (b) Composite likelihood. (c) True zero entries in the matrix.}
	\label{lassovscomp}
\end{figure}

\begin{figure}[H]
	\centering
	\includegraphics[scale=0.4]{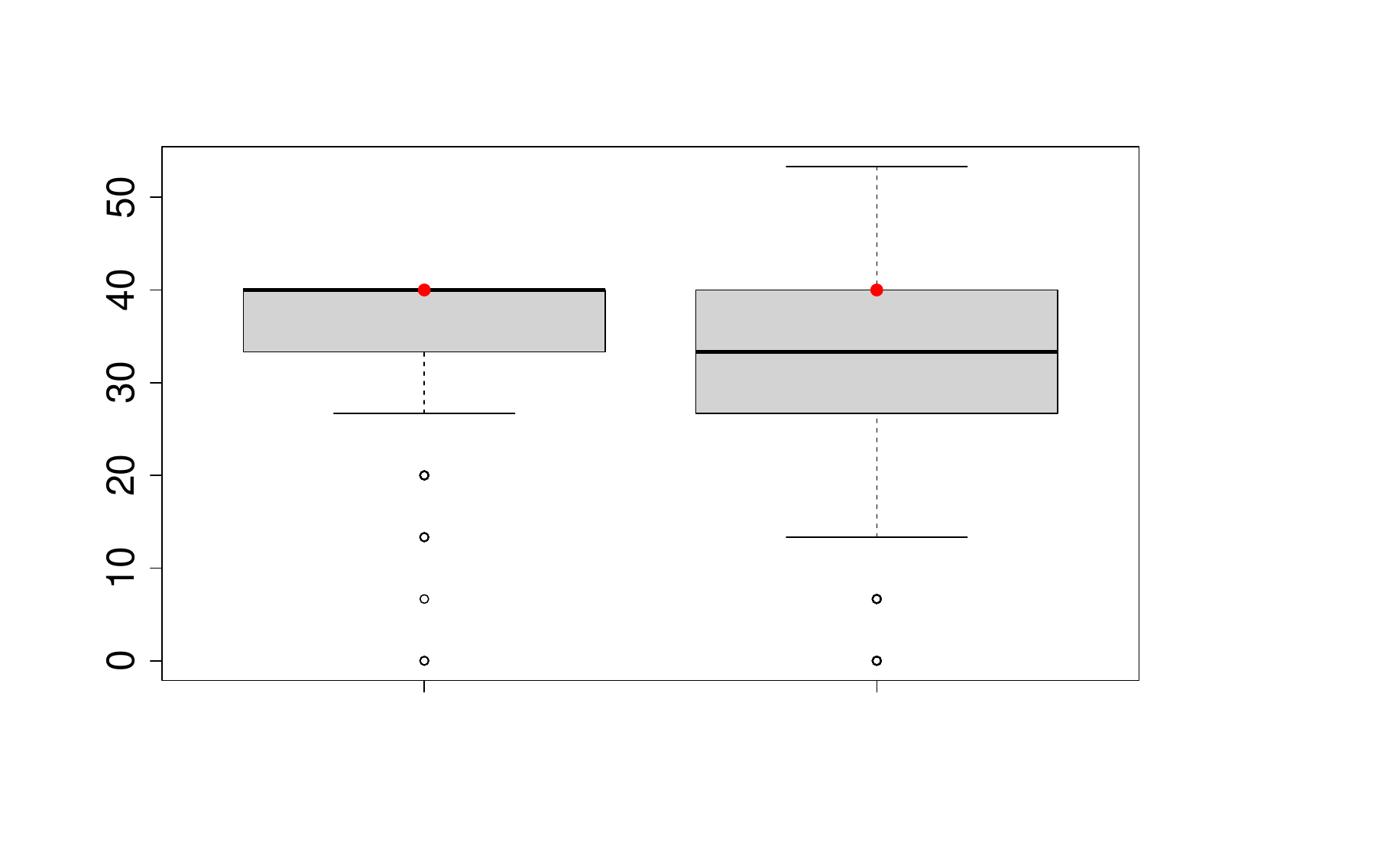}
	\caption{Boxplot of the percentage of zeros in the estimated $ L $ matrix. Left: Likelihood. Right: Composite likelihood.}
	\label{Sparsidad_vs}
\end{figure}

\begin{table}[H]
\centering
\begin{minipage}{.45\textwidth}
\centering
\begin{tabular}{|c|c|c|}
\hline
 & \multicolumn{2}{|c|}{Predicted} \\
\hline
True & Non-zero & Zero \\
\hline
Non-zero & 100 \%  & 11.97 \% \\
Zero     & 0.00\% & 88.03 \% \\
\hline
\end{tabular}
\caption{Likelihood}
\end{minipage}
\hfill
\begin{minipage}{.45\textwidth}
\centering
\begin{tabular}{|c|c|c|}
\hline
 & \multicolumn{2}{|c|}{Predicted} \\
\hline
True & Non-zero & Zero \\
\hline
Non-zero & 97.76 & 23.87 \\
Zero     & 3.37 & 76.13 \\
\hline
\end{tabular}
\caption{Composite likelihood}
\end{minipage}

\caption{Confusion matrices for zero detection in matrix $L$. Positive (negative) denotes non-zero (zero) correlations, respectively. Left: Likelihood. Right: Composite likelihood.}
\label{tab:confusionp5}
\end{table}

\begin{figure}[H]
	\centering
	\begin{minipage}{.5\textwidth}
		\centering
		\includegraphics[scale=0.2]{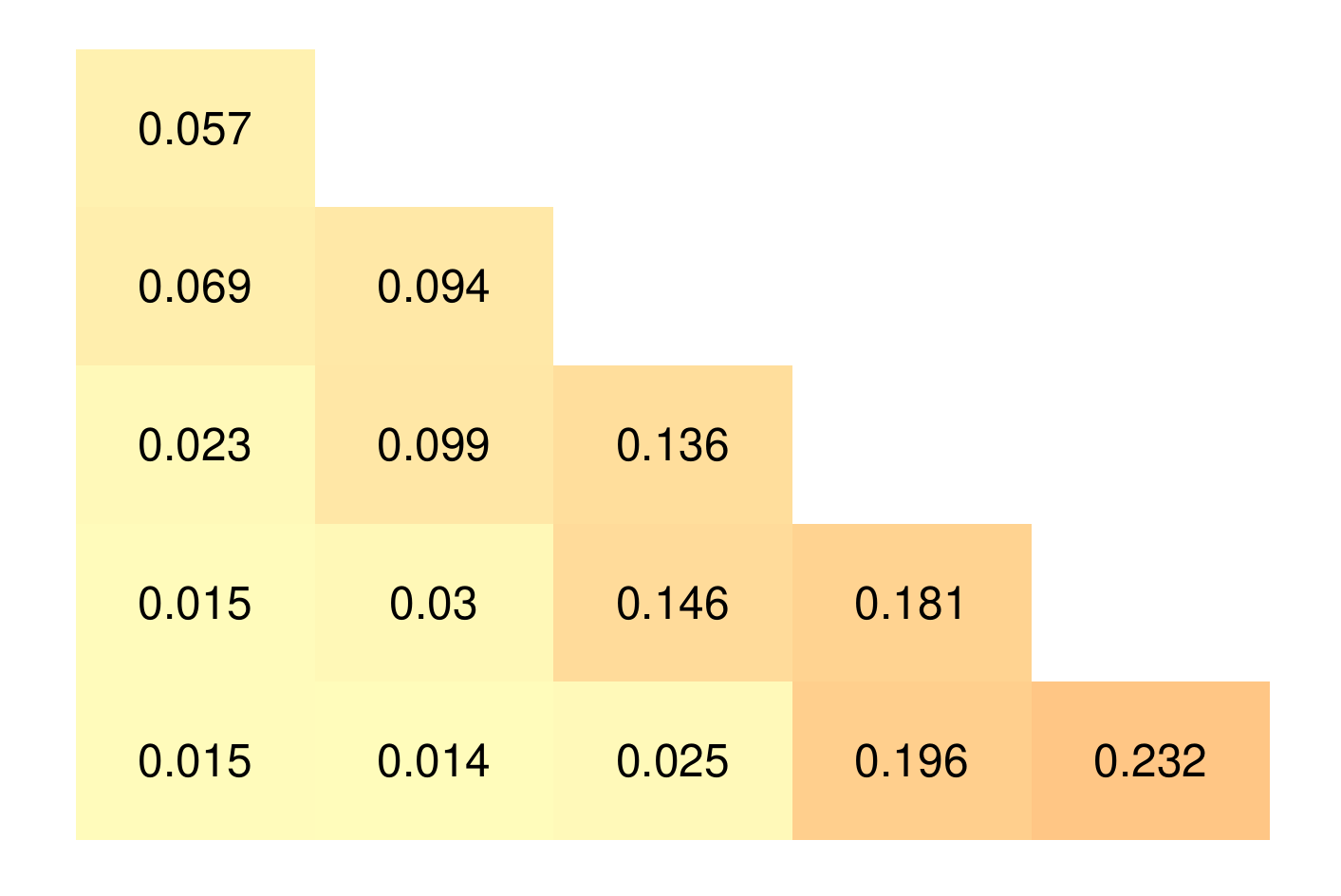}
		(a)
	\end{minipage}%
	\begin{minipage}{.5\textwidth}
		\centering
		\includegraphics[scale=0.2]{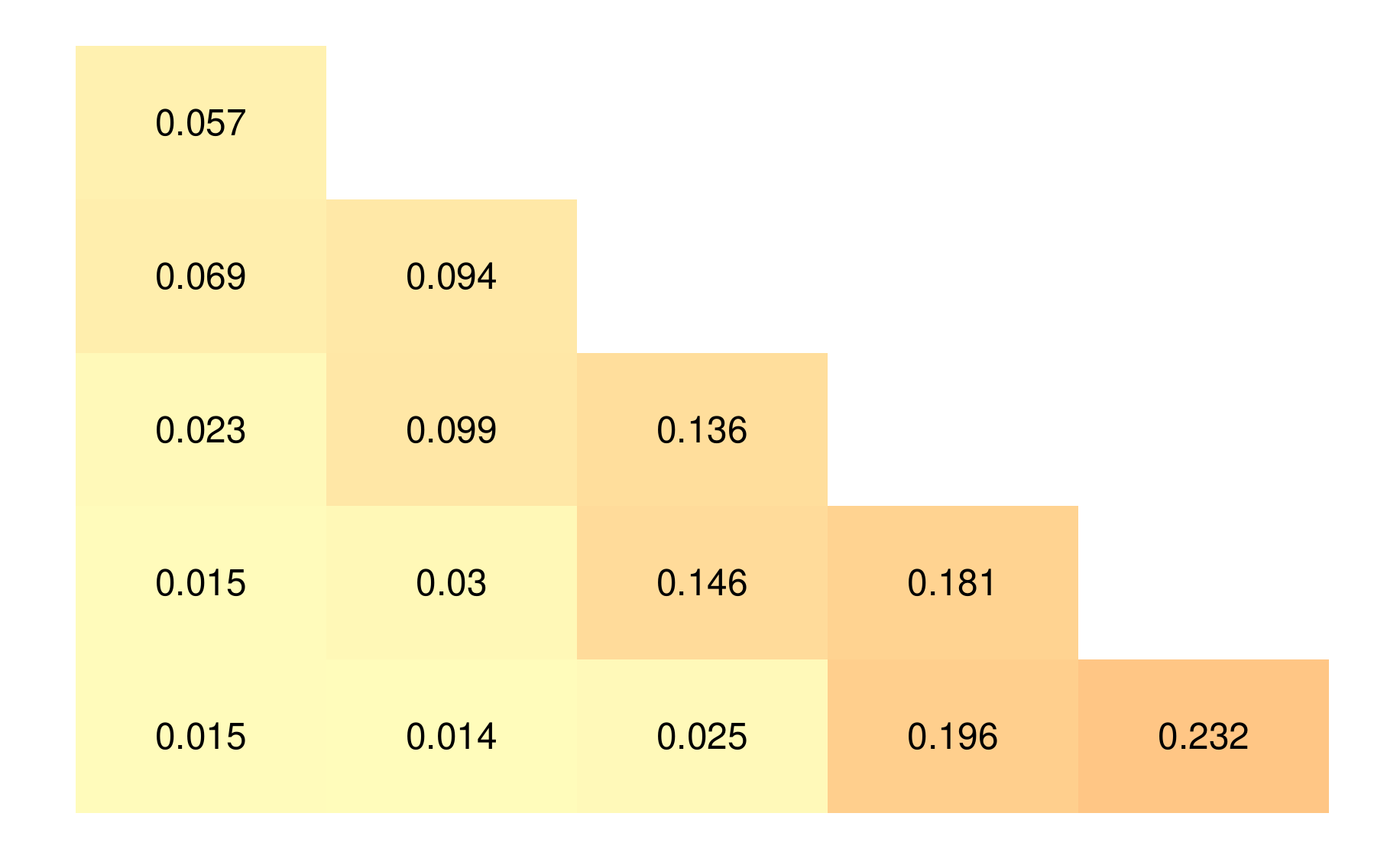}
		(b)
	\end{minipage}
	
	\begin{minipage}{.5\textwidth}
		\centering
		\includegraphics[scale=0.2]{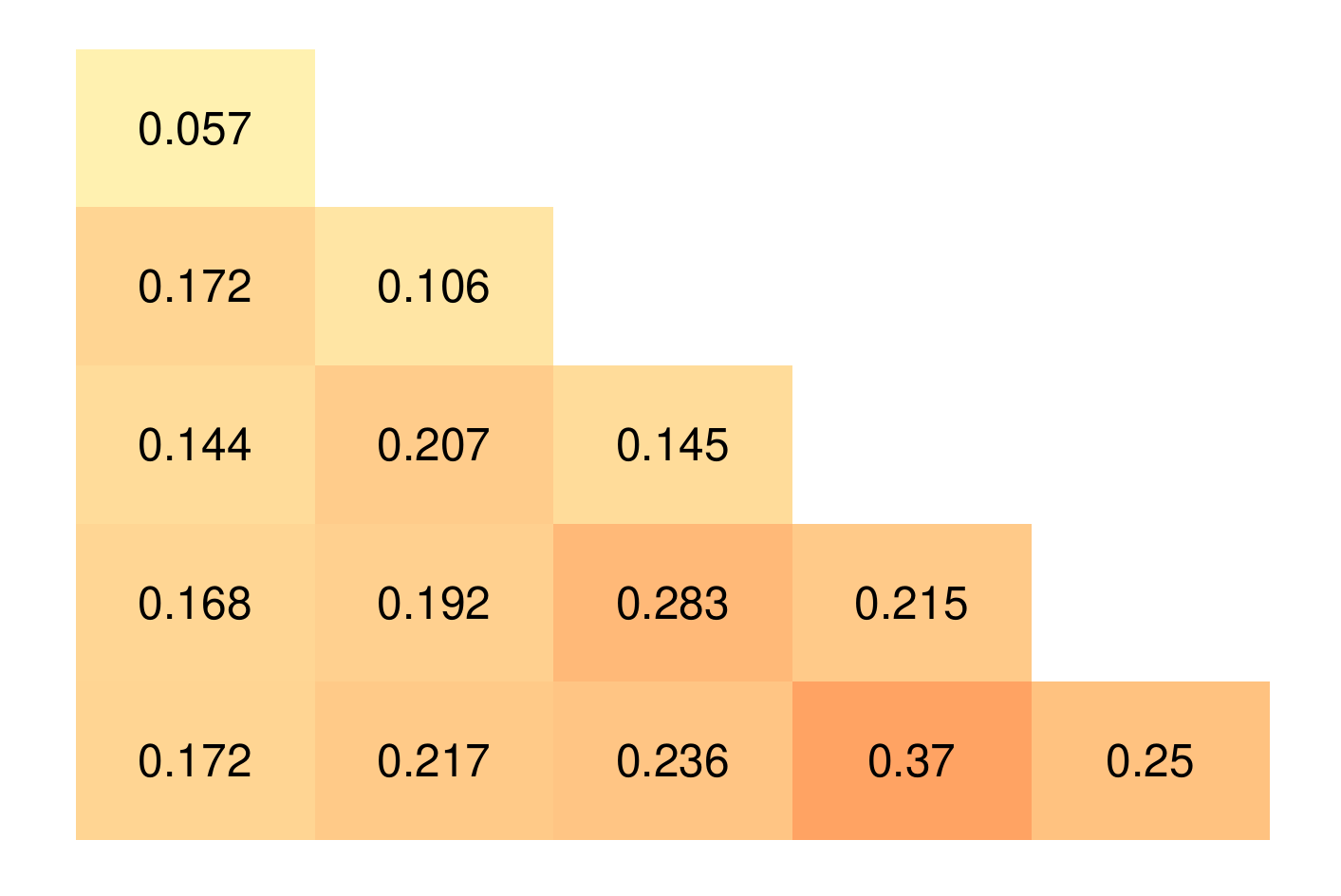}
		(c)
	\end{minipage}%
	\begin{minipage}{.5\textwidth}
		\centering
		\includegraphics[scale=0.2]{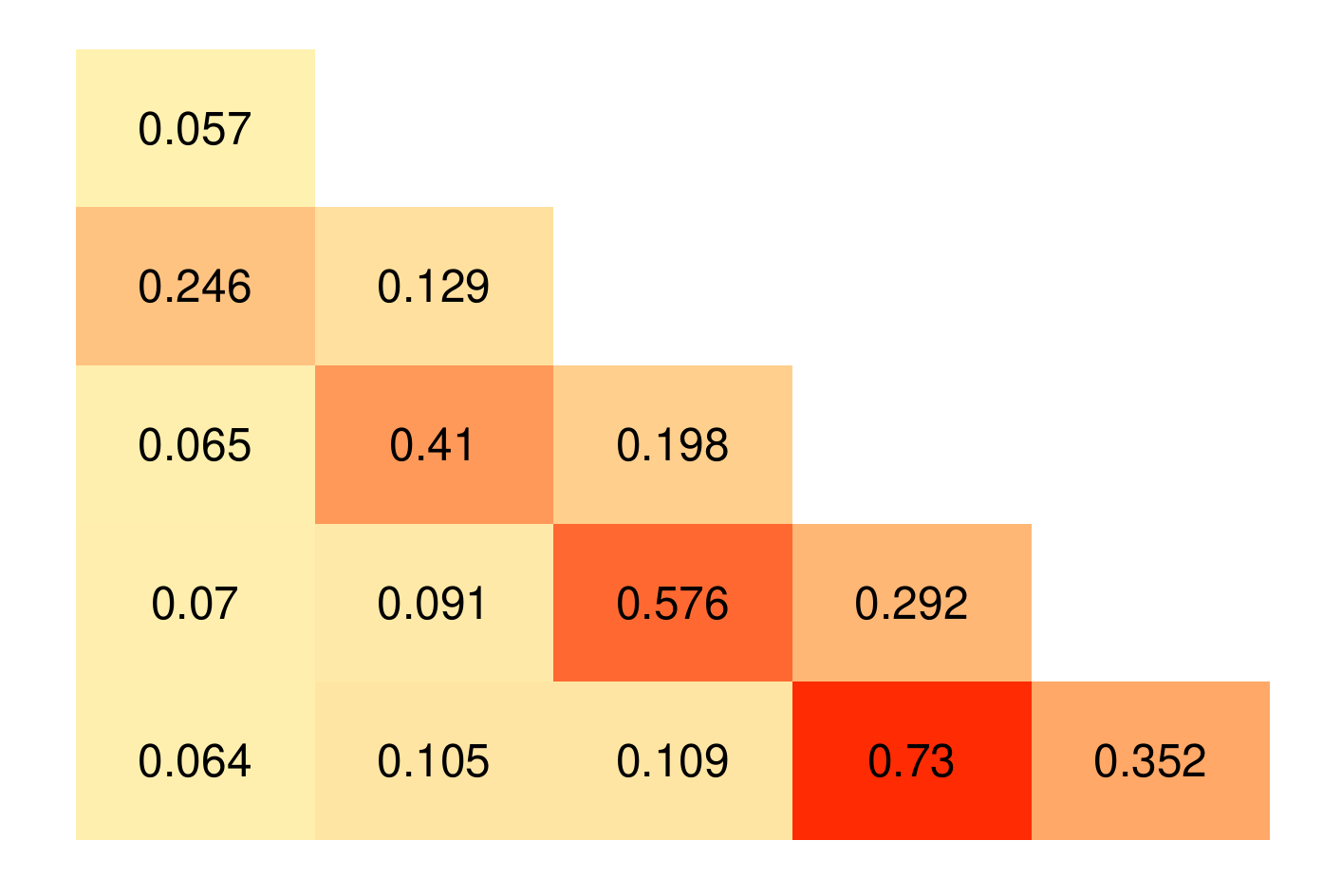}
		(d)
	\end{minipage}
	
	\caption{Root mean square error (RMSE) of the estimated $ L$ matrices. (a) Likelihood without penalization. (b) Likelihood with penalization. (c) Composite likelihood without penalization. (d) Composite likelihood with penalization.}
	\label{Rmse_500sim}
\end{figure}

In Figure \ref{lassovscomp}, we report the accuracy in identifying the zeros in matrix $ L$ when estimating with likelihood and composite likelihood, both with a LASSO penalty. An entry $ (i,j)$ is considered correctly identified if, in the true matrix, the value in that position is zero (non-zero) and the estimated matrix also has a zero (non-zero) at that position, respectively. 

It is observed that the LASSO estimation method with likelihood perfectly identifies the first and second diagonals of matrix $ L $, where the non-zero values are located, while the composite likelihood estimation has a slight error rate in detecting these parameters. On the other hand, the LASSO method does not always identify all the zeros in matrix $ L $, whether estimated with likelihood or composite likelihood, resulting in matrices that generally contain fewer zeros than the true matrix. This is precisely what is seen in the boxplot in Figure \ref{Sparsidad_vs}.

In Table \ref{tab:confusionp5}, the confusion matrices for zero detection in matrix $ L $ are shown, comparing estimation with likelihood and composite likelihood. Notably, the likelihood approach does not yield false negatives, meaning if a value is detected as zero, it is indeed zero. Errors in the likelihood estimation are due to null correlations that the estimator detects as non-zero. Conversely, composite likelihood increases the false negative error rate and generally has a higher error rate in zero detection overall.

In Figure \ref{Rmse_500sim}, the root mean square error (RMSE) of the matrix $ L$ is reported for likelihood estimation (top) without penalization (left) and with penalization (right), and for composite likelihood estimation (bottom) without penalization (left) and with penalization (right). 
It is observed that when estimating with LASSO, the RMSE increases on the first and second diagonals—i.e., where the matrix has non-zero coefficients—and decreases elsewhere. Summing all the RMSE values, we obtain a total RMSE of $1.75$ and $1.34$ for maximum likelihood estimation without and with penalization, respectively, and an RMSE of $2.96$ and $3.5$ for composite likelihood estimation without and with penalization, respectively.

\begin{table}[H]
	\centering
	\begin{tabular}{|c|c|c|c|}
		\hline
		\diagbox[dir=NW]{Method}{$n$ observations} & 100 & 500 & 1000 \\ \hline
		Maximum Likelihood   & $0.61$ &  $5.66$ & $528.40$ \\ \hline
		  Composite Likelihood & $0.45$  & $1.99$ & $5.57$ \\ \hline
	\end{tabular}
	\caption{Computation times (in minutes) for estimation using Maximum Likelihood (first row) and Composite Likelihood (second row).}
	\label{tab:comp_time}
\end{table}

\section{Application}

We now apply our method to a dataset from a mineral exploration campaign conducted in 
a prospective area located in southern Ecuador, consisting of a surface lithogeochemical 
survey with a spacing varying from $100\text{m} \times 400\text{m}$ to $50\text{m} 
\times 50\text{m}$ \citep[the data that support the findings of this study are available on request from the corresponding author, XE]{guartan2021predictive, guartan2021regionalized}. The dataset 
totalizes $3998$ soil or rock samples, each containing information on the concentrations 
of $9$ major elements (reported in \%) and $27$ trace elements (reported in ppm) (Table 
\ref{cuadrito}), as well as on the prevailing lithology (Figure \ref{mapofdata}). The 
dimensions of the study area are approximately $5500\text{m}$ in the east-west direction 
and $6500\text{m}$ in the north-south direction, with elevations ranging from $2490\text{m}$ 
to $4022\text{m}$ above mean sea level.

\begin{figure}[h]
\label{mapofdata}
    \centering
    \includegraphics[scale=0.5]{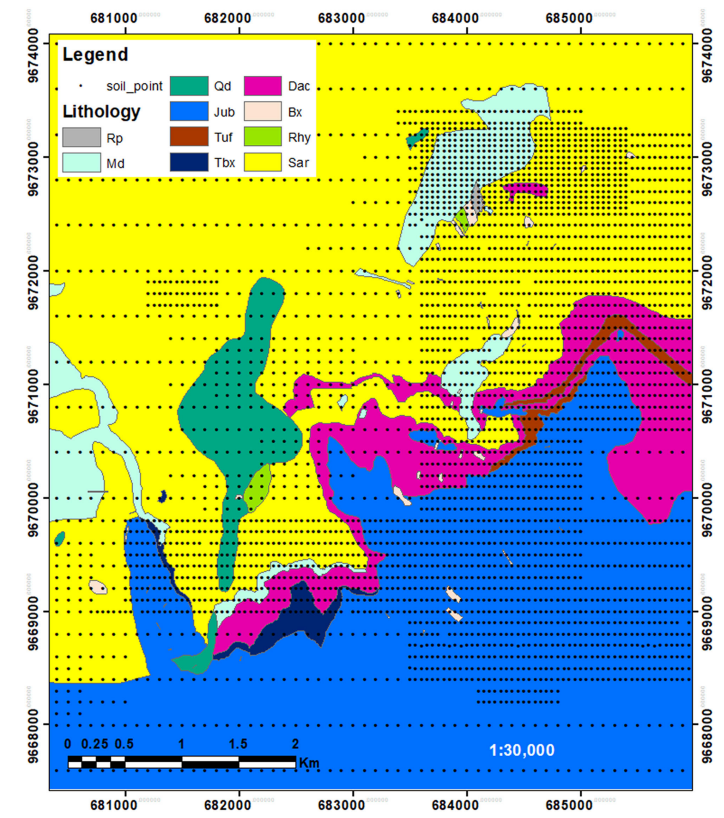}
    \caption{Spatial locations of the surface samples, along with rock types 
    \citep[Figure obtained from][]{guartan2021predictive, guartan2021regionalized}.}
\end{figure}

An empirical normal score transformation was applied to the data to achieve a normal 
distribution for each variable, standardized to have a mean of 0 and variance of 1. 
We focus on the normal score transforms of all $p = 36$ variables, with the primary 
objective of performing cokriging to map the variables of interest: copper (Cu), iron 
(Fe), cobalt (Co), and aluminum (Al). A secondary objective is to identify and remove 
non-relevant cross-correlations, simplifying the cokriging model without compromising 
its predictive capability and reducing the computational burden in estimation and 
prediction processes.

All calculations presented below were performed on a 2×8-Core CPU (Intel Xeon E5-2689 @ 1200--3600 MHz), with 128 GB of RAM, 931 GB of storage, and a 5.4.0-182-generic x86\_64 kernel. To estimate the parameters of the Matérn model, we used the composite likelihood function, as the full likelihood function was computationally infeasible for this dataset size. For the weights $w_{kl}$, we considered the 5 nearest neighbors $NN(5)$.

To account for the measurement error apparent in the data, the covariance model for the $i$-th component is specified as
\begin{equation*}
    \widetilde{C}_{ii}(h) = \tau_i^2 \mathbf{1}_{\{h = 0\}} + C_{ii}(h;\, \sigma_i^2, 
    \alpha_i, \nu),
\end{equation*}
where $\tau_i^2 \geq 0$ is the nugget parameter for the $i$-th variable and $C_{ii}(\cdot)$ is the marginal Mat\'{e}rn covariance as in \eqref{eq:matern_model}. The nugget parameter $\tau_i^2$ is estimated marginally alongside $\sigma_i^2$ and $\alpha_i$ in an extended version of Step~1 of Algorithm~\ref{alg:gen}, by maximizing the individual log-likelihood of each component under independence. For the cross-covariance terms $C_{ij}$ with $i \neq j$, no nugget component is included, as the nugget is treated as a purely idiosyncratic measurement error that is uncorrelated across variables. This assumption is standard in geochemical applications \citep{wackernagel2003multivariate}, where measurement errors for distinct elements are generally produced by independent analytical procedures. The smoothness parameter is fixed at $\nu = 0.5$ for all components, consistent with the fast increase of the direct variograms at small distances
(Figure~\ref{variograms_interest}). Table~\ref{Tab:Varianza_y_Nugget} of the supplementary material reports the estimated nugget $\hat{\tau}_i^2$, partial sill $\hat{\sigma}_i^2$, and nugget-to-sill ratio $\hat{\tau}_i^2/(\hat{\tau}_i^2 + \hat{\sigma}_i^2)$ for each of the $36$ variables. In particular, the four variables of interest --- Cu, Fe, Co, and Al --- exhibit nugget-to-sill ratios of $6.3\%$, $65.7\%$, 
$13.9\%$, and $35.9\%$, respectively. The large nugget for $Fe$ suggests that a substantial fraction of its total variability is attributable to measurement error or sub-sample-spacing spatial structure, which has direct implications for the cokriging prediction variances of that variable.

Prior to the estimation procedure, the dataset was partitioned by randomly selecting 800 spatial locations to be used for prediction (test set), while the remaining observations (approximately $80\%$ of the data) were used for model estimation.

Regarding the LASSO penalty, we generated a sequence of 100 $\lambda$ values and estimated the parameters of the multivariate Matérn model, which took a total of $21.82$ hours. The CLIC criterion was used to select the optimal $\lambda$. To obtain the bootstrap estimator of the matrix $\hat{J}(\hat{\theta})$, we initially took 120 subsamples of 50 pairs of points each for the first 50 estimations. Due to high computational costs, we then reduced each subsample to only 10 pairs of points and took a total of 100 subsamples for the remaining 50 estimations. This process took a total of 13.325 days.

In Figure \ref{Lineas}, the solution paths for the covariances of the 36 variables with 
the 4 variables of interest are shown as a function of $\lambda$. The composite likelihood 
function clearly separates variables whose cross-covariance with the variables of interest 
decays rapidly to zero from those that remain non-zero at the optimal $\lambda$, providing 
an interpretable sparsity structure. Figure \ref{Sparsidad_tot} shows the percentage of 
zeros in matrices $L$ and $\Psi$ and the CLIC criterion as a function of $\lambda$. The 
optimal $\lambda$ was selected when matrices $L$ and $\Psi$ had $89.78\%$ and $52.31\%$ 
zero entries, respectively.

It should be noted that these data could not be analyzed without penalization because 
storing the full covariance matrix would have required over 130 GB, whereas the penalized 
estimation at the optimal $\lambda$ required a maximum of only 1.3 GB, demonstrating the 
critical role of sparsity in making this problem computationally tractable.

\begin{figure}
    \centering
    \begin{minipage}{.5\textwidth}
        \centering
        \includegraphics[scale=0.22]{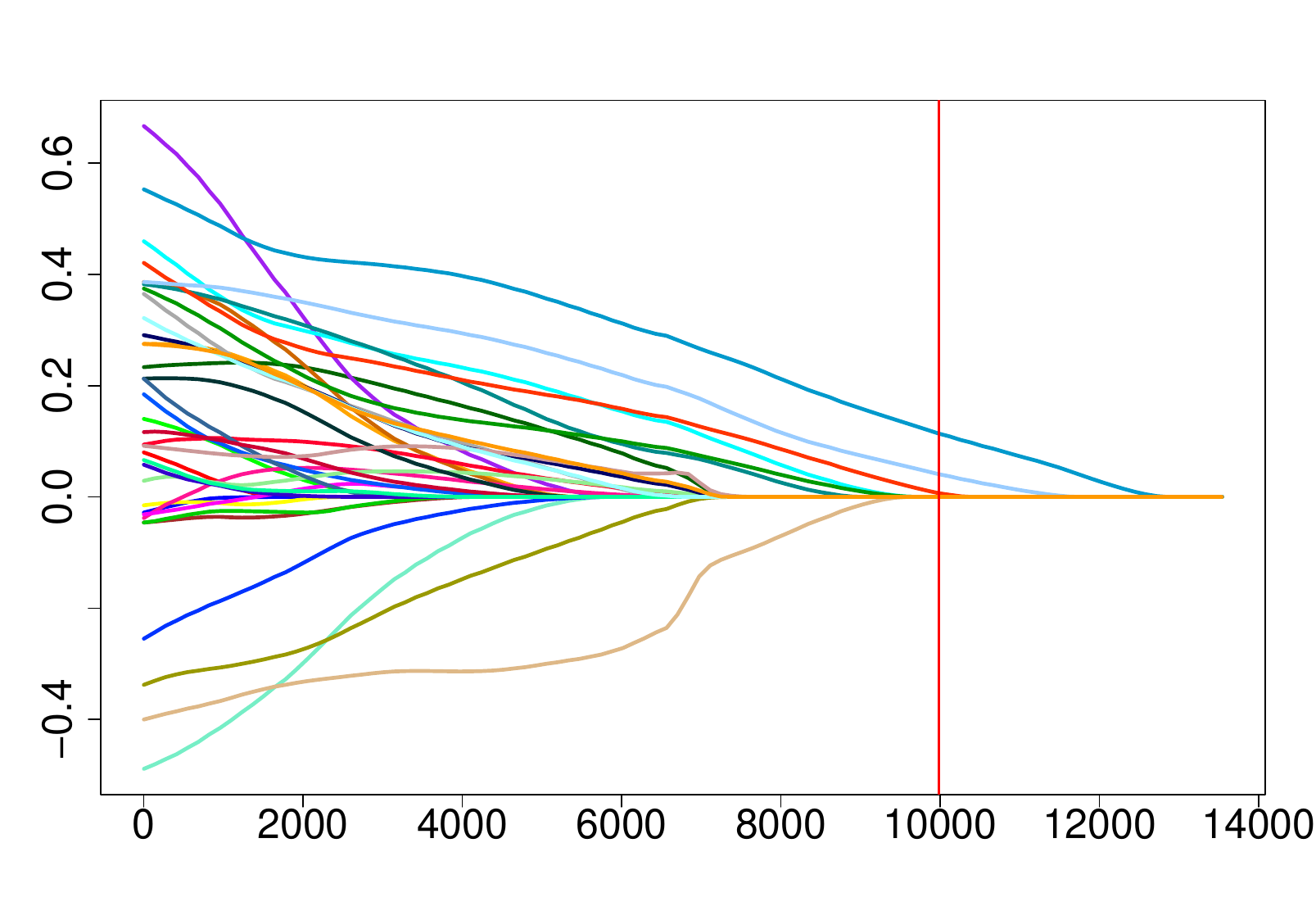}
        (a) Cu
    \end{minipage}%
    \begin{minipage}{.5\textwidth}
        \centering
        \includegraphics[scale=0.22]{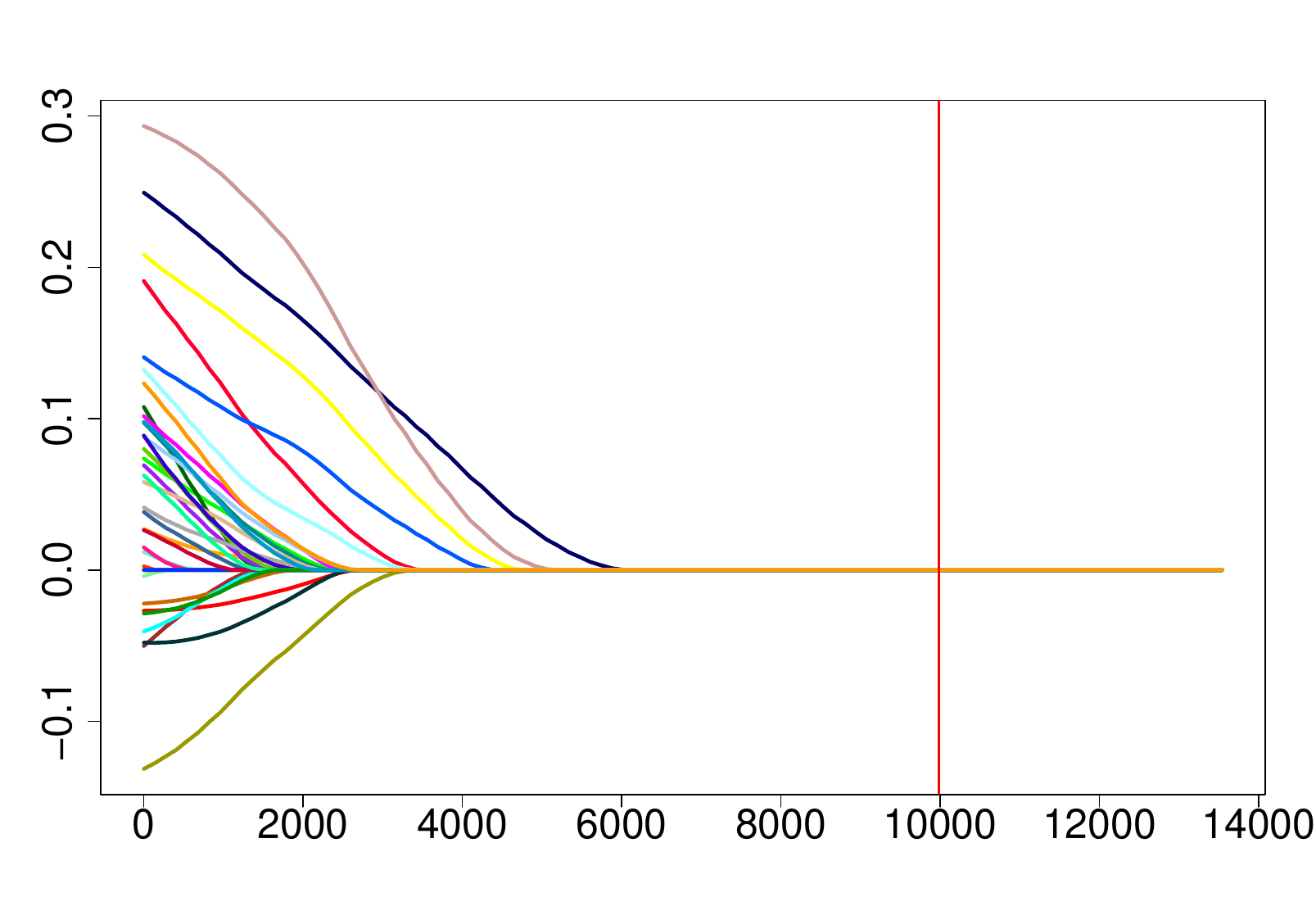}
        (b) Fe
    \end{minipage}
    
    \begin{minipage}{.5\textwidth}
        \centering
        \includegraphics[scale=0.22]{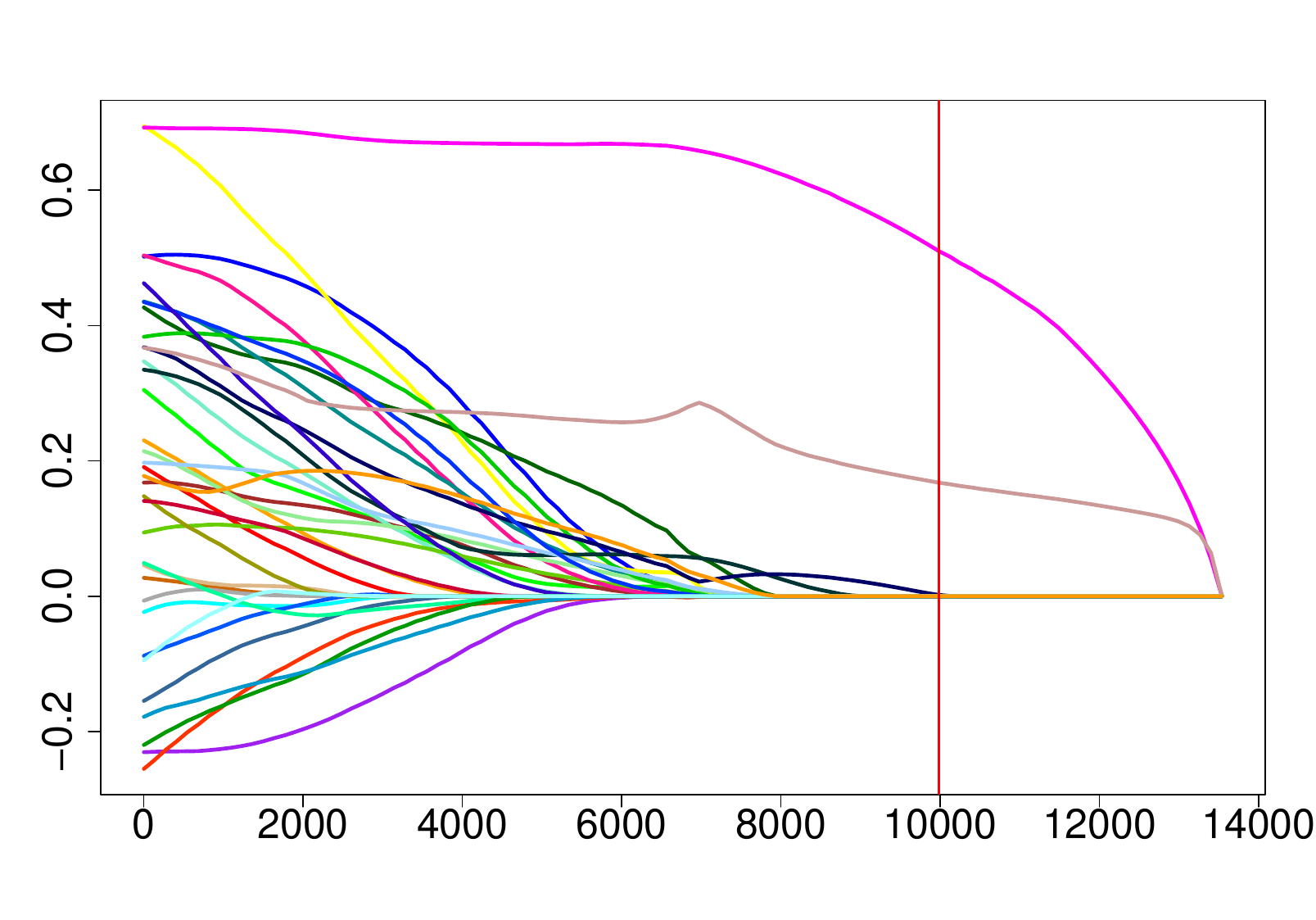}
        (c) Co
    \end{minipage}%
    \begin{minipage}{.5\textwidth}
        \centering
        \includegraphics[scale=0.22]{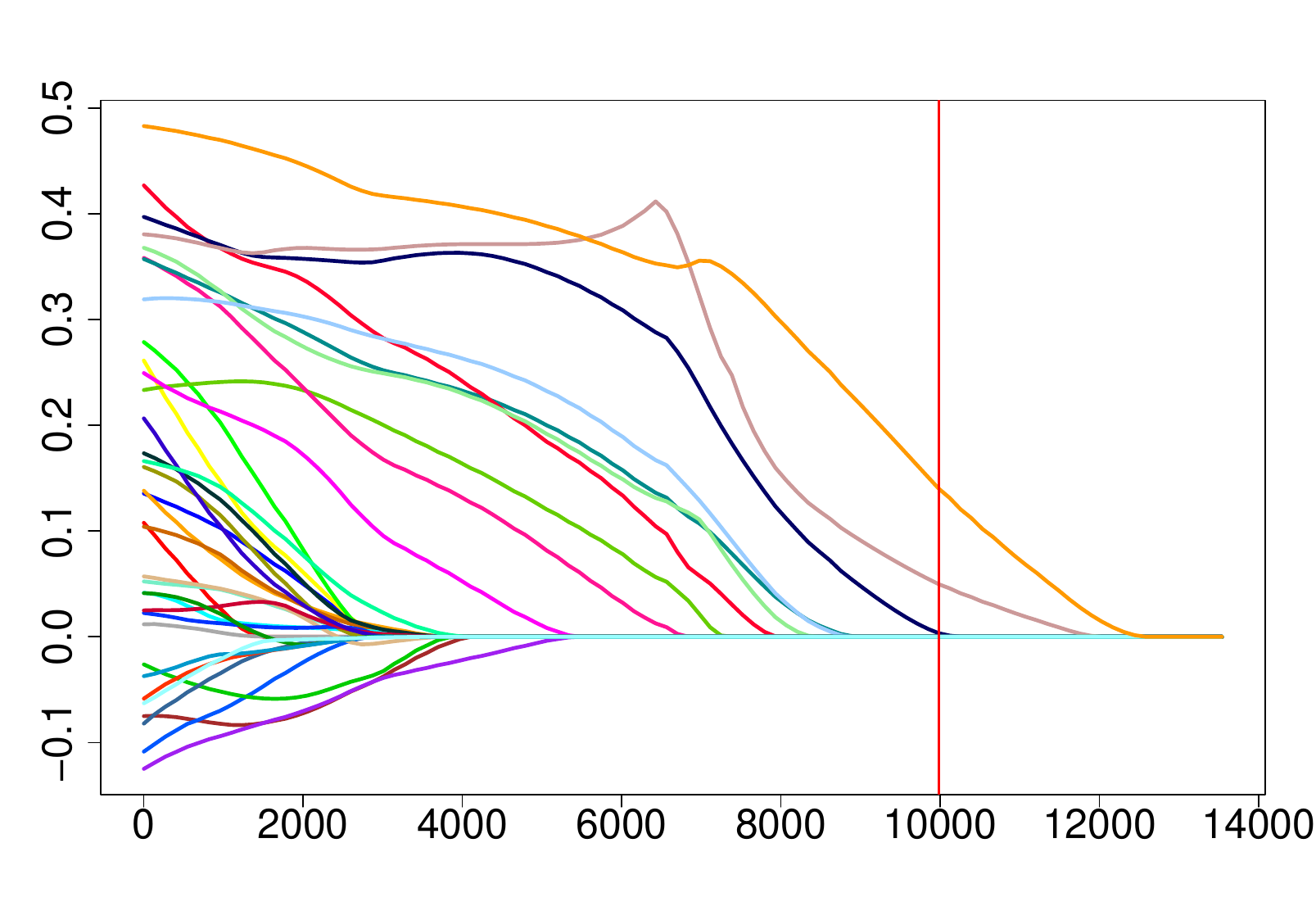}
        (d) Al
    \end{minipage}
    
    \caption{Solution paths of covariances of the variables of interest with respect to 
    other variables, estimated using the composite likelihood function. The optimal 
    $\lambda$ is shown in red.}\label{Lineas}
\end{figure}

\begin{figure}
    \centering
    \begin{minipage}{.32\textwidth}
        \centering
        \includegraphics[scale=0.15]{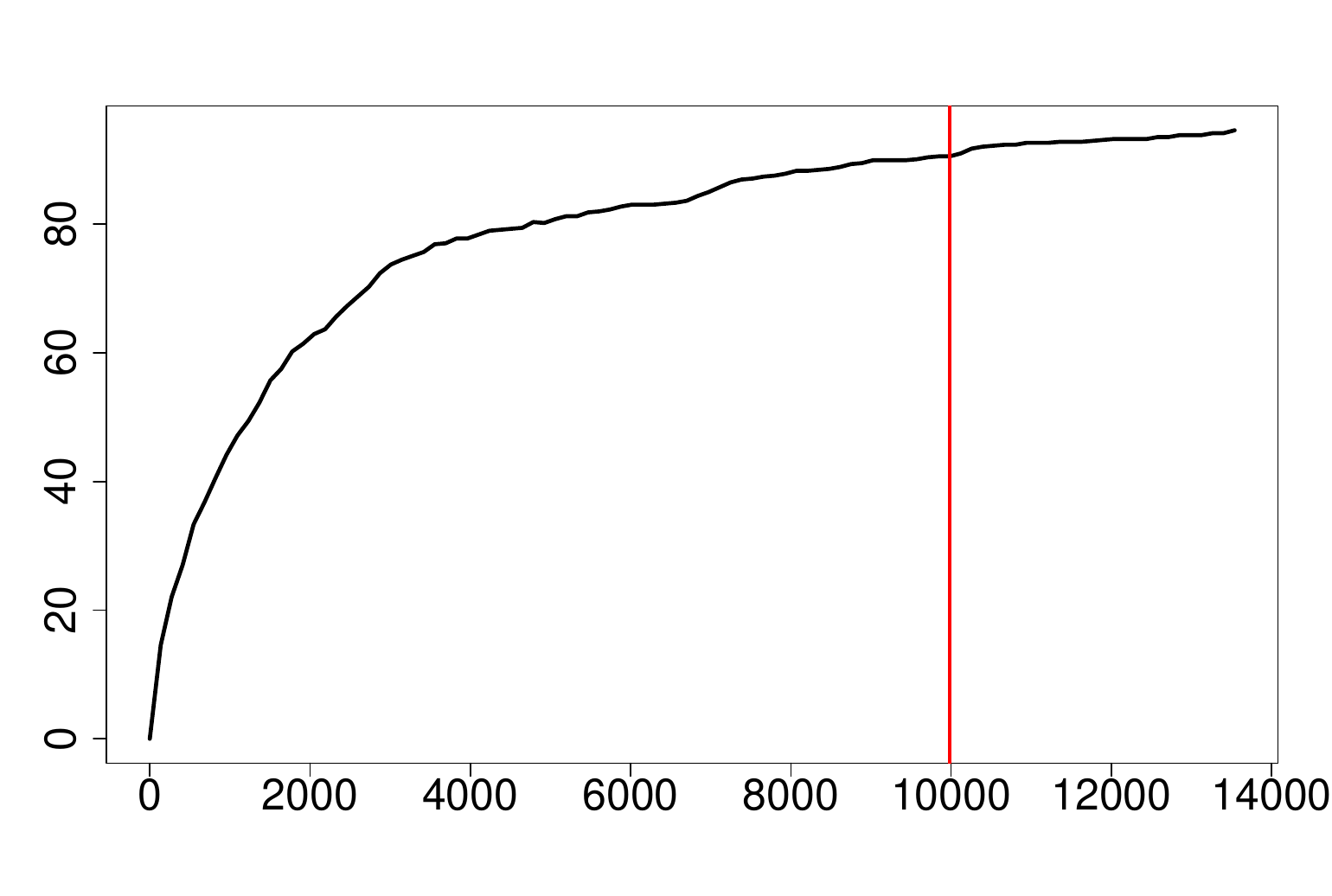}
    \end{minipage}%
    \begin{minipage}{.32\textwidth}
        \centering
        \includegraphics[scale=0.15]{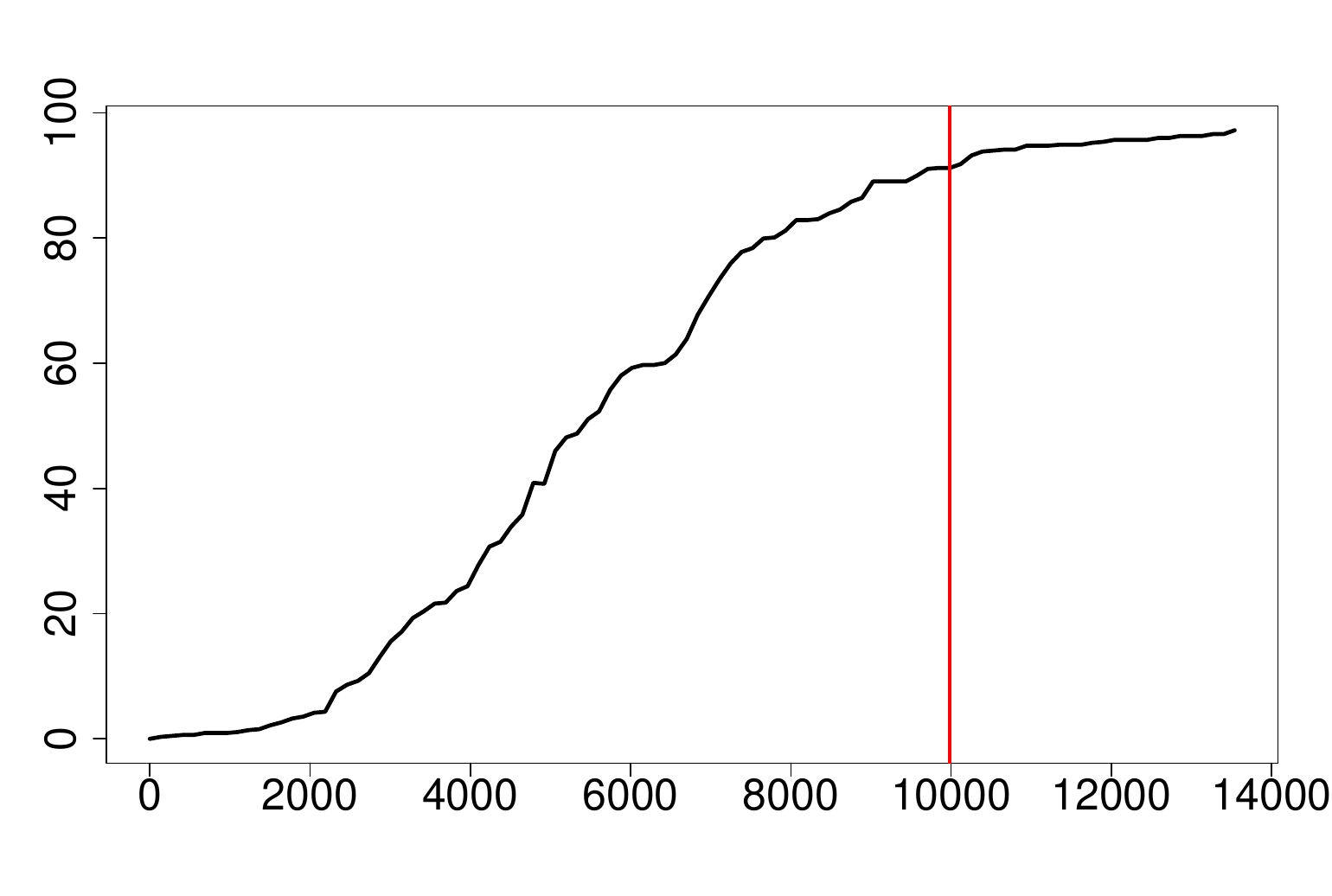}
    \end{minipage}
    \begin{minipage}{.32\textwidth}
        \includegraphics[scale=0.15]{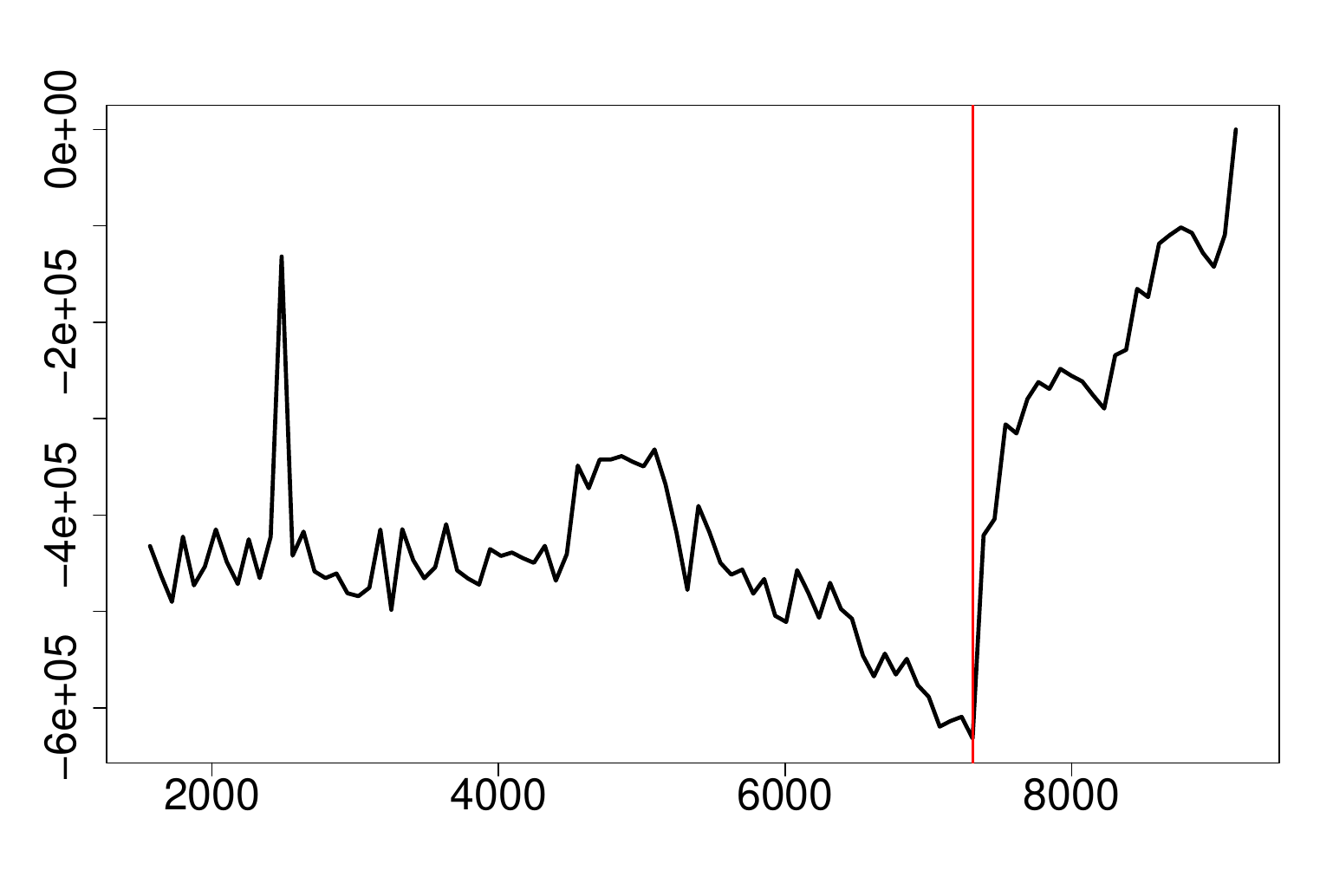}
    \end{minipage}
    \caption{Left: Percentage of zeros in matrix $L$ relative to $\lambda$. Center: 
    Percentage of zeros in matrix $\Psi$ relative to $\lambda$. Right: CLIC criterion 
    relative to $\lambda$. The optimal $\lambda$ is shown in red.}\label{Sparsidad_tot}
\end{figure}

After determining the zero values in matrix $L$, cokriging was performed using the retention method on the previously defined subset of the data, with 800 spatial locations used for prediction of the four variables of interest and the remaining data used for estimation.

Table \ref{tab:rmse_varianza} shows the root mean square errors (RMSE) and estimated 
cokriging error variances for the four variables of interest using the composite 
likelihood method with the CLIC-selected $\lambda$. The cokriging results for copper 
(Cu) are shown in Figure \ref{predicciones_final}.

\begin{table}[H]
    \centering
    \begin{tabular}{|c|c|c|c|c|}
        \hline
        \textbf{Variables} & \textbf{Cu} & \textbf{Fe} & \textbf{Co} & \textbf{Al} \\
        \hline
        \textbf{RMSE} & 0.7357 & 0.8918 & 0.7631 & 1.2909 \\
        \hline
        \textbf{Standard Deviation} & 0.1412 & 0.0991 & 0.1019 & 0.1648 \\
        \hline
    \end{tabular}
    \caption{Root Mean Square Error (RMSE) and standard deviation of cokriging 
    predictions for the variables of interest at the test set.}\label{tab:rmse_varianza}
\end{table}

\begin{figure}[H]
    \centering
    \begin{minipage}{0.33\textwidth}
        \centering
        \includegraphics[width=\linewidth]{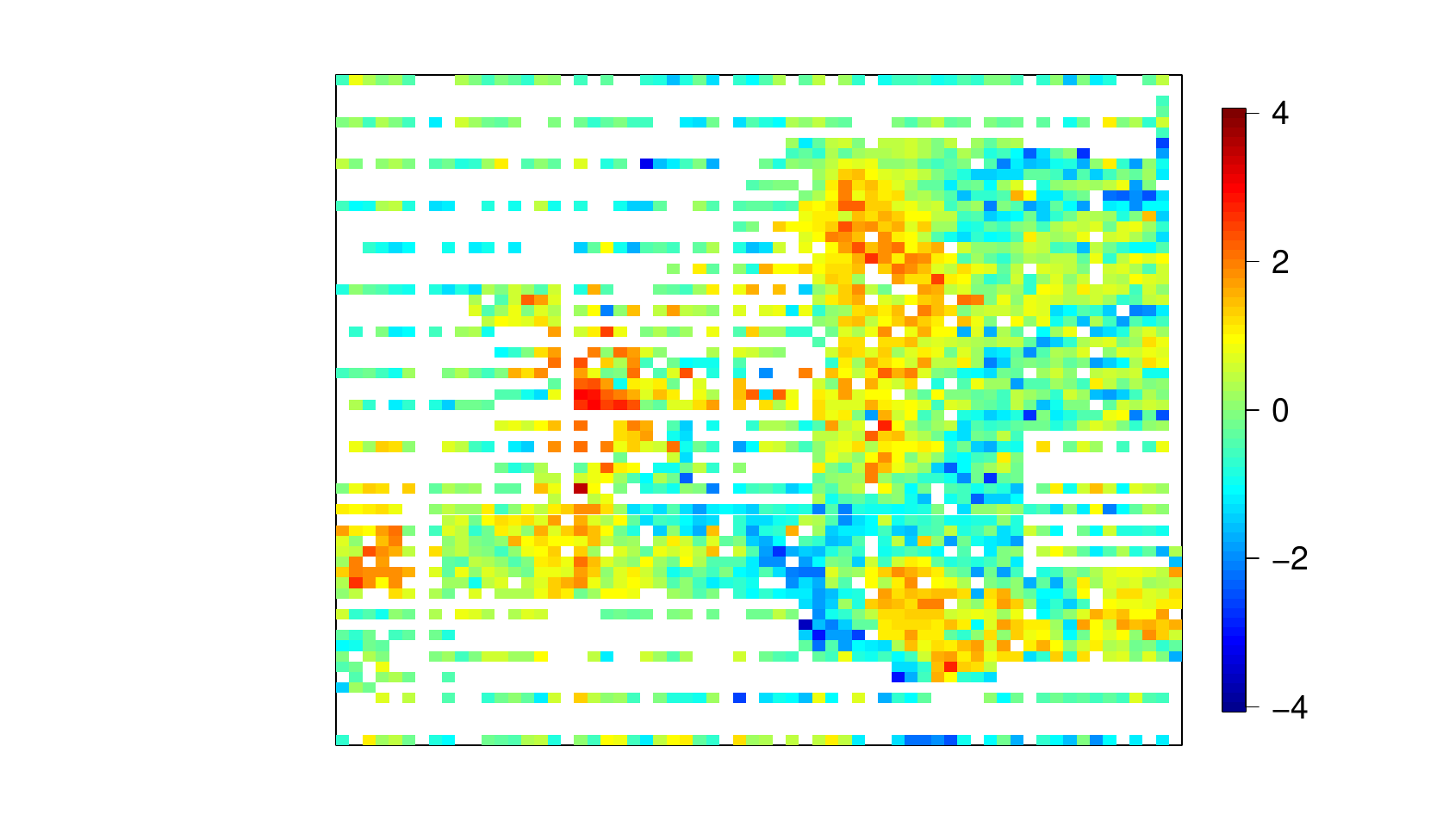}
        (a)
    \end{minipage}\hfill
    \begin{minipage}{0.33\textwidth}
        \centering
        \includegraphics[width=\linewidth]{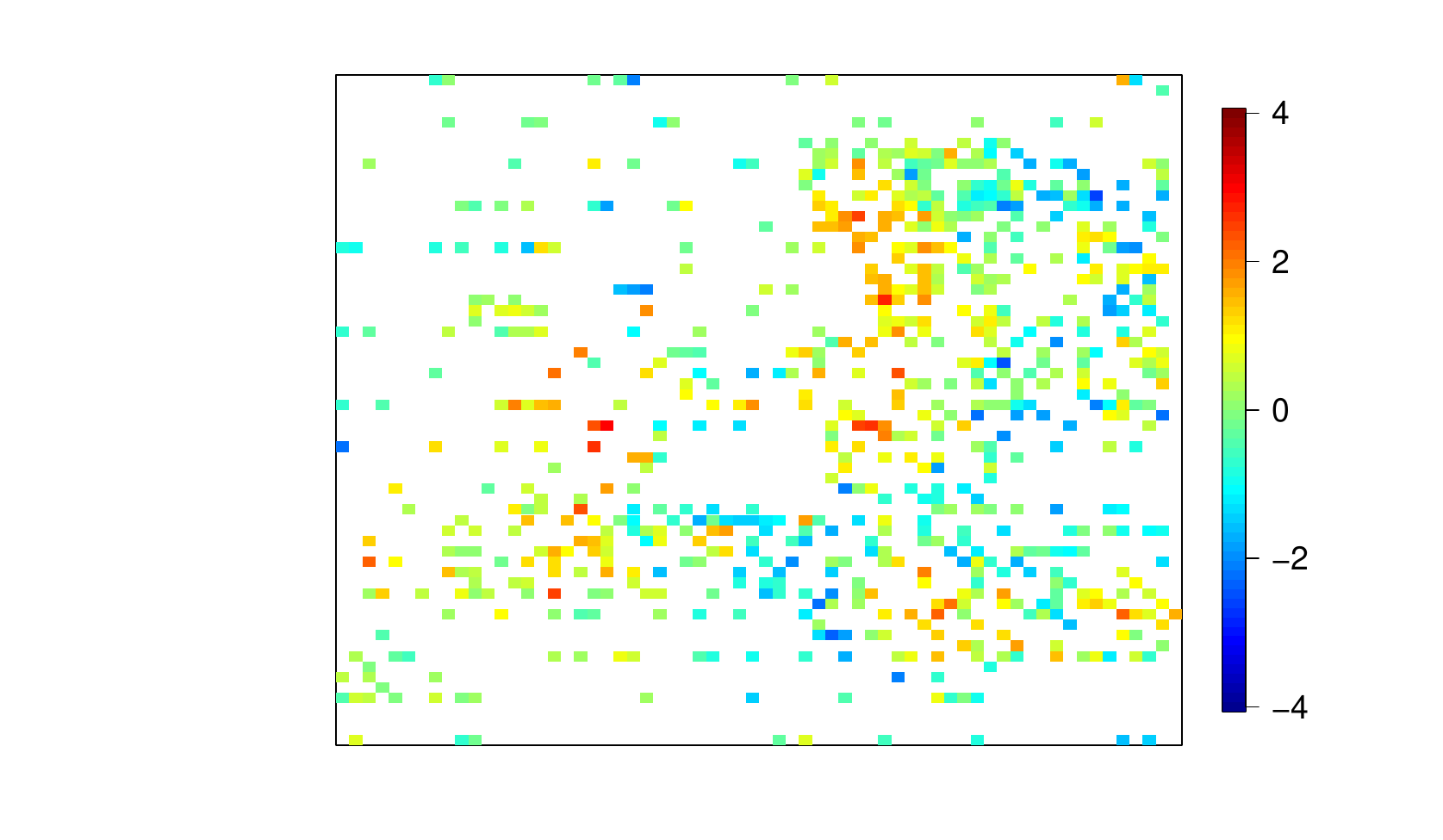}
        (b)
    \end{minipage}\hfill
    \begin{minipage}{0.33\textwidth}
        \centering
        \includegraphics[width=\linewidth]{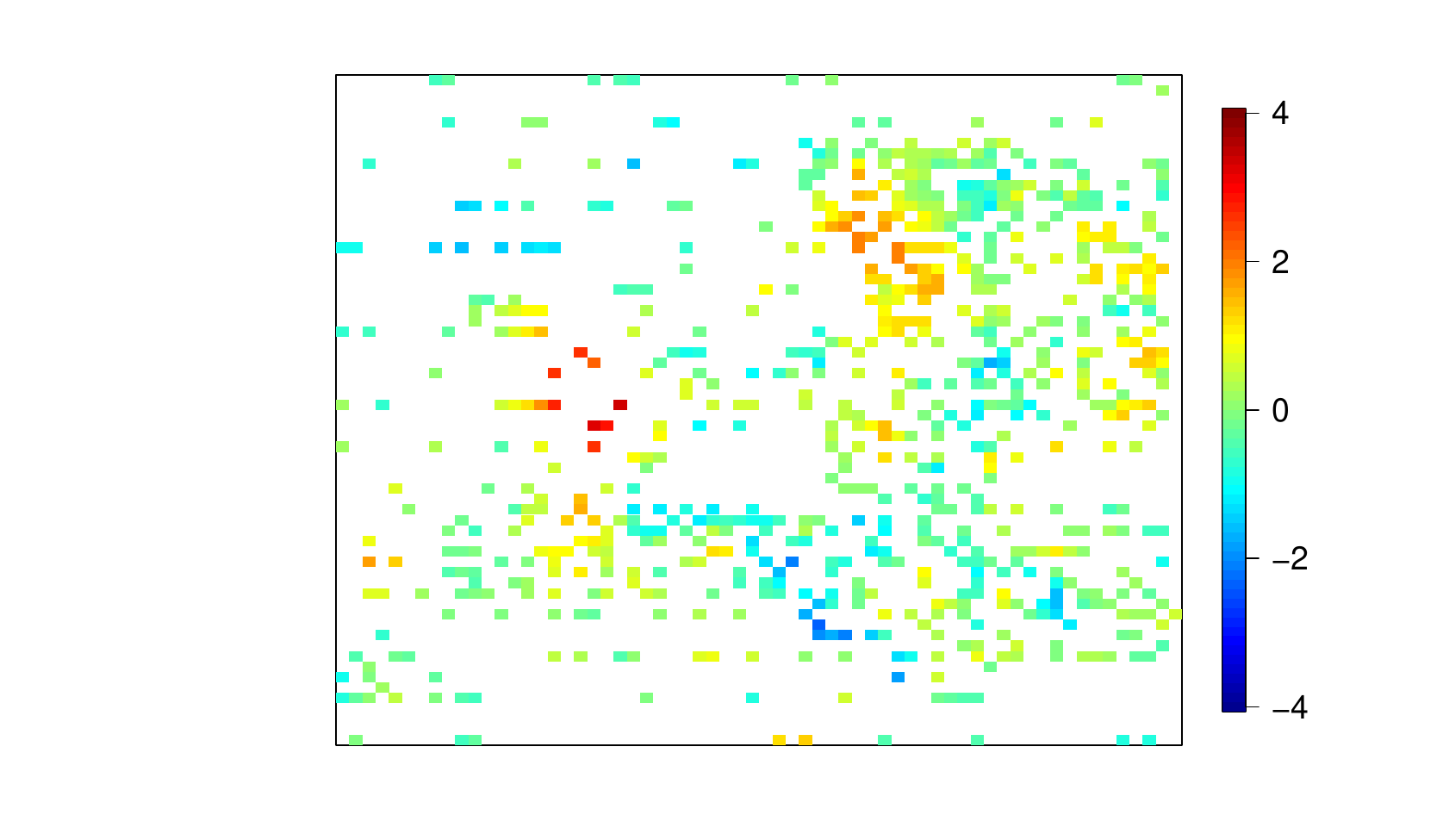}
        (c)
    \end{minipage}
     
    \caption{Cokriging predictions for copper (Cu). (a) Training data. (b) Test 
    data. (c) Prediction using the model selected by CLIC.}
    \label{predicciones_final}
\end{figure}

\section{Conclusions}

This paper has addressed the challenging problem of estimating covariance parameters for highly multivariate spatial Gaussian random fields, where traditional maximum likelihood approaches become computationally infeasible due to the rapid growth in the number of parameters and the high dimensionality of the problem. We have proposed a regularized estimation framework that combines LASSO-type penalization with a projected block coordinate descent algorithm, enabling both parameter estimation and correlation structure identification in high-dimensional settings.

We developed a tailored optimization algorithm that exploits the natural block structure of multivariate Matérn covariance parameters, ensuring that parameter constraints are satisfied at each iteration through appropriate projections onto convex parameter spaces. This approach maintains computational tractability while guaranteeing valid covariance models. Second, we introduced a LASSO penalty on the Cholesky decomposition matrix $L$ of the covariance matrix $\Psi$, which directly induces sparsity in the correlation structure and allows for automatic identification of significant cross-variable relationships. Third, we adapted information criteria (AIC for likelihood, CLIC for composite likelihood) to select the optimal regularization parameter $\lambda$, providing a data-driven approach to balance model complexity and fit.

Our simulation studies demonstrate the effectiveness of the proposed methodology. In the 5-variable experiments, the LASSO-penalized estimator successfully identified zero correlations with high accuracy, particularly when using the full likelihood function, though the composite likelihood approach showed slightly higher false negative rates while offering substantial computational savings. Importantly, our results show that penalized estimation can reduce the overall RMSE by removing spurious correlations, leading to more parsimonious and interpretable models. 

Several limitations and opportunities for future research merit discussion. First, our approach fixes the smoothness parameter $\nu$ to avoid identifiability issues, which is appropriate for many applications but may be restrictive in others. Future work could explore data-driven methods for smoothness parameter selection or develop adaptive penalization schemes that account for varying degrees of smoothness across variables. Second, while our LASSO penalty effectively identifies zero correlations, it tends to produce conservative estimates that may include some spurious non-zero correlations (false positives) to avoid missing true relationships. Alternative penalty functions, such as adaptive LASSO or SCAD, might improve selection accuracy. Third, the selection of weights in the composite likelihood function deserves further investigation, as different weighting schemes may affect both computational efficiency and statistical properties.

Extensions of this work could consider non-stationary multivariate models, where parameters vary spatially, or space-time settings where temporal dependencies must also be modeled. The block coordinate descent framework could naturally accommodate such extensions through appropriate parameterizations. Additionally, incorporating prior knowledge about correlation structures, such as through graphical model constraints or structured penalties, could further improve estimation in specific application domains.

\if0\blind
{
\section{Acknowledgements}
Francisco Cuevas-Pacheco was funded and supported by Agencia Nacional de Investigaci\'on y Desarrollo (Chile), grant ANID/FONDECYT/INICIACION 11240330. Gabriel Riffo acknowledge funding by ANID BECAS/MAGISTER NACIONAL 22230570, by UTFSM through the Programa de Iniciación a la Investigación Científica No. 017/23 and by the Deutsche Forschungsgemeinschaft (DFG, German Research Foundation) – CRC/TRR 388 ``Rough Analysis, Stochastic Dynamics and Related Fields'' – Project ID 516748464. Xavier Emery was funded and supported by Agencia Nacional de Investigaci\'on y Desarrollo (Chile), grant ANID AMTC CIA250010.
} \fi

\bibliographystyle{apalike}
\bibliography{biblioord}

\bigskip
\begin{center}
{\large\bf SUPPLEMENTARY MATERIAL}
\end{center}

\appendix

\section{Derivative Calculation}

If we differentiate the likelihood function with respect to $\theta$, we obtain the 
following expression:
\begin{align}
    \frac{\partial \ell}{\partial \theta}(\theta;\bold{z}_n) &= -\frac{1}{2} 
    \text{tr}\left(\Sigma^{-1}(\theta) \dfrac{\partial \Sigma}{\partial \theta}(\theta)
    \right) - \bold{z}_n^{\top} \Sigma^{-1} \dfrac{\partial \Sigma}{\partial \theta}
    (\theta) \Sigma^{-1}(\theta) \bold{z}_n.
\end{align}
On the other hand, if we differentiate the function $cl^{p}_{kl}$ with respect to 
$\theta$, we get:
\begin{align*}
    \dfrac{\partial cl^{p}_{kl}}{\partial \theta}(\theta;\bold{z}_n) &= -\frac{1}{2} 
    \text{tr}\left(\left(\Sigma^{(kl)}(\theta)\right)^{-1} \Sigma^{(kl)}(\theta)
    \right) \\
    &+ (\bold{z}_n(s_k), \bold{z}_n(s_l))^{\top} \left(\Sigma^{(kl)}(\theta)\right)^{-1} 
    \dfrac{\partial \Sigma}{\partial \theta}(\theta) \left(\Sigma^{(kl)}(\theta)\right)^{-1} 
    (\bold{z}_n(s_k), \bold{z}_n(s_l)).
\end{align*}
Since the function $pl_p$ is the sum of the $cl^{p}_{kl}$, to calculate the derivative 
of $pl_p$ with respect to $\theta$, we simply sum the derivatives of the different 
$cl^{p}_{kl}$.

For the standard and composite likelihood functions, we obtain similar expressions, both 
for the function itself and its derivative. Therefore, in the rest of this section, we 
will assume that we are optimizing the standard likelihood function, though the procedure 
is analogous for the case of the composite likelihood, unless otherwise specified.

Assuming we know $\dfrac{\partial \Sigma}{\partial \theta}(\theta)$ to calculate 
$\dfrac{\partial \ell}{\partial \theta}$, we must compute:
\begin{align}
    \text{tr}\left(\Sigma^{-1}(\theta) \dfrac{\partial \Sigma}{\partial \theta}(\theta)
    \right) \label{trace} \\
    \bold{z}_n^{\top} \Sigma^{-1}(\theta) \dfrac{\partial \Sigma}{\partial \theta}(\theta) 
    \Sigma^{-1}(\theta) \bold{z}_n. \label{quadraticform}
\end{align}

Let $\left(\Sigma(\theta)^{-1}\right)_{kl}$ and $\left(\dfrac{\partial \Sigma}{\partial 
\theta}(\theta)\right)_{kl}$ be the $(k,l)$ entries of $\Sigma(\theta)^{-1}$ and 
$\dfrac{\partial \Sigma}{\partial \theta}(\theta)$.

On the one hand, the expression \eqref{trace} can be computed as follows:
\begin{align}
    \text{tr}\left(\Sigma^{-1}(\theta) \dfrac{\partial \Sigma}{\partial \theta}(\theta)
    \right) &= \sum_{k=1}^{np} \sum_{l=1}^{np} 
    \left(\Sigma(\theta)^{-1} \right)_{kl} \left(\dfrac{\partial \Sigma}{\partial \theta}
    (\theta)\right)_{lk}  \nonumber \\
    &= \sum_{k=1}^{np} \sum_{l=1}^{np} \left(\Sigma(\theta)^{-1} 
    \right)_{kl} \left(\dfrac{\partial \Sigma}{\partial \theta}(\theta)\right)_{kl} 
    \quad \quad \text{by symmetry of } \Sigma(\theta) \nonumber \\
    &= \sum_{k=1}^{np} \sum_{l=1}^{n^2p} \left(\Sigma(\theta)^{-1} 
    \odot \dfrac{\partial \Sigma}{\partial \theta}(\theta)\right)_{kl},  
    \label{hadamard}
\end{align}
where $\odot$ is the Hadamard product. Instead of the matrix product $\Sigma^{-1}(\theta) 
\dfrac{\partial \Sigma}{\partial \theta}(\theta)$, which requires $\mathcal{O}((np)^{3})$ 
operations, the Hadamard product requires $\mathcal{O}((np)^2)$ operations, thus reducing 
the computational complexity of this operation. Nevertheless, it is still necessary to 
calculate the inverse of matrix $\Sigma(\theta)$. If we use the Cholesky decomposition 
to calculate this inverse, we still need a total of $\mathcal{O}((np)^3)$ operations, so 
the problem, in general, retains cubic complexity. Specifically, we first calculate the 
Cholesky factor $U(\theta)$ such that $\Sigma(\theta) = U(\theta)^{\top} U(\theta)$, 
with $U(\theta)$ being an upper triangular matrix. We then calculate $U(\theta)^{-1}$, 
the inverse of $U(\theta)$, and finally compute $\Sigma(\theta)^{-1} = U(\theta)^{-1}
(U(\theta)^{-1})^{\top}$.

On the other hand, to calculate \eqref{quadraticform}, we use the inverse 
$\Sigma(\theta)^{-1}$ obtained when calculating \eqref{trace}. Our goal, just as in 
the previous expression, will be to avoid matrix multiplication, as this has cubic 
complexity.

To achieve this, we first calculate vectors $\bold{z}_1 = \Sigma(\theta)^{-1} \bold{z}_n$ 
and $\bold{z}_2 = \dfrac{\partial \Sigma}{\partial \theta}(\theta) \bold{z}_1$, and 
finally compute:
\begin{equation*}
    \bold{z}_n^{\top} \Sigma(\theta)^{-1} \dfrac{\partial \Sigma}{\partial \theta}
    (\theta) \Sigma(\theta)^{-1} \bold{z}_n = \bold{z}_1^{\top} \bold{z}_2.
\end{equation*}

This gives us that, in total, the derivative calculation has a cubic cost, due to the 
calculation of the Cholesky decomposition $\Sigma(\theta) = U(\theta)^{\top} U(\theta)$ 
and its subsequent inverse. The LASSO penalty produces sparse solutions in the matrix 
$\Psi = \mathbf{LL}^{\top}$, and thus in the matrix $\Sigma(\theta)$. Therefore, to reduce 
computation time, we can apply methods that perform Cholesky decomposition for sparse 
matrices, such as the methods available in the libraries \texttt{Matrix} \citep{Matrix}, \texttt{spam} \citep{spam}, 
and \texttt{SparseM} \citep{SparseM}.

\subsection{Objective Function Calculation}

As in the previous section, the calculation of the objective function is quite similar 
for both the likelihood case and the composite likelihood case. For this reason, we will 
focus on the former case and later explain the differences to consider for the latter case.

The function to compute is:
\begin{equation*}
    \ell(\theta;\bold{z}_n) = -\frac{np}{2} \ln(2\pi) - \frac{1}{2} \ln(\left|
    \Sigma(\theta)\right|) - \frac{1}{2}\bold{z}_n^{\top} \Sigma(\theta)^{-1} 
    \bold{z}_n.
\end{equation*}

Assume we have the Cholesky decomposition $\Sigma(\theta) = U(\theta)^{\top} U(\theta)$, 
with $U(\theta)$ being an upper triangular matrix. To compute the determinant, note that:
\begin{equation*}
    \left|\Sigma(\theta)\right| = \left| U(\theta)^{\top} U(\theta) \right| = \left| 
    U(\theta)^{\top} \right| \left| U(\theta) \right| = \prod_{i=1}^{np} 
    U_{ii}(\theta)^2.
\end{equation*}

To calculate the quadratic form $\bold{z}_n^{\top} \Sigma(\theta)^{-1} \bold{z}_n$, 
assume we already have the matrix $U(\theta)$, which was calculated in the previous part. 
Then, we want to compute:
\begin{equation*}
\bold{z}_n^{\top} \Sigma(\theta)^{-1} \bold{z}_n = \bold{z}_n^{\top} U^{-1} \left( 
U(\theta)^{\top} \right)^{-1} \bold{z}_n = \left|\left| \left( U(\theta)^{\top} 
\right)^{-1} \bold{z}_n \right|\right|^2.
\end{equation*}
To avoid inverting $U(\theta)$, we solve the following system $U(\theta)^{\top} z = Z_n$, 
which requires $\mathcal{O}((np)^2)$ operations insofar as the matrix $U(\theta)^{\top}$ 
is lower triangular.

\subsection{Updating $L$}

Using the parameterization from Equation (2) of the main document, the multivariate Matérn model can be written as
\begin{equation*}
    C_{ij}(h) := \sqrt{\sigma_i^2 \sigma_j^2} \, \rho_{ij} \, 
    \frac{\alpha_{ii}^{\nu_{ii}}\alpha_{jj}^{\nu_{jj}}}{\alpha_{ij}^{2\nu_{ij}}} 
    \frac{\Gamma(\nu_{ij})}{\sqrt{\Gamma(\nu_{ii}) \Gamma(\nu_{jj})}} 
    \mathcal{M}(h; \alpha_{ij}, \nu_{ij}),
\end{equation*}
for $i,j=1,\dots,p$.

Define matrices $\mathcal{M}_{ij}$ as
\[
(\mathcal{M}_{ij})_{kl} = \mathcal{M}(\|s_k-s_l\|;\alpha_{ij},\nu_{ij}), \quad k,l=1,\dots,n,
\]
and the following matrices:
\[
\begin{aligned}
&\Psi_{ij} = \sigma_{ij}, \quad
A_{ii} = \alpha_{ii}^{\nu_{ii}}, \, A_{ij}=0 \ (i\neq j), \\
&\Gamma_{ii} = \frac{1}{\sqrt{\Gamma(\nu_{ii})}}, \, \Gamma_{ij}=0 \ (i\neq j), \quad
\overline{A}_{ij} = \frac{1}{\alpha_{ij}^{2\nu_{ij}}}, \quad
\overline{\Gamma}_{ij} = \Gamma(\nu_{ij}), \\
&\mathcal{M} = 
\begin{pmatrix}
\mathcal{M}_{11} & \cdots & \mathcal{M}_{1p} \\
\vdots & \ddots & \vdots \\
\mathcal{M}_{p1} & \cdots & \mathcal{M}_{pp}
\end{pmatrix}.
\end{aligned}
\]

Let $L$ be the Cholesky factor of $\Psi$, $\Psi = \mathbf{LL}^\top$, and let $\theta$ denote the vector of model parameters. Then the covariance matrix of the observations is
\begin{equation*}
    \Sigma(\theta) = \big([\Gamma A \mathbf{L L}^\top A \Gamma \odot \overline{A} \odot \overline{\Gamma}] 
    \otimes 1_{n\times n} \big) \odot \mathcal{M}.
\end{equation*}

To ensure that the matrix $\Psi=\mathbf{LL}^{\top}$ is positive semidefinite, we parameterize $\mathbf{L}$ 
as follows:
\begin{itemize}
    \item $\mathbf{L}_{ii}=\exp(l_{ii})$.
    \item $\mathbf{L}_{ij}=l_{ij}$ for $i\neq j$.
\end{itemize}

We then define the matrix $l$ formed by $l_{ij}$.

We have:
\begin{equation*}
     \frac{\partial \Sigma}{\partial l_{ij}}(\theta)=\left(\left[\Gamma A \left(
     \frac{\partial L}{\partial l_{ij}} \mathbf{L}^{\top}+\left(\frac{\partial \mathbf{L}}{\partial l_{ij}} 
     \mathbf{L}^{\top}\right)^{\top}\right) A\Gamma \odot \overline{A}\odot 
     \overline{\Gamma} \right] \otimes 1_{n\times n}\right) \odot 
     \mathcal{M}.
\end{equation*}

On the one hand, if $i=j$, we have:
\begin{equation}
    \left(\frac{\partial L}{\partial l_{ij}} \mathbf{L}^{\top}\right)_{k_1k_2}=\left\{ 
    \begin{array}{lcc} l_{ii} \mathbf{L}_{k_1 i} & \text{if} & k_2=i  \\ \\ 0  & 
    \text{otherwise.}  \end{array} \right.
\end{equation}
On the other hand, if $i\neq j$, we have:
\begin{equation}
    \left(\frac{\partial \mathbf{L}}{\partial l_{ij}} \mathbf{L}^{\top}\right)_{k_1k_2}=\left\{ 
    \begin{array}{lcc} \mathbf{L}_{k_1 j} & \text{if} & k_2=i   \\ \\ 0  & 
    \text{otherwise.}  \end{array} \right. 
\end{equation}

Furthermore, note that the function $R(l)=\lambda \sum_{i\neq j}|l_{ij}|$ is not 
differentiable, but it is subdifferentiable. Therefore, we need to perform a proximal 
gradient step using the Soft thresholding operator. In particular, we denote the Soft thresholding operator of 
dimension $p \times p$ as $S_{\lambda}$ as follows: For $i<j$, we have:
\begin{equation}
    (S_{\lambda}(u))_{ij}=\left(\text{prox}_{\lambda R}(l)\right)_{ij} =\left\{ 
    \begin{array}{lcc}
        l_{ij}- \lambda  & \text{if}  & l_{ij} >  \lambda \\
        \\ 0 & \text{if} &   |l_{ij}| \leq  \lambda \\
        \\ l_{ij}+ \lambda  & \text{if}  & l_{ij}  <  \lambda,
    \end{array}
    \right. 
\end{equation}
\noindent additionally, $(S_\lambda(l))_{ii})=l_{ii}$, and for $i>j$, we define 
$(S_\lambda(l))_{ij})=0$.

Now that the Soft thresholding operator is well defined for our problem, and considering 
that we can compute $\frac{\partial\Sigma}{\partial l_{ij}}(\theta)$ and, therefore, 
$\frac{\partial \ell}{\partial l_{ij}}(\theta;Z_n)$ for all $i,j$, we can calculate the 
proximal gradient. We then denote:
\begin{equation*}
    \left(\frac{\partial \ell}{\partial l}(\theta;\bold{z}_n)\right)_{ij}=
    \frac{\partial \ell}{\partial l_{ij}}(\theta;\bold{z}_n),
\end{equation*}
and let $l^{(k)}$ represent the $k$-th step of the iteration. Furthermore, assume we 
use a gradient step length $b_{step}$. We can then update the parameters $l^{(k)}$ 
through Algorithm \ref{alg:l}.

\begin{algorithm}
    \caption{Update of $l^{(k)}$}\label{alg:l}
    \begin{algorithmic}[1]
        \State Compute $\dfrac{\partial \ell}{\partial l^{(k)}}(\theta^{(k)};\bold{z}_n)$.
        \State Define $l'=S_{b_{step}\lambda}\left( l^{(k)}+b_{step} \dfrac{\partial \ell}
        {\partial l^{(k)}}(\theta^{(k)};\bold{z}_n)\right)$.
        \State Define $\mathbf{L}'$ as: 
        \begin{equation}
            (\mathbf{L}')_{ij}=\left\{ \begin{array}{lcc}
                e^{((l')_{ii})}  & \text{if}  & i=j \\
                \\(l')_{ij} & \text{otherwise}
            \end{array}
            \right.
        \end{equation}
        \State Define the matrix $D$ such that:
        \begin{equation}
            D_{ij}=\left\{ \begin{array}{lcc}
                \sqrt{\sum_{k=1}^{i} (\mathbf{L}_{ik}')^2}  & \text{if}  & i=j \\
                \\ 0 & \text{otherwise}
            \end{array}
            \right.
        \end{equation}  
        \State Update $\mathbf{L}^{(k)}=D\mathbf{L}'$. 
    \end{algorithmic}
\end{algorithm}

\subsection{Updating $\Delta_B$}
To update $\Delta_B$, we implement a projected gradient step with a line search, i.e., 
given $\Delta_B^{(k)}$, we use Algorithm \ref{alg:delb}.

\begin{algorithm}
    \caption{Update of $\Delta_B^{(k)}$}\label{alg:delb}
    \begin{algorithmic}[1]
        \State Define $\theta^{(act)}=(\mathbf{L}^{(k+1)},\Delta_B^{(k)},\mathbf{R}_B^{(k)})$.
        \State Compute $\ell_{\lambda}(\theta^{(act)})$.
        \State Compute $\dfrac{\partial \ell_{\lambda}}{\partial\Delta_B}(\theta^{(act)})$.
        \For{ $i=1,\dots,M$ }
            \State Define $\Delta_B'=\max\Big(0,\Delta_B^{(k)}+(a_{\Delta_B})^{i}
            \dfrac{\partial \ell_{\lambda}}{\partial \Delta_B}(\Delta_B^{(k)})\Big)$. 
            \State Define $\theta'=(\mathbf{L}^{(k+1)},\Delta_B',\mathbf{R}_B^{(k)})$.
            \State Compute $\ell_{\lambda}(\theta')$.
            \If{$\ell_{\lambda}(\theta') > \ell_{\lambda}(\theta^{(act)})$}
                \State $\Delta_B^{(k+1)}=\Delta_B'$.
                \State Break
            \EndIf
        \EndFor 
    \end{algorithmic}
\end{algorithm}

\subsection{Updating $\mathbf{R}_B$}

Similarly to $\Delta_B$, we implement a projected gradient step with a line search, i.e., 
given $\mathbf{R}_B^{(k)}$, we use Algorithm \ref{alg:R_B}.

\begin{algorithm}
\caption{Update of $\mathbf{R}_B^{(k)}$}\label{alg:R_B}
\begin{algorithmic}[1]
    \State Define $\theta^{(act)}=(\mathbf{L}^{(k+1)},\Delta_B^{(k+1)},\mathbf{R}_B^{(k)})$ and compute 
    $\ell_{\lambda}(\theta^{(act)})$.
    \State Compute $\dfrac{\partial \ell_{\lambda}}{\partial \mathbf{R}_B}(\theta^{(act)})$.
    \For{$i=1,\dots,M$}
        \State Define $\mathbf{R}_B' = \mathbf{R}_B^{(k)} + (b_{\mathbf{R}_B})^{i} \dfrac{\partial \ell_{\lambda}}{\partial R_B} (\theta^{(act)})$.
        \State Project $\mathbf{R}_B'$ to the space of correlation matrices with non-negative entries.
        \State Compute $\ell_{\lambda}(\mathbf{R}_B')$.
        \If{$\ell_{\lambda}(\mathbf{R}_B') > \ell_{\lambda}(\theta^{(act)})$} 
            \State $\mathbf{R}_B^{(k+1)} = \mathbf{R}_B'$.
            \State Break
        \EndIf
    \EndFor 
\end{algorithmic}
\end{algorithm}

\section{Composite Likelihood: Asymptotic Theorems}\label{asintotics theorem}

In this section, we will demonstrate the consistency of the composite likelihood 
estimator for multivariate random fields, that is, the estimator produced by minimizing 
the function $pl_{p}$. The proof is analogous to the proof for composite likelihood 
functions presented in \citet{bevilacqua2016composite} and \citet{bevilacqua2012estimating}. 
Specifically, we assume that $\bold{Z}=\{(Z_1(s), \dots, Z_p(s))^{\top}: s \in \mathcal{D} 
\subsetneq \mathbb{R}^d\}$ where $\mathcal{D}$ is a grid, not necessarily regular, of 
$\mathbb{R}^d$. The results are based on assumptions about the domain $\mathcal{D}$. We 
first assume that $\mathcal{D}$ is infinite in cardinality and locally finite, i.e., 
$|\mathcal{D}|=\infty$ and for every $s \in \mathcal{D}$ and for every $r>0$ we have 
$|\mathcal{B}(s,r) \cup \mathcal{D}|=\mathcal{O}(r^d)$, with $B(s,r)$ being the ball of 
dimension $d$, centered at $s$ with radius $r$.

We will use results on least squares estimators presented in \citet{gaetan2010spatial}. 
Specifically, we assume the asymptotic behavior of increasing domains, meaning that the 
number of observations grows as the size of the domain increases. This behavior is 
represented through an increasing sequence of sets $\mathcal{R}_{N} \subset \mathcal{D}$, 
with $\left|\mathcal{R}_{N}\right| \rightarrow \infty$ as $N \rightarrow \infty$.

If we consider the weights $w_{kl}\in \{0, 1\}$, it is not necessary to include all 
possible pairs of spatial locations. In fact, we define the following finite set 
$\mathbb{M}=\left\{\boldsymbol{h}_{1}, \ldots, \boldsymbol{h}_{q}\right\}$, with 
$\boldsymbol{h}_{j} \in \mathbb{R}^{d}$. The set $\mathbb{M}$ does not contain the 
origin and determines the pairs of geographic locations involved in the estimation 
equation. The set $\mathbb{M}$ represents the neighborhood of each spatial location. 
Additionally, we will denote $\stackrel{\circ}{\mathcal{R}}_{N}=\left\{s \in 
\mathcal{R}_{N}: s+h \in \mathcal{R}_{N}, \text{ for every } \boldsymbol{h} \in 
\mathbb{M}\right\}$, the interior of $\mathcal{R}_{N}$. Note that $\left|\mathfrak{R}_{N}
\right| / \left|\mathcal{R}_{N}\right| \rightarrow 1$ as $N \rightarrow \infty$. 
Furthermore, we will denote $\boldsymbol{X}_{s(s+h)}=\left(\bold{Z}(s), \bold{Z}(s+h)
\right)^{\top}$, to which we associate the covariance matrix $Q(h ;\theta)$. We then 
consider the objective function:

\begin{equation}
    \mathrm{CL}(\theta) = -\frac{1}{2} \sum_{s \in \mathfrak{\mathcal{R}}_{N}}\left[
    \sum_{h \in \mathbb{M}} \log \left|Q(h; \theta)\right|+
    \boldsymbol{X}_{s(s+h)}^{\top} Q^{-1}(h ; \theta) \boldsymbol{X}_{s(s+h)}\right]. 
    \label{CL} 
\end{equation}

\subsection{Consistency}

To demonstrate the consistency and asymptotic normality of the composite likelihood 
estimator, we write the composite likelihood \eqref{CL} as a contrast process:
\begin{equation}
    \tilde{U}_{N}(\theta) = -\dfrac{2}{\left|\stackrel{\circ}{\mathcal{R}}_{N}\right|} 
    \mathrm{CL}(\theta),
\end{equation}
with $\text{CL}(\cdot)$ defined in equation \eqref{CL}. Therefore, the estimator 
$\text{CL}$ is defined as the value $\widehat{\theta}_{N}$, such that $\widehat{\theta}_{N}
=\argmin_{\theta \in \Theta} \tilde{U}_{N}(\theta)$.

\textbf{Proposition:} Assume that:
\begin{itemize}
\item[C1] $\bold{Z}$ is a second-order Gaussian, stationary, and ergodic random field.
\item[C2] $\Theta$ is a compact subset of $\mathbb{R}^{k}$ and the true value $\theta_{0}$ 
belongs to the interior of $\Theta$.
\item[C3] For every $h \in \mathbb{M}$, the mappings $\theta \mapsto C_{ij}(h, \theta)$ 
and $\theta \mapsto C_{ij}(0,\theta)$ have continuous partial derivatives with 
$\inf_{\theta \in \Theta} \left|Q(h ; \theta)\right| > 0$.
\item[C4] The function
\begin{equation}
    \theta \mapsto K\left(\theta, \theta_{0}\right) = \sum_{h \in \mathcal{M}}\left\{
    \log \left|Q(h;{\theta})\right| + \text{tr}\left(Q\left(h;\theta_{0}\right) 
    Q^{-1}(h;\theta)\right)\right\} \label{Kfuncion}
\end{equation}
has a unique global minimum over $\Theta$ at $\theta_{0}$.
\end{itemize}
Then, the estimator $\widehat{\theta}_{N} \to \theta_{0}$ as $N \to \infty$, in 
probability.

\textbf{Remark:} C4 is an identifiability condition. Under Gaussianity, for every 
$\boldsymbol{h} \in \mathbb{M}$, the function $\theta \mapsto \log \left|Q(h ; \theta)
\right| + \text{tr}\left(Q\left(h; \theta_{0}\right) Q^{-1}(h; \theta)\right)$ has a 
global minimum at $\theta_{0}$ due to the Kullback-Leibler inequality, but in the 
multidimensional case $(k>1)$, we may have that $\theta_{0}$ is not necessarily the 
only minimum.

\begin{proof}
Let us denote the empirical covariance matrix by 
\[
\widehat{Q}(h) = \left|\stackrel{\circ}{\mathcal{R}}_{N}\right|^{-1} 
\sum_{s \in \stackrel{\circ}{\mathcal{R}}_{N}} X_{s(s+h)} X_{s(s+h)}^{T},
\] 
with components $\widehat{C}_{ij}(\cdot)$, $i,j=1,\dots,p$. Then, the contrast process can be written as
\begin{equation*}
\tilde{U}_N(\boldsymbol{\theta}) = \sum_{h \in \mathcal{M}} \log \left| Q(h;\theta) \right| 
+ \operatorname{tr}\left( \widehat{Q}(h) Q^{-1}(h;\theta) \right).
\end{equation*}

Following the Theorem 3.4.1 in \citet{guyon1995random} the consistency of the $CL$ estimator can be established by showing the following properties:

\begin{enumerate}
    \item By conditions (C2) and (C3), the mapping $\theta \mapsto Q(h;\theta)$ is continuous with continuous derivative. Since $U_N(\theta)$ and $K(\theta,\theta_0)$ (defined in Equation \eqref{Kfuncion}) are compositions of continuous functions, they are themselves continuous.
    
    \item The weak law of large numbers together with condition C1 implies that, as $N \to \infty$, 
    \[
    \widehat{Q}(h) \rightarrow Q(h;\theta_0) \quad \text{in probability}.
    \]
    Therefore, $\tilde{U}_N$ converges in probability to $K(\theta,\theta_0)$ (defined in \eqref{Kfuncion}), which has a unique minimum at $\theta_0$ by condition C4.
    
    \item Define the modulus of continuity of $\tilde{U}_N(\theta)$ as
    \begin{equation}
        W_N(\eta) = \sup_{\|\alpha-\beta\|\leq \eta} \left| \tilde{U}_N(\alpha) - \tilde{U}_N(\beta) \right|, \quad \alpha, \beta \in \Theta.
    \end{equation}
    Then we have
    \begin{equation}
        \left| \tilde{U}_N(\alpha) - \tilde{U}_N(\beta) \right| 
        \leq \sum_{h \in \mathcal{M}} 
        \left| \log \frac{|Q(h;\alpha)|}{|Q(h;\beta)|} \right|
        + \left| \operatorname{tr} \left( \widehat{Q}(h) \left[ Q^{-1}(h;\alpha) - Q^{-1}(h;\beta) \right] \right) \right|.
    \end{equation}
    By conditions (C2) and (C3), the functions $\theta \mapsto Q(h;\theta)$ are Lipschitz; then, by composition of Lipschitz functions, we have
    \begin{equation}
        \sum_{h \in \mathcal{M}} \left| \log \frac{|Q(h;\alpha)|}{|Q(h;\beta)|} \right| 
        \leq \left[ L_1 + \sum_{h \in \mathcal{M}} L_2 \right] \| \alpha - \beta \|.
    \end{equation}
    Moreover, by the Cauchy-Schwarz inequality and Lipschitz continuity, we have
    \begin{align}
        \sum_{h \in \mathcal{M}} \left| \operatorname{tr} \left( \widehat{Q}(h) \left[ Q^{-1}(h;\alpha) - Q^{-1}(h;\beta) \right] \right) \right|
        &\leq \sum_{h \in \mathcal{M}} \left| \operatorname{tr}(\widehat{Q}(h)) \right| \left| Q^{-1}(h;\alpha) - Q^{-1}(h;\beta) \right| \\
        &\leq \sum_{h \in \mathcal{M}} \left| \operatorname{tr}(\widehat{Q}(h)) \right| L_3 \| \alpha - \beta \|.
    \end{align}
    Noting that $\sum_{h \in \mathcal{M}} \left| \operatorname{tr}(\widehat{Q}(h)) \right| \leq L_4 \sum_{i=1}^p \widehat{C}_{i}(0)$, combining with the previous results we obtain constants $L_1,L_2,L_5$ such that
    \begin{equation}
        \left| \tilde{U}_N(\alpha) - \tilde{U}_N(\beta) \right| 
        \leq \left[ L_1 + \sum_{h \in \mathcal{M}} L_2 + L_5 \sum_{i=1}^p \widehat{C}_i(0) \right] \| \alpha - \beta \|. \label{Lipch}
    \end{equation}
    Define the random variable
    \begin{equation}
        M_N = L_1 + \sum_{h \in \mathcal{M}} L_2 + L_5 \sum_{i=1}^p \widehat{C}_i(0),
    \end{equation}
    which satisfies $E[M_N] < \infty$ by condition (C1). Furthermore, equation \eqref{Lipch} guarantees the existence of a sequence $\epsilon_k > 0$ with $\epsilon_k \to 0$ such that
    \begin{equation}
        \lim_{l \to \infty} P\left( W_N(1/l) \geq \epsilon_l \right) = 0,
    \end{equation}
    analogous to the univariate case (see details in \cite{bevilacqua2016composite}).
\end{enumerate}
\end{proof}

\subsection{Asymptotic Normality}

\textbf{Proposition:} Assume that conditions (C1)-(C4) hold. Additionally, suppose that:
\begin{itemize}
    \item[AN.1] $Z$ is a multivariate $\alpha$-mixing random field with mixing coefficient 
    $\alpha(\cdot)=\alpha_{\infty, \infty}(\cdot)$ such that:
    \begin{itemize}
        \item[AN.1-1] There exists $\delta>0$ such that:
        \begin{equation*}
            \sum_{s_{m}, s_{n} \in \stackrel{\circ}{\mathcal{R}}_{N}} \alpha\left(
            \text{dist}\left(s_{m}, s_{n}\right)\right)^{\delta /(2+\delta)}=\mathcal{O}
            \left(\left| \mathscr{\mathcal{R} }_{N} \right|\right).
        \end{equation*}
        \item[AN.1-2] $\sum_{l \geq 0} l^{d-1} \alpha(l)<\infty $.
    \end{itemize}
    \item[AN.2] For all $i, j=1, \ldots, p$, and $h \in \mathbb{M}_{i j}$, the mappings 
    $\theta \to C_{i j}(h,\theta)$ and $\theta \to C_{i j}({0}, {\theta})$ have 
    second-order partial derivatives that are continuous with respect to $\theta$.
    \item[AN.3] There exist $i, j$ and $h \in \mathbb{M}_{i j}$ such that the matrix of 
    order $(p^2 \times k)$ whose $l$-th column is given by $vec\left[\partial \Sigma_{i j}
    ({h} ; {\theta}) / \partial \theta_{l}\right]$, for $l=1, \ldots, k$, has rank $k$, 
    where $vec[\cdot]$ is the vectorization of a matrix that converts the matrix into a 
    column vector. Furthermore, suppose there exists a positive semidefinite matrix $J$ 
    such that for large $N$, we have:
    \begin{equation}
        \left|\stackrel{\circ}{\mathcal{R}}_{N}\right|E\left[\nabla \tilde{Q}_{N}(\theta) 
        \nabla \tilde{Q}_{N}(\theta)\right]\geq J,
    \end{equation}
    where if $A$ and $B$ are two symmetric matrices, $A \geq B$ means that $A-B$ is a 
    positive semidefinite matrix.
    
    Then,
    \begin{equation*}
        \left|\stackrel{\circ}{\mathcal{R}}_{N}\right|^{1 / 2} J_{N}^{-1 / 2}\left(
        \theta_{0}\right) H_{N}\left(\theta_{0}\right)\left(\widehat{\theta}_{N}-\theta_{
        \mathbf{0}}\right) \rightarrow \mathcal{N}\left(0, I_{p}\right),
    \end{equation*}
    in distribution, where $J_{N}(\theta):=\left|\stackrel{\circ}{\mathcal{R}}_{N}\right| 
    E\left[\nabla \tilde{Q}_{N}(\theta) \nabla \tilde{Q}_{N}(\theta)^{\top}\right]$ and 
    $\mathcal{H}_{N}(\theta):=\mathbb{E}\left[\nabla^{2} \tilde{Q}_{N}(\theta)\right]$.
\end{itemize}

\begin{proof}
The asymptotic normality is obtained by verifying the general conditions (AN.1-AN.2-AN.3) 
of Theorem 3.4.5 in \citet{guyon1995random}:

\begin{itemize}
    \item[N.1]  By condition AN.2 and Equation \eqref{Lipch}, there exist three positive constants $L_{1}, L_{2}, L_{3}$ such that
        \begin{equation*}
        \sup_{\theta \in \Theta} \left|\frac{\partial^{2} \tilde{Q}_{N}(\theta)}{\partial \theta_{l} \partial \theta_{r}}\right| \leq L_{1} + \sum_{h \in \mathcal{M}} L_{2} + L_{5} \sum_{i=1}^{p} \widehat{C}_{i}(0),
        \end{equation*}
    where the right-hand side is a real-valued random variable, since we work under second-order properties.
    
    \item[N.2] Notice that by AN.1 and the Central Limit Theorem for weakly dependent random variables \citep{spodarev2013stochastic}, we have
    \[
        |\stackrel{\circ}{\mathcal{R}}_{N}|^{1/2} J_{N}^{-1/2} \nabla \tilde{Q}_{N}(\boldsymbol{\theta}_0) \rightarrow N(0, I_p)
    \]
    in distribution as $N \to \infty$.
    
    \item[N.3] Consider the $(l,r)$ elements of the matrix $H_{N}(\theta_0) = E\left[\nabla^2 \tilde{Q}_N(\theta_0)\right]$:
    \begin{align*}
        H_{N, lr}(\theta_0) &= E\left[\frac{\partial^{2} \tilde{Q}_{N}(\boldsymbol{\theta}_0)}{\partial \theta_l \partial \theta_r}\right] \\
        &= \sum_{h \in \mathcal{M}} \operatorname{tr}\left(\Sigma_{ij}^{-1}(h;\theta_0) \frac{\partial \Sigma_{ij}(h;\theta_0)}{\partial \theta_l} \Sigma_{ij}^{-1}(h;\theta_0) \frac{\partial \Sigma_{ij}(h;\theta_0)}{\partial \theta_r}\right) \\
        &= \sum_{h \in \mathcal{M}} \operatorname{tr}\left(A_{ij}^{(l)}(h;\theta_0) A_{ij}^{(r)}(h;\theta_0)\right),
    \end{align*}
    where $A_{ij}^{(l)}(h;\theta_0) = B_{ij}(h;\theta_0) \frac{\partial \Sigma_{ij}(h;\theta_0)}{\partial \theta_l} B_{ij}(h;\theta_0)$ is a symmetric matrix for $l = 1, \dots, p$, and $B_{ij}(h;\theta_0)$ is a positive definite matrix such that $B_{ij}(h;\theta_0)^2 = \Sigma_{ij}^{-1}(h;\theta_0)$.  
    
    Using the identity $\operatorname{tr}(X^\top Y) = \operatorname{vec}(X)^\top \operatorname{vec}(Y)$, we can write
    \begin{equation*}
        \mathcal{H}_{N, lr}(\theta_0) = \sum_{h \in \mathcal{M}} \operatorname{vec}[A_{ij}^{(l)}(h;\theta_0)]^\top \operatorname{vec}[A_{ij}^{(r)}(h;\theta_0)],
    \end{equation*}
    where $\operatorname{vec}[A_{ij}^{(r)}(h;\theta_0)] \in \mathbb{R}^{4p^2}$. Moreover, using $\operatorname{vec}(ABC) = (C^\top \otimes A) \operatorname{vec}(B)$, we have
    \begin{equation*}
        H_{N, lr}(\theta_0) = \sum_{h \in \mathcal{M}} \operatorname{vec}\left[\frac{\partial \Sigma_{ij}(h;\theta_0)}{\partial \theta_l}\right]^\top \left(B_{ij}(h;\theta_0) \otimes B_{ij}(h;\theta_0)\right)^2 \operatorname{vec}\left[\frac{\partial \Sigma_{ij}(h;\theta_0)}{\partial \theta_r}\right].
    \end{equation*}
    Hence, $H_N(\theta_0)$ can be written as
    \begin{equation*}
        \mathcal{H}_N(\theta_0) = \sum_{h \in \mathcal{M}} V_{ij}^\top(h;\theta_0) U_{ij}(h;\theta_0) V_{ij}(h;\theta_0),
    \end{equation*}
    where $U(h;\theta_0) := \left(B_{ij}(h;\theta_0) \otimes B_{ij}(h;\theta_0)\right)^2$ is a positive definite matrix of size $4p^2 \times 4p^2$, and $V^\top(h;\theta_0)$ is a matrix of size $q \times 4p^2$, with row $i$ given by $\operatorname{vec}\left[\frac{\partial \Sigma(h;\theta_0)}{\partial \theta_i}\right]^\top$, for $i = 1, \dots, q$.
    
    Therefore, for any $x \in \mathbb{R}^q$ with $x \neq \mathbf{0}$:
    \begin{align*}
        x^\top H_N(\theta_0) x &= \sum_{h \in \mathcal{M}} x^\top V^\top(h;\theta_0) U(h;\theta_0) V(h;\theta_0) x \\
        &= \sum_{h \in \mathcal{M}} \left[V(h;\theta_0) x\right]^\top U(h;\theta_0) \left[V(h;\theta_0) x\right].
    \end{align*}
\end{itemize}

Since $U(h;\theta_0)$ is positive definite, it follows that $x^\top H_N(\theta_0) x > 0$ if and only if $V(h;\theta_0)x \neq 0$ for all $x \neq 0$, which holds by assumption AN.3. Therefore, by the weak law of large numbers, $\nabla \tilde{Q}_N - H_N(\theta_0)$ converges in probability to $0$.
\end{proof}

\section{Calculation of Matrices $J$ and $H$}\label{Eq:H and J definition}

\begin{equation}
    H(\theta)=E\left[\nabla^2 pl_{p}\right]  \quad \quad   \text{and} \quad \quad  
    J(\theta)=E\left[\nabla pl_{p}(\theta)\nabla pl_{p}(\theta)^{\top}\right],\quad \quad  
    \theta \in \Theta \subset \mathbb{R}^{q}.
\end{equation}

\begin{equation}
    \nabla cl_{kl}^{p}= \left[  -\frac{1}{2}\left(\operatorname{tr}\left(Q_{kl} 
    \frac{\partial Q_{kl}}{\partial \theta_{h}}\right)+(\bold{Z}(s_k),\bold{Z}(s_l))^{
    \top}Q_{kl}^{-1}\frac{\partial Q_{kl}}{\partial \theta_{h}}Q_{kl}^{-1}(\bold{Z}(s_k),
    \bold{Z}(s_l))\right)\right]_{h=1}^{q}.
\end{equation}

\begin{equation}
    H(\theta)=\left[\frac{1}{2}\sum_{k=1}^{n-1}\sum_{l=k+1}^{n} \operatorname{tr}\left(
    Q_{kl}^{-1}\frac{\partial Q_{kl}}{\partial \theta_{h_1}} Q_{kl}^{-1}\frac{\partial 
    Q_{kl}}{\partial \theta_{h_2}}\right)\right]_{h_1,h_2=1}^{q}
\end{equation}

\begin{align}
    J(\theta)&=\left[\frac{1}{4} \sum_{k=1}^{n-1}\sum_{l=k+1}^{n} \sum_{k'=1}^{n-1}
    \sum_{l'=k'+1}^{n} \left[ \operatorname{tr}\left(\left[ Q_{kl}-
    Q_{kl}^{-1}\right]\frac{\partial Q_{kl}}{\partial \theta_{h_1}} \right)  
    \operatorname{tr}\left(\left[Q_{k'l'}-Q_{k'l'}^{-1}\right]
    \frac{\partial Q_{k'l'}}{\partial \theta_{h_2}} \right) \right. \right. \\ 
    & + 2 \operatorname{tr}\left( \left[Q_{kl}^{-1}\frac{\partial Q_{kl}}{\partial 
    \theta_{h_1}}Q_{kl}^{-1} \right] \operatorname{cov}\left((\bold{Z}(s_k),\bold{Z}
    (s_l)),(\bold{Z}(s_{k'}),\bold{Z}(s_{l'}))\right) \right. \\ 
    & \left. \left. \left[Q_{k'l'}^{-1} \frac{\partial Q_{k'l'}}{\partial \theta_{h_2}} 
    Q_{k'l'}^{-1}  \right] \operatorname{cov}\left((\bold{Z}(s_k),\bold{Z}(s_l)),
    (\bold{Z}(s_{k'}),\bold{Z}(s_{l'}))\right) \right)  \right]_{h_1,h_2=1}^{q}.
\end{align}

\section{Data analysis}\label{appendix:data_analysis}

\begin{table}[H]
    \centering
    \begin{tabular}{@{}lllllll@{}}
        \toprule
        & Variable & Symbol & Variable & Symbol & Variable & Symbol \\ 
        \midrule
        Major elements & Iron & Fe & Magnesium & Mg & Sodium & Na \\
        & Calcium & Ca & Titanium & Ti & Potassium & K \\
        Trace elements & Phosphorus & P & Aluminum & Al & Sulfur & S \\
        & Molybdenum & Mo & Uranium & U & Barium & Ba \\
        & Copper & Cu & Thorium & Th & Boron & B \\
        & Lead & Pb & Strontium & Sr & Tungsten & W \\
        & Zinc & Zn & Cadmium & Cd & Scandium & Sc \\
        & Silver & Ag & Antimony & Sb & Thallium & Tl \\
        & Nickel & Ni & Bismuth & Bi & Mercury & Hg \\
        & Cobalt & Co & Vanadium & V & Selenium & Se \\
        & Manganese & Mn & Lanthanum & La & Tellurium & Te \\
        & Arsenic & As & Chromium & Cr & Gallium & Ga \\
        \bottomrule
    \end{tabular}
    \caption{Assayed elements with their chemical symbols.}
    \label{cuadrito}
\end{table}

\begin{table}[H]
    \centering
    \begin{tabular}{|l|lll|l|lll|}
        \hline
        Variables & $\hat{\tau}^2$ & $\hat{\sigma}^2$ & $\hat{\tau}^2/(\hat{\tau}^2+\hat{\sigma}^2)$ & Variables 
        & $\hat{\tau}^2$ & $\hat{\sigma}^2$ & $\hat{\tau}^2/(\hat{\tau}^2+\hat{\sigma}^2)$ \\
        \hline
        Fe & 0.730 & 0.381 & 65.7\% & U & 0.293 & 0.747 & 28.2\% \\
        Ca & 0.588 & 0.494 & 54.3\% & Th & 0.445 & 0.667 & 40.0\% \\
        P & 0.579 & 0.428 & 57.5\% & Sr & 0.460 & 0.554 & 45.3\% \\
        Mg & 0.451 & 0.801 & 36.0\% & Cd & 0.705 & 0.379 & 65.0\% \\
        Ti & 0.483 & 0.604 & 44.4\% & Sb & 0.162 & 0.924 & 14.9\% \\
        Al & 0.376 & 0.670 & 35.9\% & Bi & 0.369 & 0.710 & 34.2\% \\
        Na & 0.366 & 0.425 & 46.2\% & V & 0.518 & 0.549 & 48.6\% \\
        K & 0.195 & 0.887 & 18.0\% & La & 0.320 & 0.813 & 28.3\% \\
        S & 0.721 & 0.428 & 62.7\% & Cr & 0.639 & 0.378 & 62.8\% \\
        Mo & 0.219 & 0.883 & 19.9\% & Ba & 0.447 & 0.720 & 38.3\% \\
        Cu & 0.071 & 1.055 & 6.3\% & B & 1.096 & 0.122 & 90.0\% \\
        Pb & 0.279 & 0.739 & 27.4\% & W & 0.263 & 1.201 & 18.0\% \\
        Zn & 0.519 & 0.690 & 42.9\% & Sc & 0.013 & 1.016 & 1.3\% \\
        Ag & 0.463 & 0.623 & 42.6\% & Tl & 0.671 & 0.467 & 59.0\% \\
        Ni & 0.618 & 0.454 & 57.7\% & Hg & 0.350 & 0.718 & 32.8\% \\
        Co & 0.165 & 1.020 & 13.9\% & Se & 0.241 & 0.846 & 22.2\% \\
        Mn & 0.590 & 0.542 & 52.2\% & Te & 0.279 & 0.828 & 25.2\% \\
        As & 0.335 & 0.721 & 31.7\% & Ga & 0.570 & 0.517 & 52.4\% \\
        \hline
    \end{tabular}
    \caption{Estimated nugget ($\hat{\tau}^2$), partial sill ($\hat{\sigma}^2$), and nugget-to-sill ratio for each variable obtained from the marginal random field estimate.}
    \label{Tab:Varianza_y_Nugget}
\end{table}

\subsection{Exploratory data analysis}

Figure \ref{visualizacion_variables} shows the spatial distribution of the four 
variables of interest after normal score transformation.

\begin{figure}[H]
    \centering
    \begin{minipage}{.5\textwidth}
        \centering
        \includegraphics[scale=0.25]{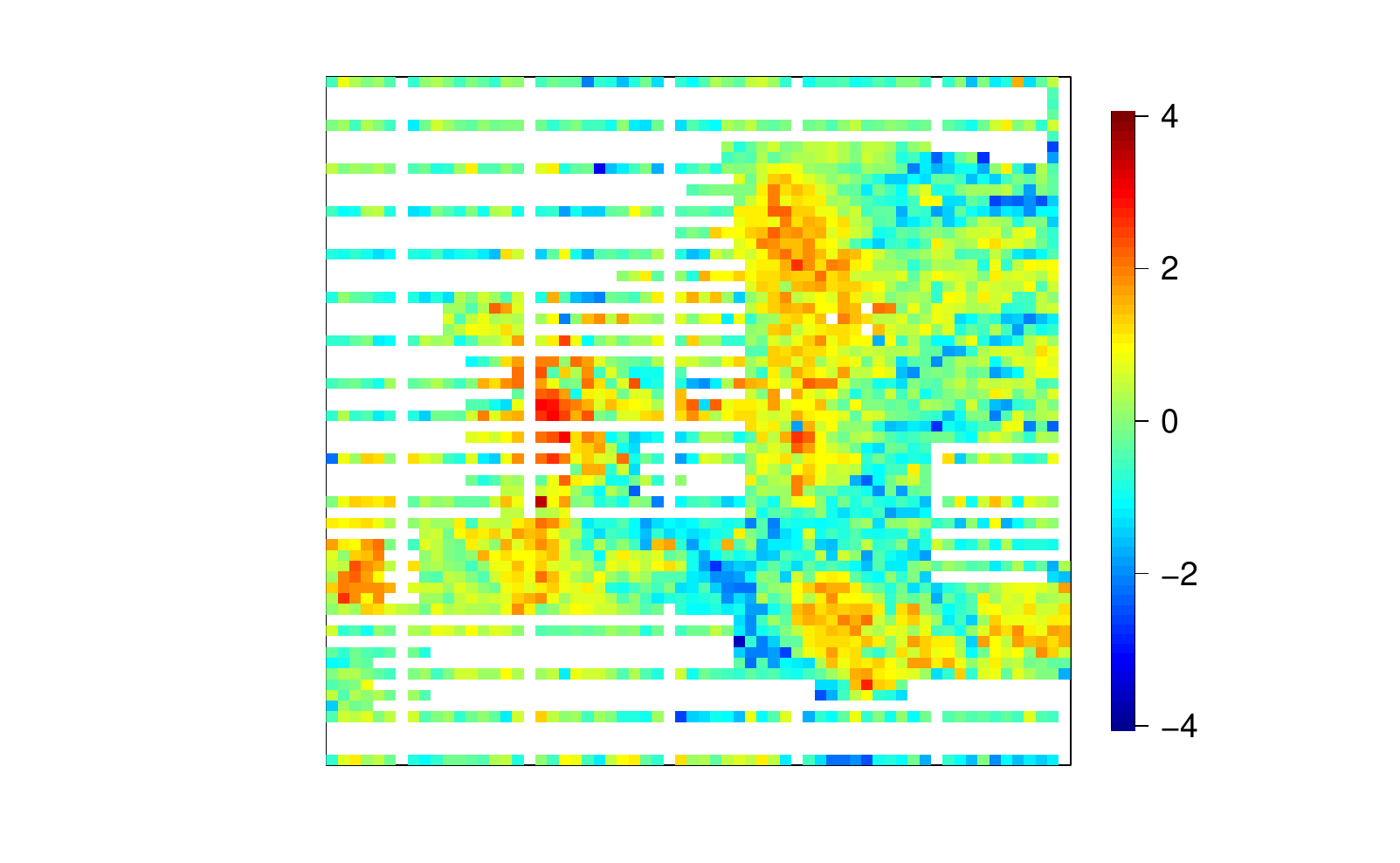}
        (a) Cu
    \end{minipage}%
    \begin{minipage}{.5\textwidth}
        \centering
        \includegraphics[scale=0.25]{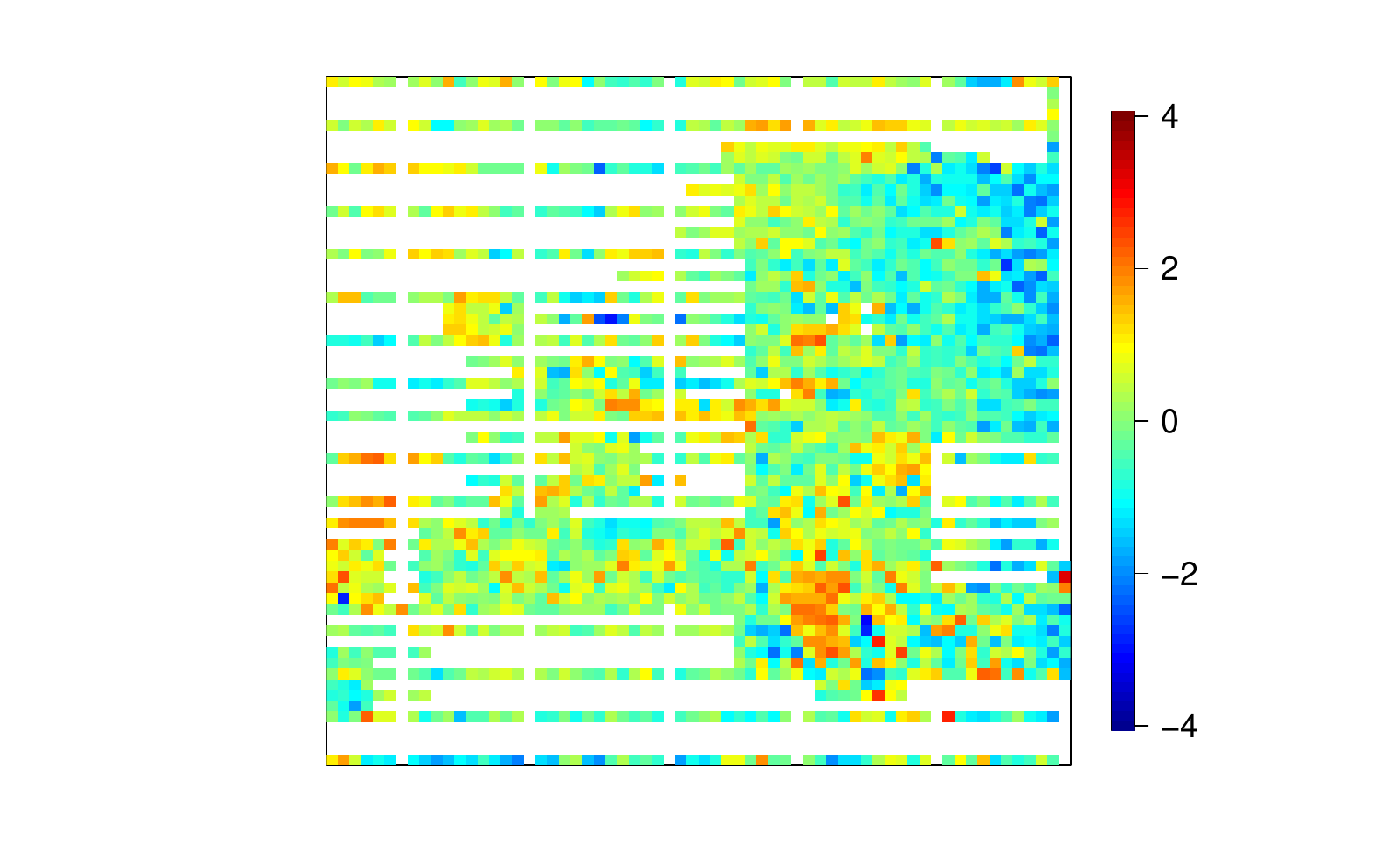}
        (b) Fe
    \end{minipage}
    
    \begin{minipage}{.5\textwidth}
        \centering
        \includegraphics[scale=0.25]{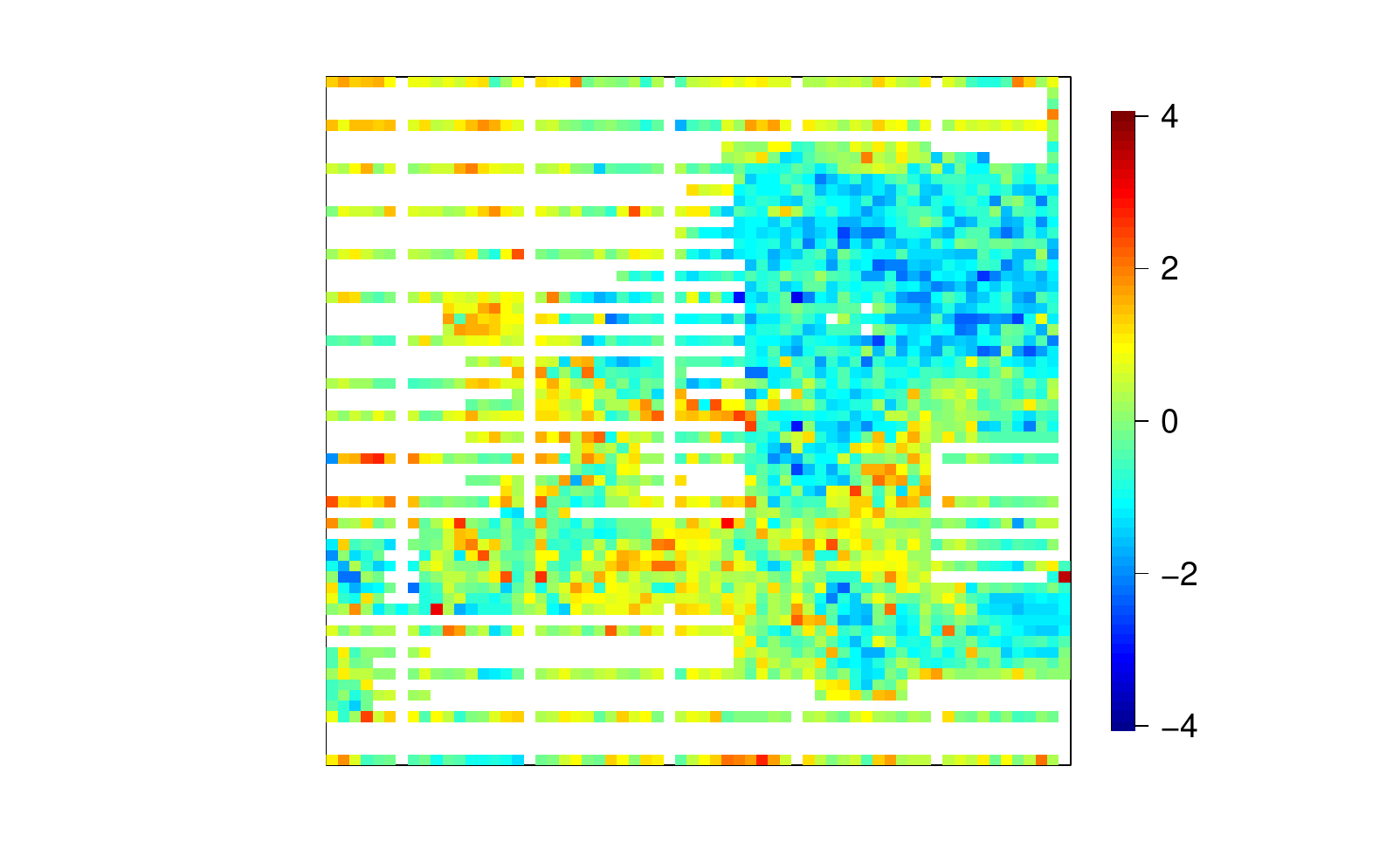}
        (c) Co
    \end{minipage}%
    \begin{minipage}{.5\textwidth}
        \centering
        \includegraphics[scale=0.25]{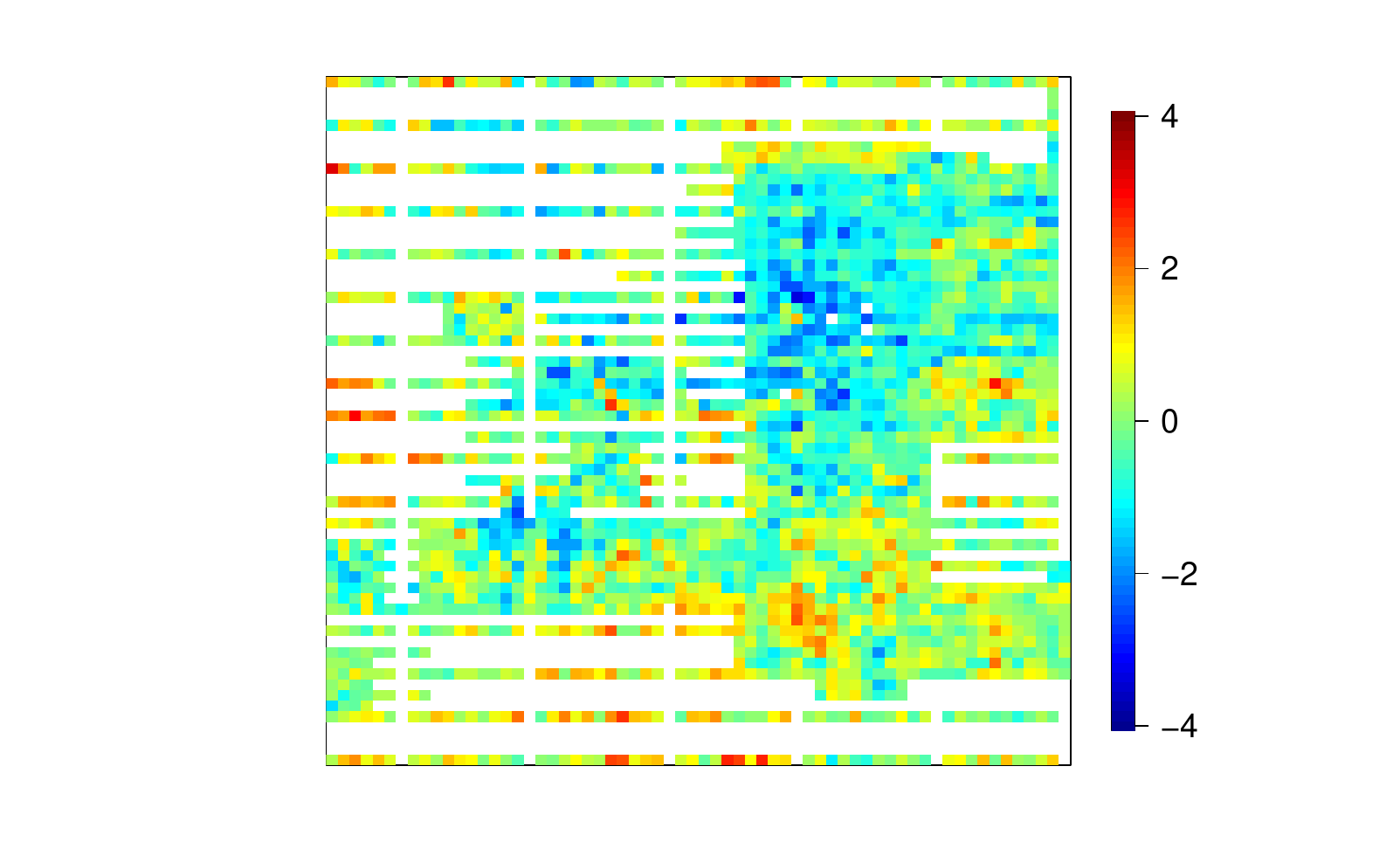}
        (d) Al
    \end{minipage}
    
    \caption{Spatial distribution of the variables of interest after normal score 
    transformation.}\label{visualizacion_variables}
\end{figure}

Figure \ref{visualizacion_variogramas} shows the empirical cross-variograms of the 
variables of interest with the other variables. Irregular behavior can be observed 
in several cross-variograms, with some taking both positive and negative values and 
others remaining close to zero, suggesting that some variables are spatially 
uncorrelated with one another.

\begin{figure}[H]
    \centering
    \begin{minipage}{.3\textwidth}
        \centering
        \includegraphics[scale=0.23]{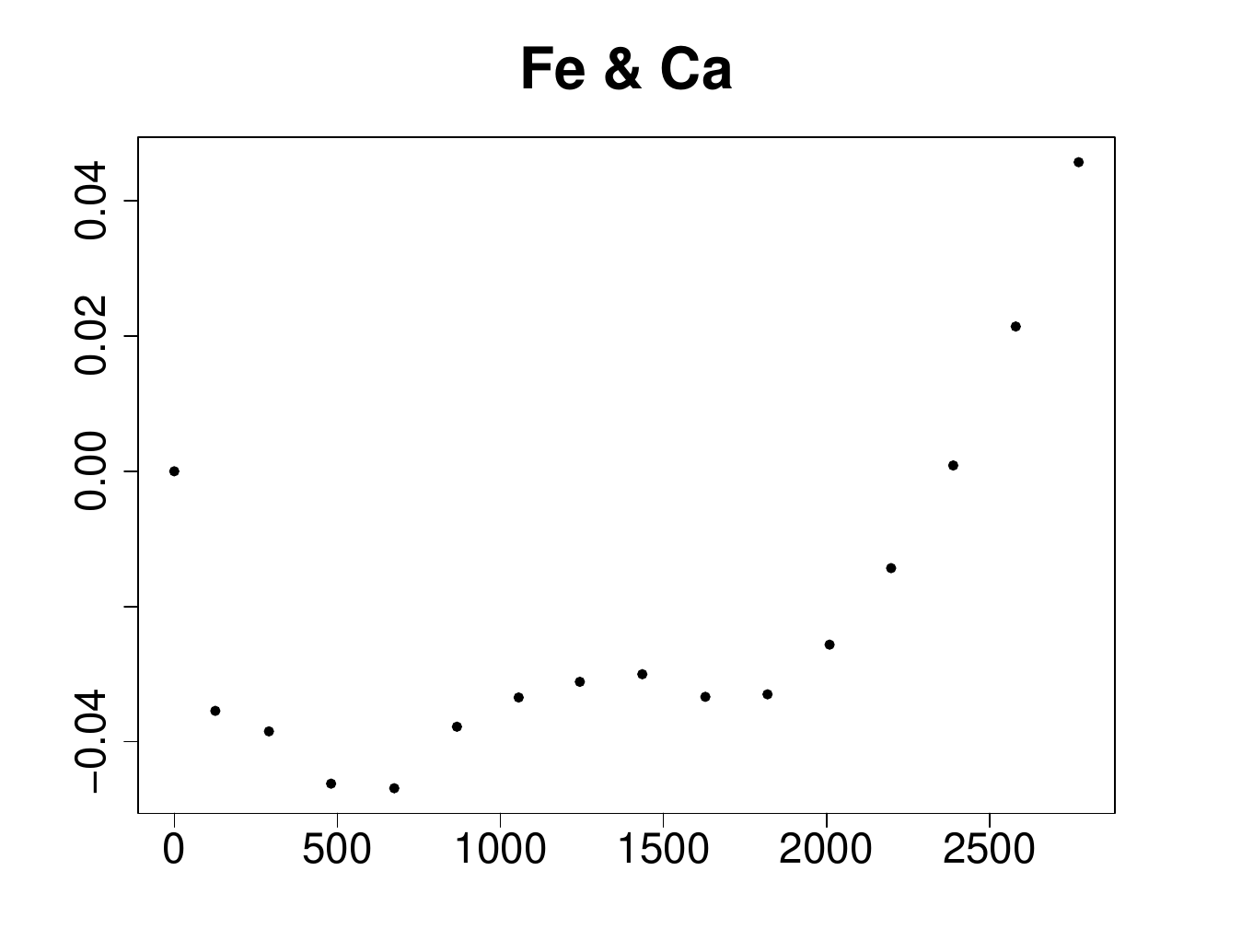}
    \end{minipage}%
    \begin{minipage}{.3\textwidth}
        \centering
        \includegraphics[scale=0.23]{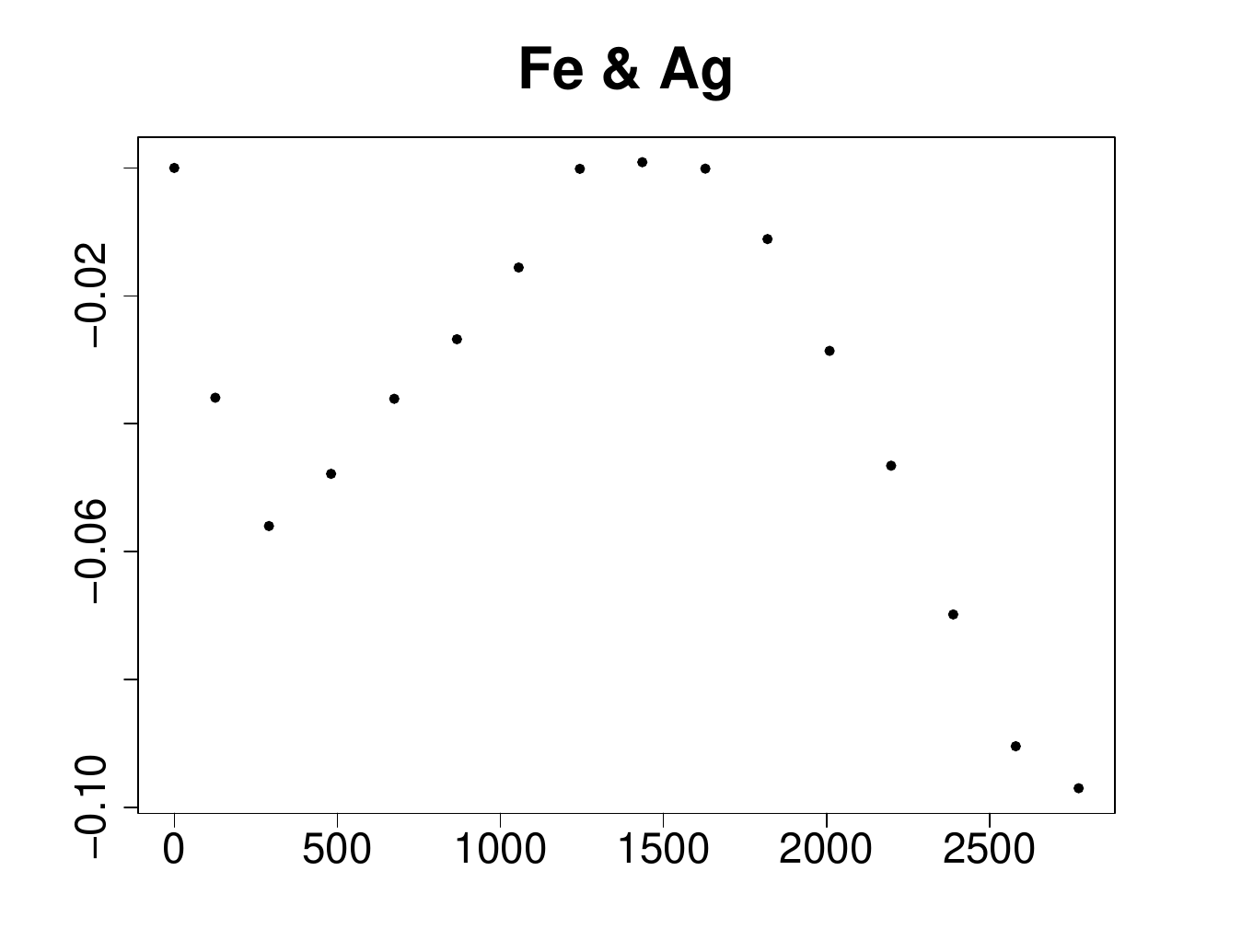}
    \end{minipage}%
    \begin{minipage}{.3\textwidth}
        \centering
        \includegraphics[scale=0.23]{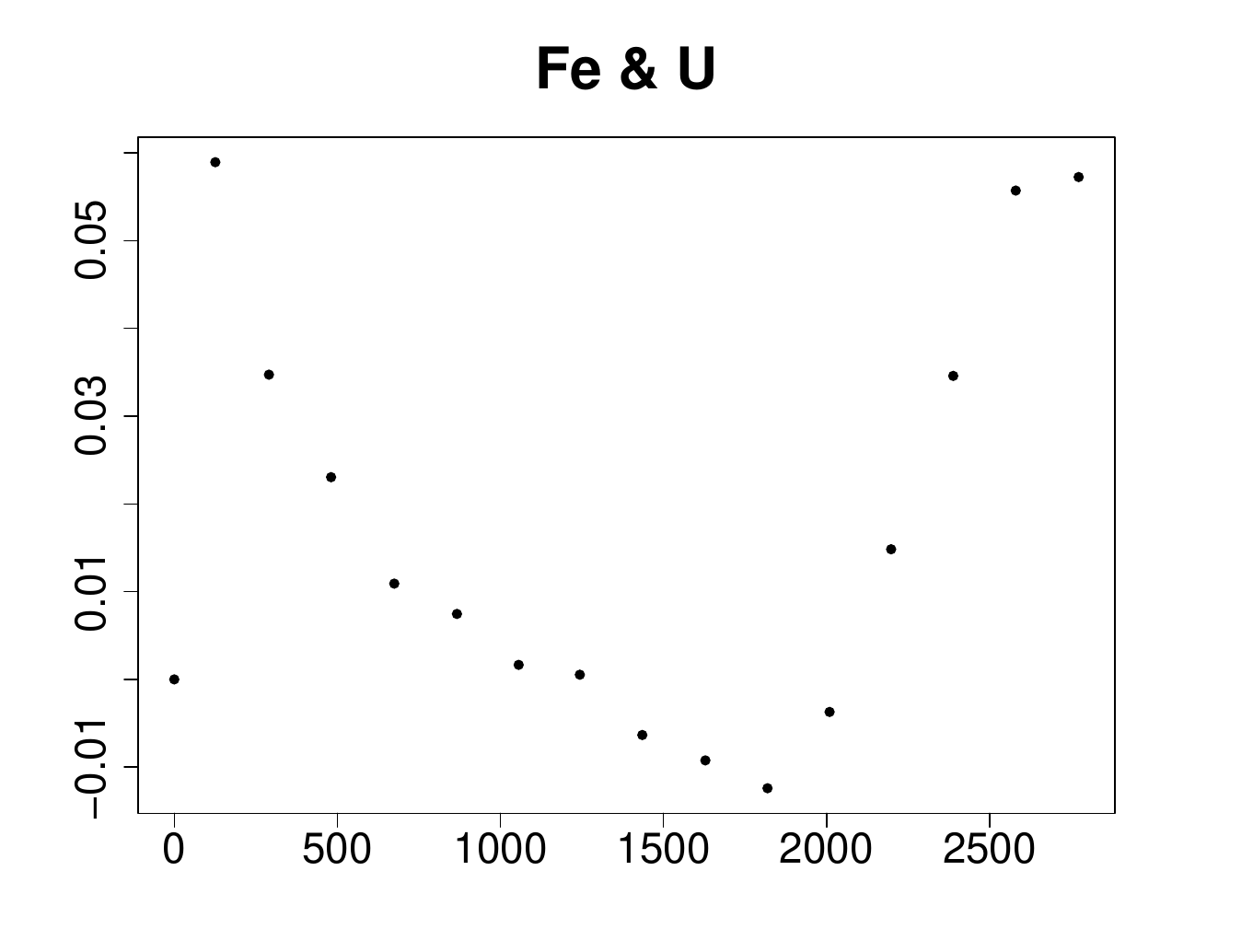}
    \end{minipage}
    
    \begin{minipage}{.3\textwidth}
        \centering
        \includegraphics[scale=0.23]{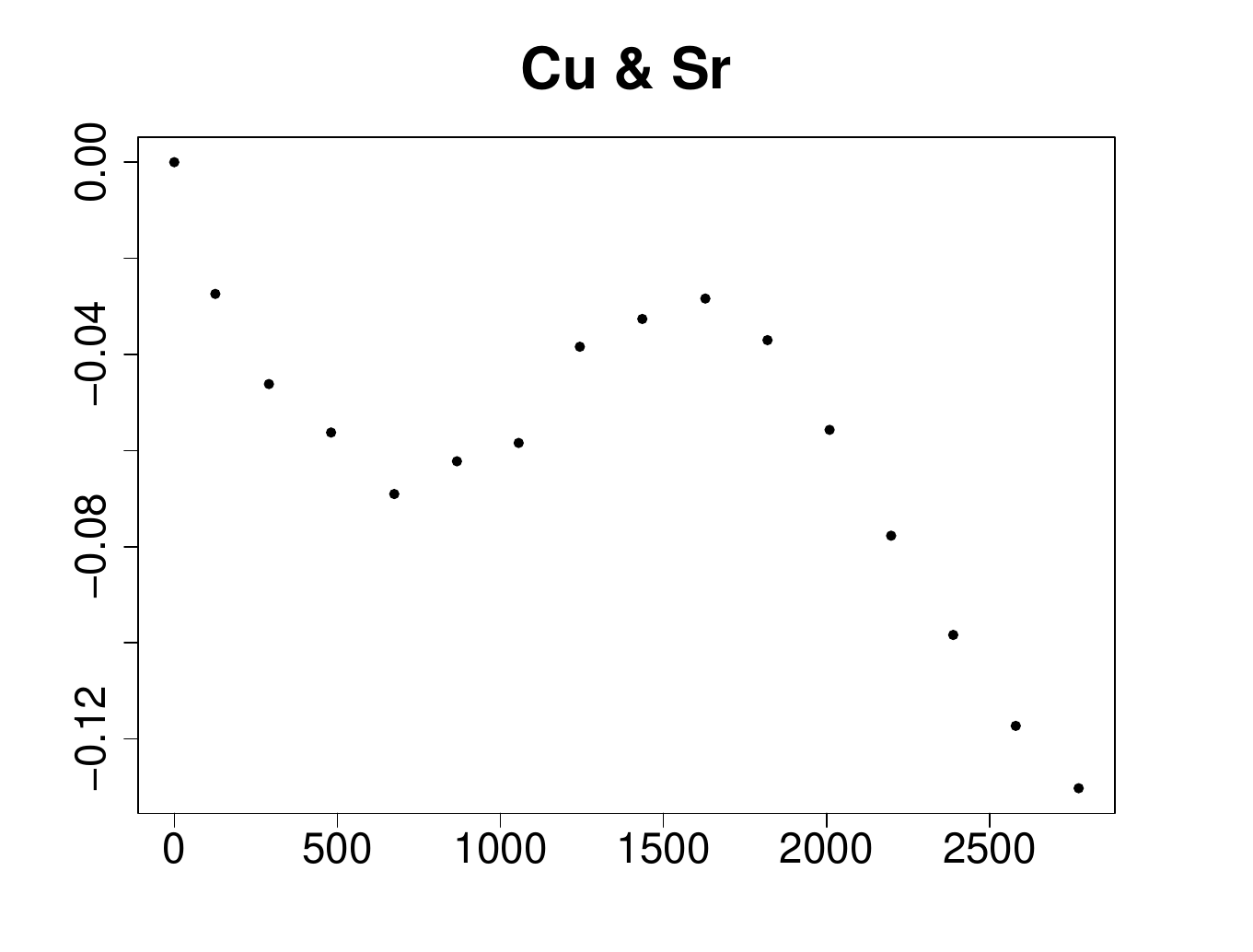}
    \end{minipage}%
    \begin{minipage}{.3\textwidth}
        \centering
        \includegraphics[scale=0.23]{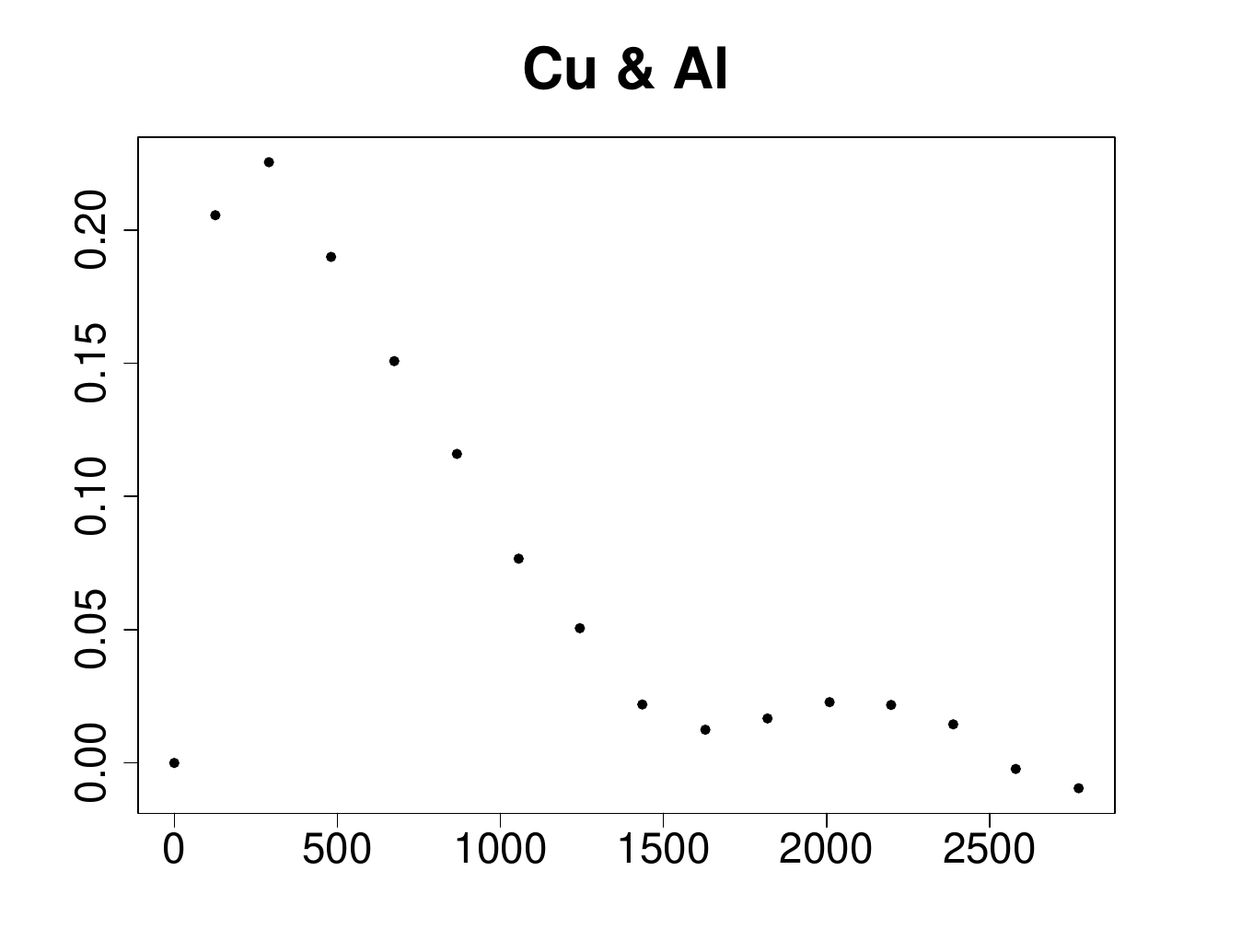}
    \end{minipage}%
    \begin{minipage}{.3\textwidth}
        \centering
        \includegraphics[scale=0.23]{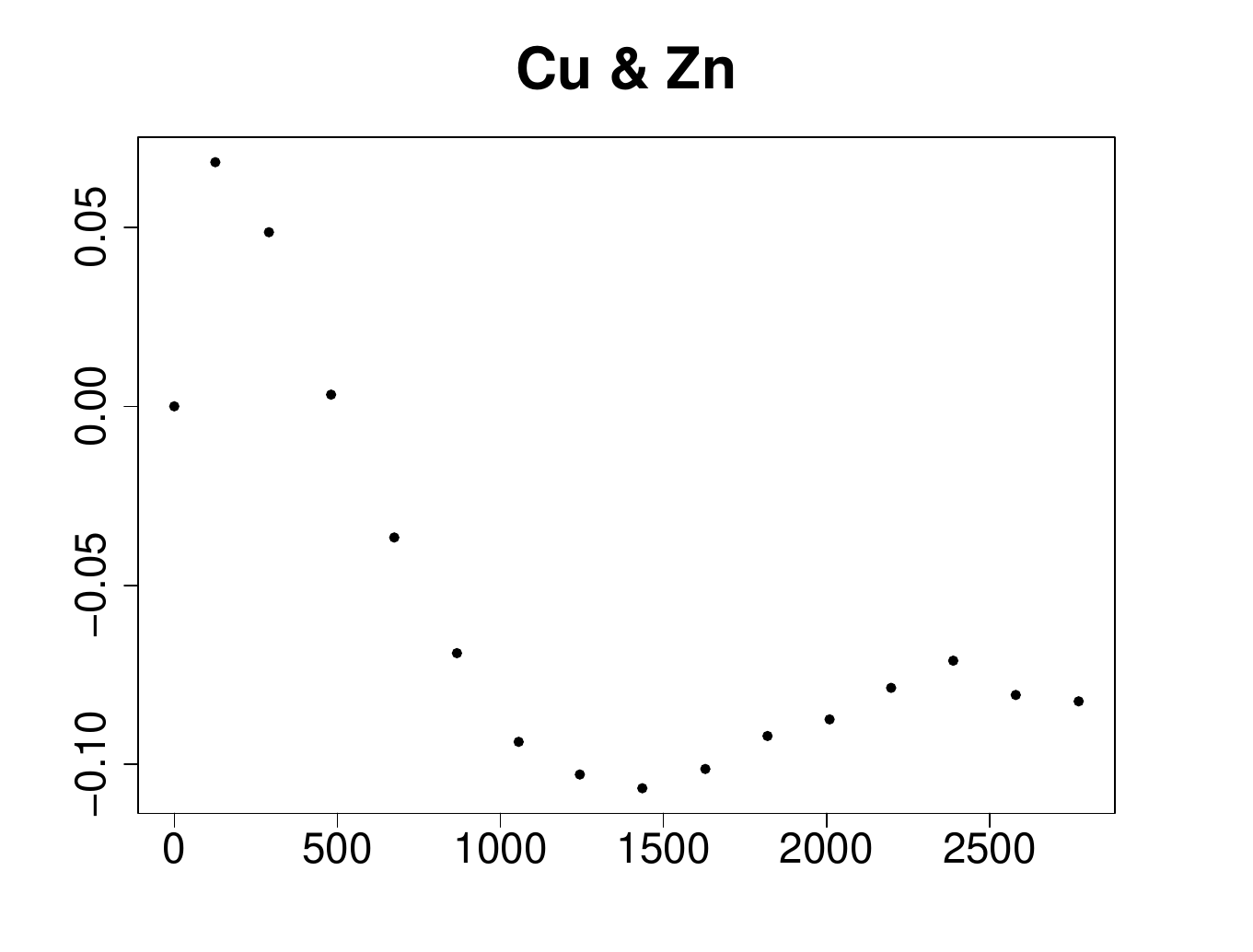}
    \end{minipage}
    
    \begin{minipage}{.3\textwidth}
        \centering
        \includegraphics[scale=0.23]{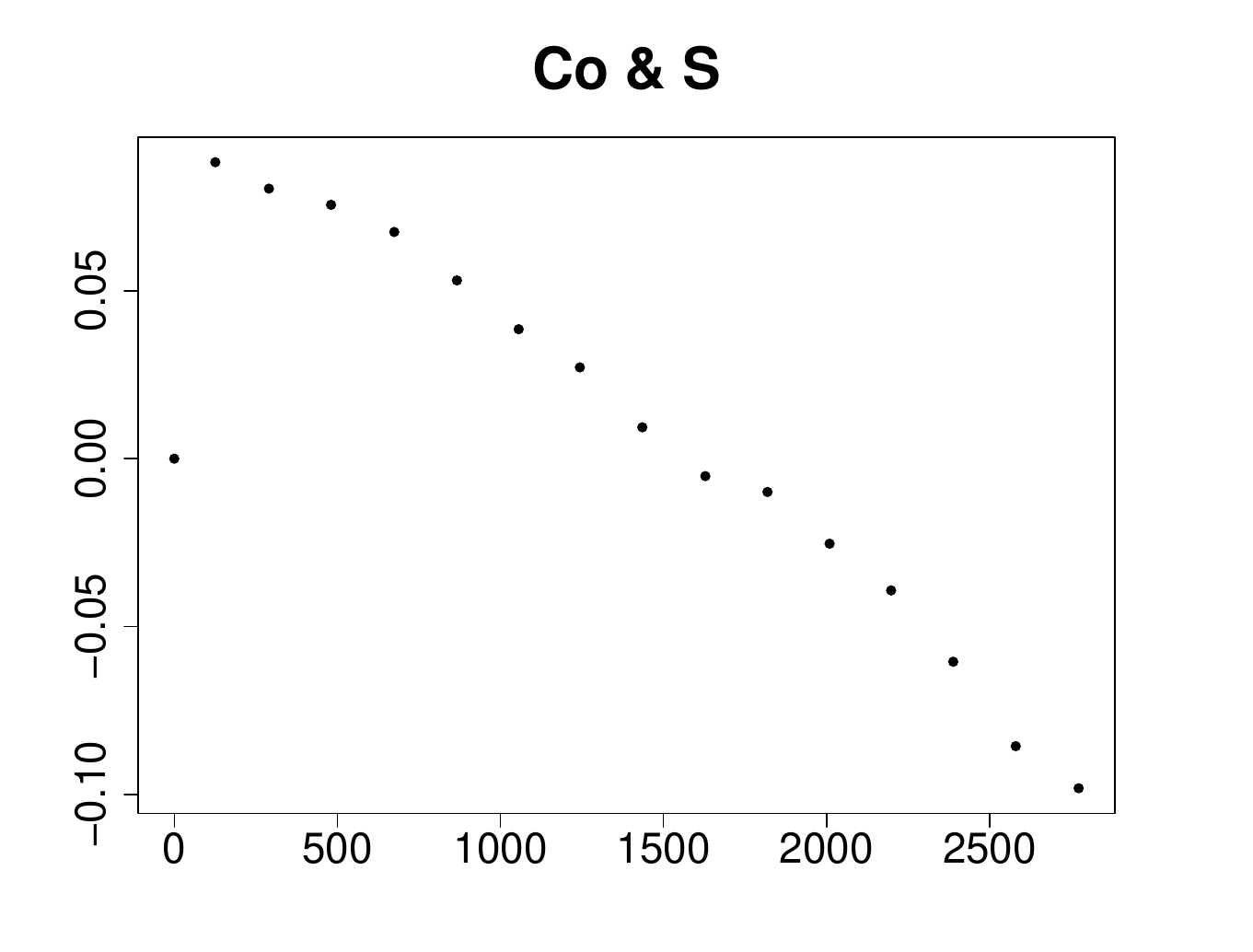}
    \end{minipage}%
    \begin{minipage}{.3\textwidth}
        \centering
        \includegraphics[scale=0.23]{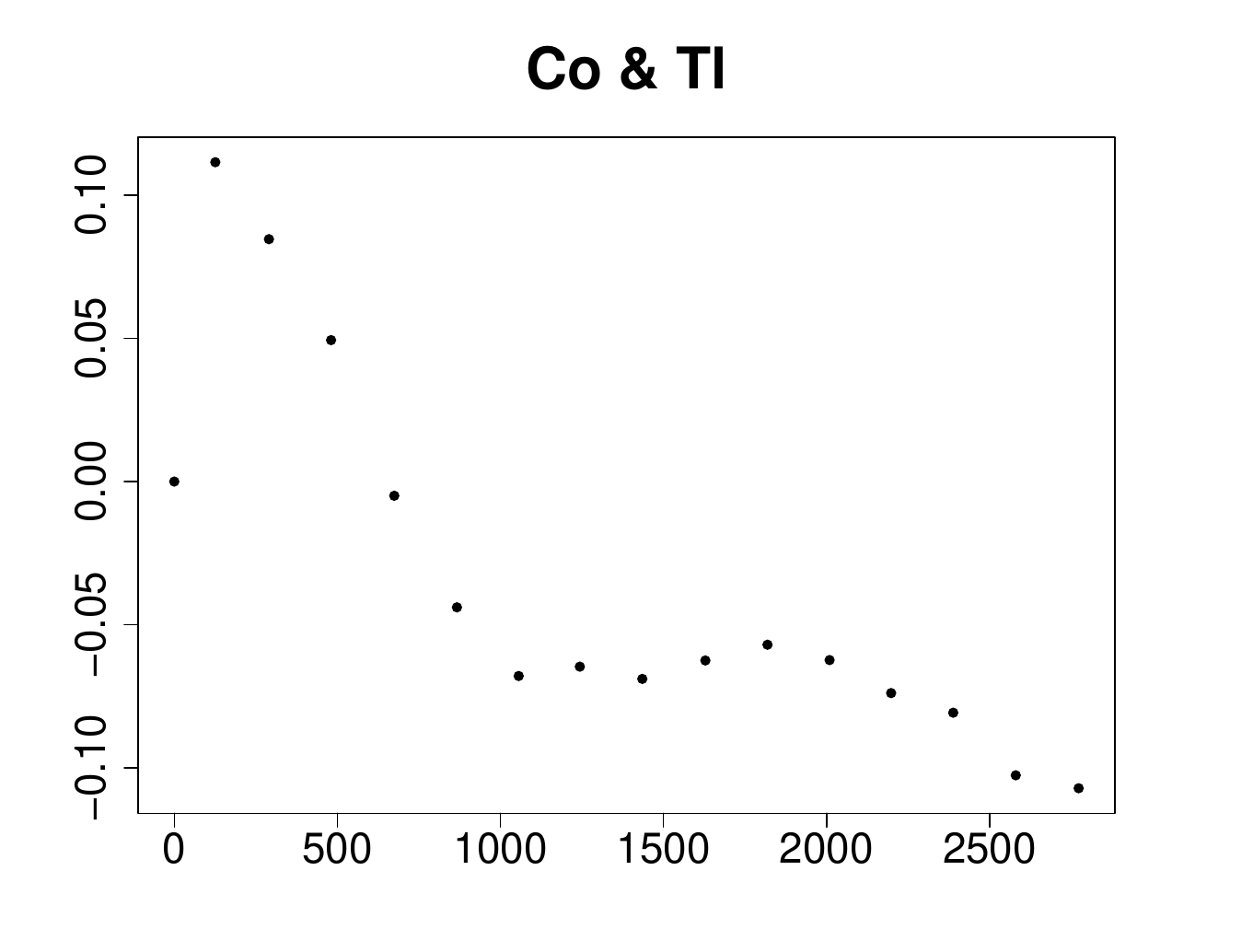}
    \end{minipage}%
    \begin{minipage}{.3\textwidth}
        \centering
        \includegraphics[scale=0.23]{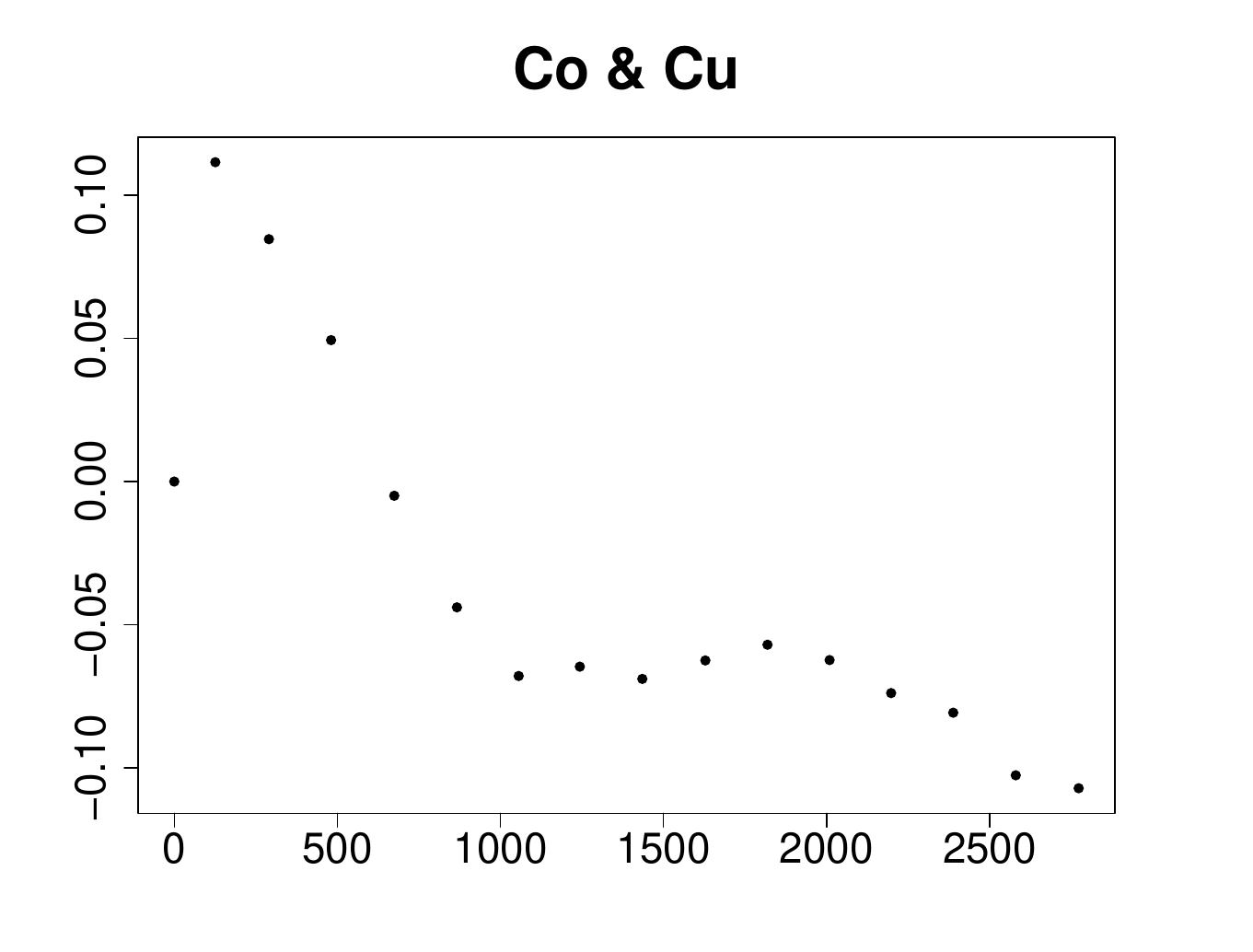}
    \end{minipage}
    
    \begin{minipage}{.3\textwidth}
        \centering
        \includegraphics[scale=0.23]{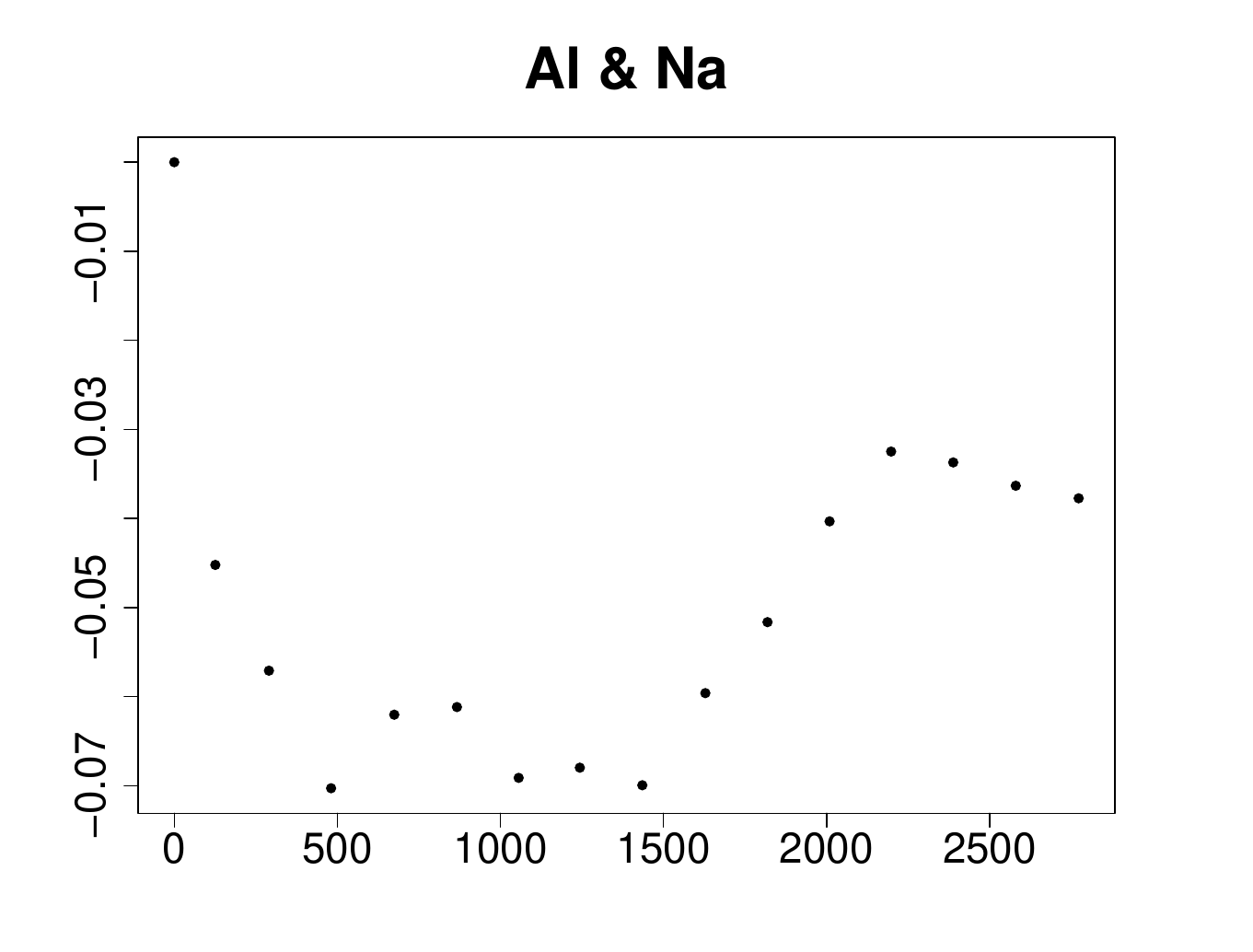}
    \end{minipage}%
    \begin{minipage}{.3\textwidth}
        \centering
        \includegraphics[scale=0.23]{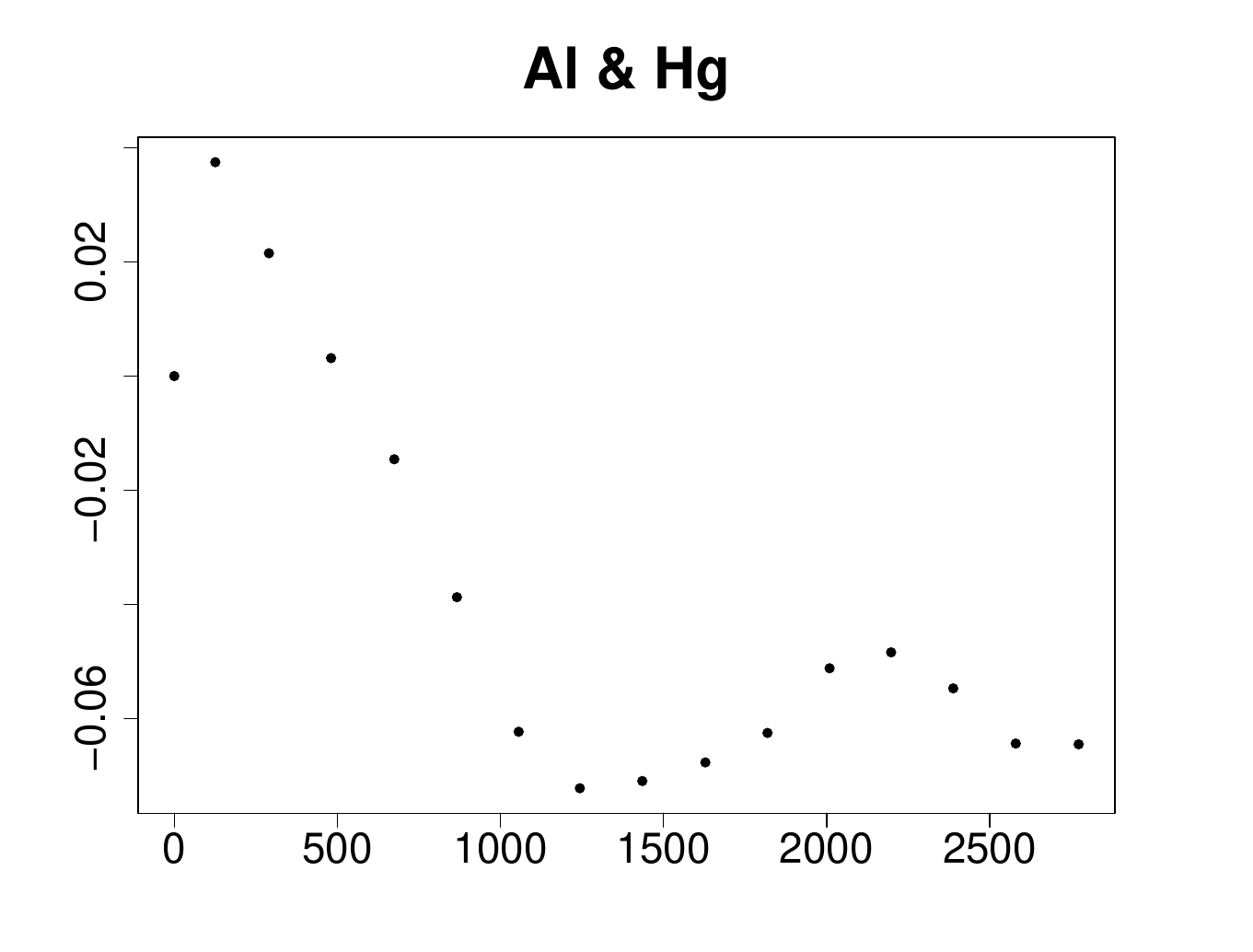}
    \end{minipage}%
    \begin{minipage}{.3\textwidth}
        \centering
        \includegraphics[scale=0.23]{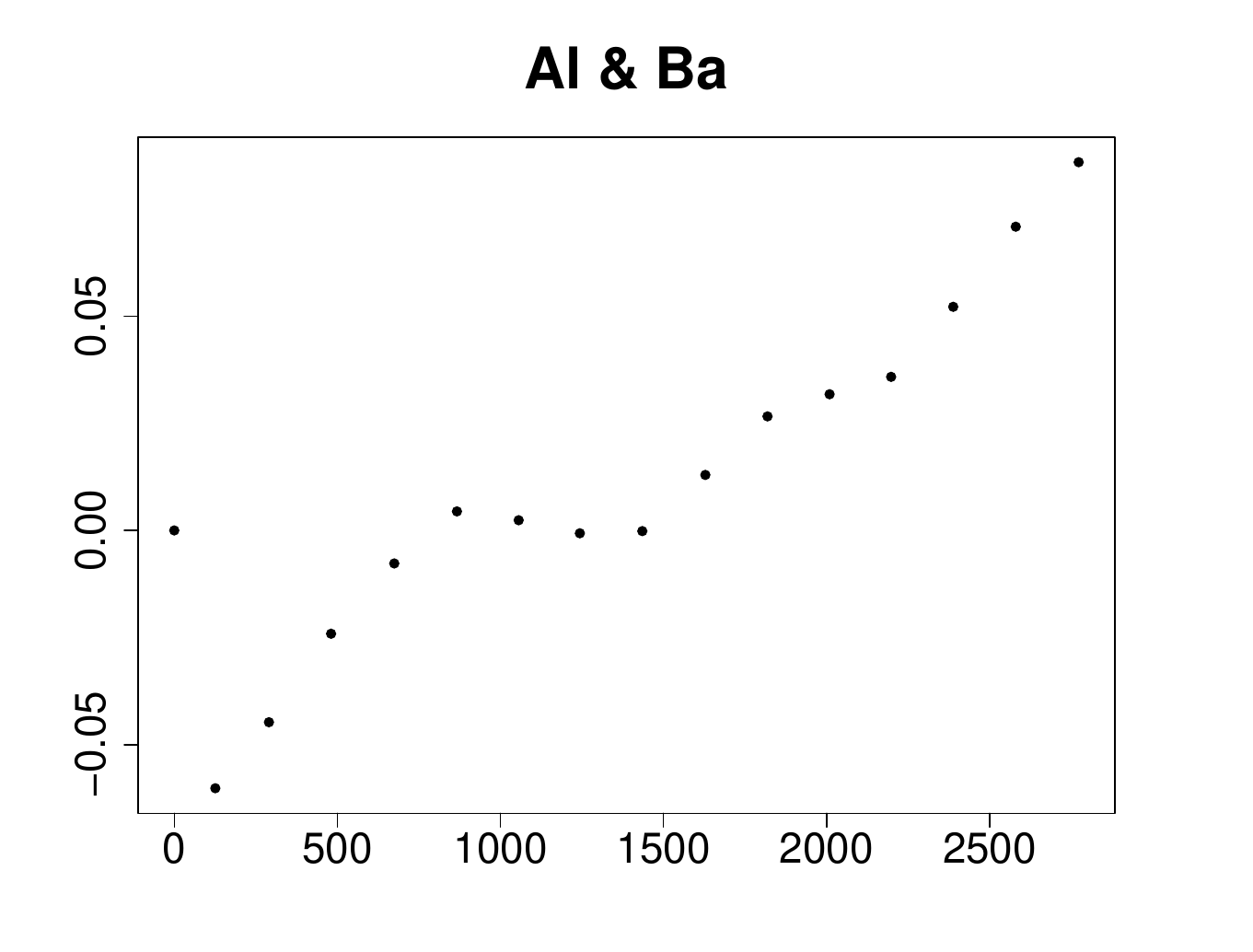}
    \end{minipage}
    
    \caption{Empirical cross-variograms of the variables of interest with the other 
    variables.}\label{visualizacion_variogramas}
\end{figure}

\begin{figure}[H]
    \centering
    \begin{minipage}{.40\textwidth}
        \centering
        \includegraphics[width=0.9\linewidth]{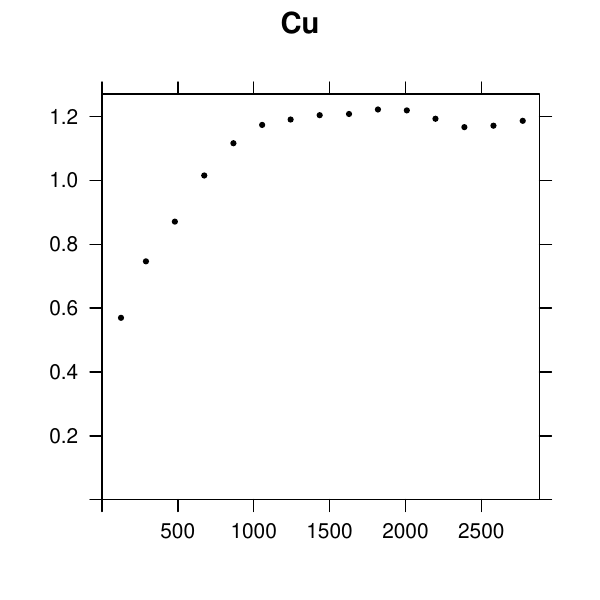}
    \end{minipage}
    \begin{minipage}{.40\textwidth}
        \centering
        \includegraphics[width=0.9\linewidth]{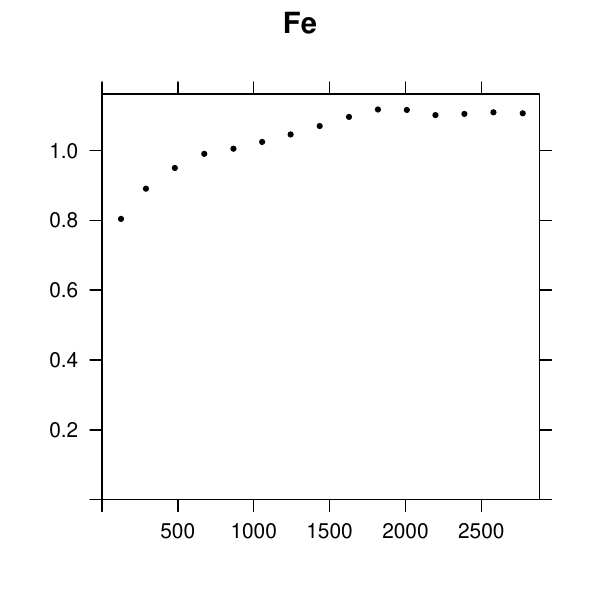}
    \end{minipage}

    \vspace{0.3cm}

    \begin{minipage}{.40\textwidth}
        \centering
        \includegraphics[width=0.9\linewidth]{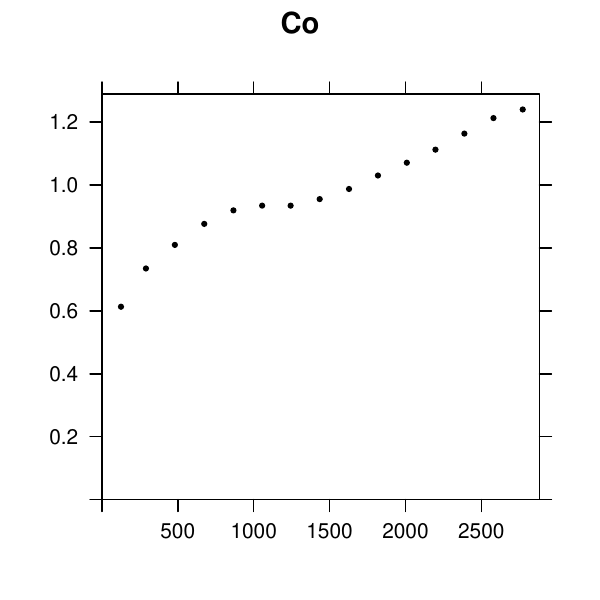}
    \end{minipage}
    \begin{minipage}{.40\textwidth}
        \centering
        \includegraphics[width=0.9\linewidth]{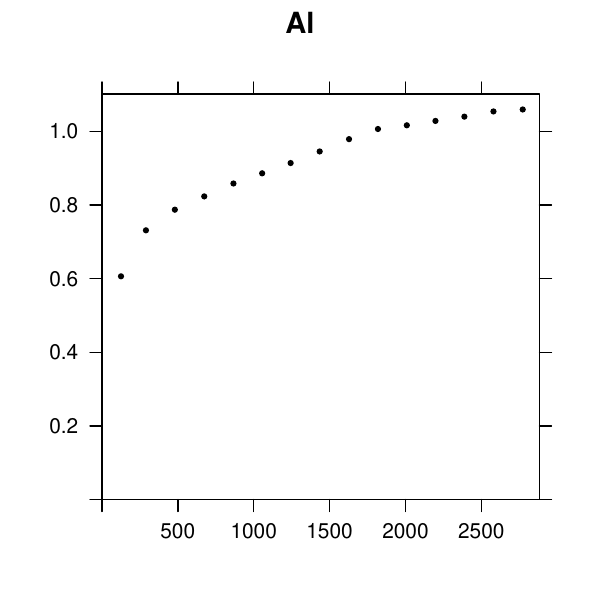}
    \end{minipage}

    \caption{Sample variograms of the variables of interest.}
    \label{variograms_interest}
\end{figure}

\subsection{Cross-validation results}

Table \ref{tab:rmse_vario} presents the RMSE, cokriging error standard deviations, 
and computation times for copper (Cu) across a subset of $\lambda$ values, illustrating 
the trade-off between model sparsity and prediction accuracy.

\begin{table}[H]
    \raggedleft
    \begin{tabular}{|c|c|c|c|c|c|c|c|c|c|}
        \hline
        $\lambda$ & 2.245 & 2.106 & 2.037 & 1.967 & 1.898 & 1.829 & 1.759 
        & 1.689 & 1.620 \\
        \hline
        \textbf{N° Variables} & 1  & 1 & 4 & 4 & 5 & 6 & 9 & 12 & 13 \\
        \hline
        \textbf{RMSE} & 0.7345  & 0.7345 & 0.7357 & 0.7357 & 0.7340 & 0.7331 
        & 0.7311 & 0.7306 & 0.7309 \\
        \hline
        \textbf{\makecell{Standard \\ Deviation}} & 0.1418  & 0.1418 & 0.1414 
        & 0.1414 & 0.1410 & 0.1407 & 0.1403 & 0.1401 & 0.1396 \\
        \hline
        \textbf{Time (min)} & 6.9  & 7.4 & 36.9 & 36.8 & 50.5 & 67.0 & 138.1 
        & 255.8 & 302.7 \\
        \hline
    \end{tabular}
    \caption{RMSE, standard deviation, and computation time for cokriging predictions 
    of Copper (Cu) across a subset of penalty values $\lambda$.}
    \label{tab:rmse_vario}
\end{table}

\end{document}